\newcommand{\mypara}[1]{\smallskip\noindent\textbf{#1.}}
\def \cA {\mathcal A}
\def \Pr {\mathbb P}
\def \N {\mathbb N}
\newcommand{\Order}{\mathrm{O}}
\newcommand{\Exp}{\mathbb E}
\DeclareMathOperator*{\avg}{avg}
\DeclareMathOperator*{\argmin}{arg\,min}
\theoremstyle{definition}
\newtheorem{observation}{Observation}
\newtheorem{definition}{Definition}
\newtheorem{theorem}{Theorem}
\newtheorem{corollary}{Corollary}
\newtheorem{lemma}{Lemma}
\title{Preemptive Online Partitioning of Sequences \footnote{C.~Konrad is supported by 
the Centre for Discrete Mathematics and its Applications (DIMAP) at Warwick University and by EPSRC award EP/N011163/1.}}
\author{Christian Konrad}
\author{Tigran Tonoyan}
\affil[1]{Department of Computer Science and DIMAP, University of Warwick, UK\\
  \texttt{c.konrad@warwick.ac.uk}}
\affil[2]{ICE-TCS, School of Computer Science, Reykjavik University, Iceland\\
  \texttt{ttonoyan@gmail.com}}
\begin{document}

\maketitle

\begin{abstract}
Online algorithms process their inputs piece by piece, taking irrevocable decisions for each data item. This model
is too restrictive for most partitioning problems, 
since data that is yet to arrive may render it impossible to extend partial partitionings to the entire data set reasonably well.

In this work, we show that preemption might be a potential remedy. We consider the problem of partitioning online sequences,
where $p-1$ separators need to be inserted into a sequence of integers that arrives online so as to create $p$ contiguous partitions of similar weight. While without preemption no algorithm with
non-trivial competitive ratio is possible, if preemption is allowed, i.e., inserted partition separators may be removed 
but not reinserted again, then we show that constant competitive algorithms can be obtained. Our contributions include:

We first give a simple deterministic $2$-competitive preemptive algorithm for arbitrary $p$ and arbitrary sequences. Our main 
contribution is the design of a highly non-trivial partitioning scheme, which, under some natural conditions and $p$ being 
a power of two, allows us to improve the competitiveness to $1.68$. We also show that the competitiveness of deterministic (randomized) algorithms is at least $\frac{4}{3}$ (resp. $\frac{6}{5}$).

For $p=2$, the problem corresponds to the interesting special case of preemptively guessing the center of a weighted request sequence. 
While deterministic algorithms fail here, we provide a randomized $1.345$-competitive algorithm for all-ones sequences and 
prove that this is optimal. For weighted sequences, we give a  $1.628$-competitive algorithm and a lower bound of $1.5$.

\end{abstract}

\section{Introduction}
Online algorithms receive their inputs sequentially piece by piece. For each incoming piece of data (=request), the 
algorithm takes an immediate and irrevocable decision on how to process it. Taking good decisions can be challenging or even impossible, 
since decisions are based only on the requests seen and choices taken so far, and, in particular, they cannot be based on future requests.
For many problems, taking a few bad decisions is forgivable and good competitive algorithms can nevertheless be designed (e.g. maximum matching \cite{kvv90}, 
bin packing \cite{u71}, $k$-server \cite{mms90}), while for other problems, even a single bad decision may make it impossible 
to obtain non-trivial solutions (e.g. the maximum independent set problem \cite{himt02}). Generally, a necessary condition for a problem to admit
good online algorithms is that solutions can be {\em incrementally} built by extending partial solutions.

In this paper, we are interested in whether data partitioning problems can be solved online. In an online data partitioning problem, the input data $X$ arrives online 
and is to be partitioned into $p$ parts by computing a partitioning function $\phi: X \rightarrow \{1, 2, \dots, p \}$  such that an application-specific cost function is 
optimized. Unfortunately, most partitioning problems
are inherently non-incremental and thus poorly suited to the online model, which may explain why the literature on online data 
partitioning is exceptionally scarce (see related works section). This raises the following research questions:
\begin{enumerate}
 \item Is online data partitioning with provable guarantees really hopeless to achieve? 
 \item If so, how can we minimally augment the power of online algorithms to render data partitioning possible?
\end{enumerate}

In this work, we address these questions with regards to the problem of {\em partitioning integer sequences}.
It is one of the simplest data partitioning problems and thus a good candidate problem to answer the questions raised.
We prove that even for this rather simple problem, non-trivial quality guarantees are indeed impossible to achieve in the online model. 
However, if we augment the online model with {\em preemption}, i.e., the ability to remove a previously inserted
element from the solution, or, in the context of data partitioning, the ability to merge a subset of current partitions by removing 
previously inserted partition separators, then competitive ratios of at most $2$ can be obtained. 

\mypara{Partitioning Online Integer Sequences}
Let $X$ be an integer sequence of length $n$, and $p \ge 2$ an integer. In the problem of partitioning integer sequences (abbreviated by $\textsc{Part}$),
the goal is to partition $X$ into $p$ contiguous blocks (by determining the position of $p-1$ partition separators) such that the 
maximum weight of a block, denoted the bottleneck value of the partitioning, is minimized. 

In the (non-preemptive) online model, parameter $p$ is given to the algorithm beforehand, and the sequence $X$ arrives online, integer by integer. When processing an integer,
the algorithm has to decide whether or not to place a partition separator after the current integer. We are interested in the {\em competitive ratio} \cite{be98} of
an algorithm, i.e., the (expected) ratio of the bottleneck value of the computed partitioning and the optimal bottleneck value of the input, maximized over 
all inputs. 
Placing no separator at all results in a single partition that is trivially $p$-competitive. We prove, however, that this is essentially 
best possible in the non-preemptive online model, even if the integer sequence is an all-ones sequence. 


Given this strong impossibility result, we then augment the online model with preemption. When processing the current integer of the input sequence, a preemptive online algorithm 
for \textsc{Part} is allowed to remove a previously placed separator, which results in the merging of the two adjacent partitions incident to the separator. 
The removed separator can then be reinserted again (however, only after the current integer). In this paper, we show that the additional flexibility gained 
through preemption allows us to obtain algorithms with competitive ratio at most $2$. 
Even though our original motivation for studying \textsc{Part} in the preemptive online model is the fact that
non-trivial algorithms cannot be obtained in the non-preemptive case, preemptive online algorithms for \textsc{Part} are
extremely space efficient (only the weights of partitions and positions of separators need to be remembered) and thus 
work well in a data streaming context for massive data sets. 

\mypara{Partitioning Continuous Online Flows}
Algorithms for \textsc{Part} have to cope with the following two difficulties: First, the weights in the input sequence may vary hugely,
which implies that algorithms cannot establish partitions of predictable weights. 
Second, algorithms need to find a way to continuously merge adjacent partitions while keeping the bottleneck value small. 
While the first point can be tackled via rounding approaches, the second constitutes the core
difficulty of \textsc{Part}. We define a problem denoted \textsc{Flow}, which abstracts away the varying weights of the integers and
allows us focus on the second point. \textsc{Flow} differs from \textsc{Part} in that the preemptive online algorithm is allowed to determine 
the weight of every incoming element (we now even allow for positive rational weights). The difficulty in 
\textsc{Flow} stems from the fact that the algorithm is not aware of the total weight of the input. While at a first glance this problem appears to be substantially easier than \textsc{Part}, we show that any algorithm for \textsc{Flow} can 
be used for \textsc{Part} while incurring an error term that depends on the ratio of the largest weight
of an element and the total weight of the input sequence. \textsc{Flow} can be seen as a continuous version of \textsc{Part} and
can be interpreted as the problem of partitioning a continuous online flow (details follow in the preliminaries section). 


\mypara{Our Results}
We first show that every algorithm for \textsc{Part} in the non-preemptive online model has an approximation ratio of $\Omega(p)$, even if 
the input is guaranteed to be a sequence of ones (\textbf{Theorem~\ref{thm:lbnonpreemp}}). We then turn to the preemptive model and consider the special case $p=2$ first,
which corresponds to preemptively guessing the center of a weighted request sequence. It is easy to see that every deterministic algorithm for the 
$p=2$ case has a competitive ratio of $2$. We then give a randomized $1.345$-competitive algorithm for unweighted sequences (\textbf{Theorem~\ref{T:identicalupper}}) 
and prove that this is best possible (\textbf{Theorem~\ref{T:identicallower}}). We extend this algorithm to weighted sequences and give a $1.628$-competitive 
algorithm (\textbf{Theorem~\ref{T:weightedp2}}) and a lower bound of $1.5$ on the competitiveness (\textbf{Theorem~\ref{T:weightedlower}}).

For general $p$, we first give a simple deterministic $2$-competitive online algorithm for \textsc{Part} 
(\textbf{Theorem~\ref{thm:generalpub1}}) and prove a lower bound of $4/3$ ($6/5$) on the competitiveness of every deterministic (resp. randomized) algorithm (\textbf{Theorem~\ref{thm:lbpart}}). We then turn to \textsc{Flow} 
and give a highly non-trivial deterministic partitioning scheme with competitive ratio $1.68$ 
(\textbf{Theorem~\ref{thm:scheme}}) for the case that $p$ is a power of two, which constitutes the main contribution of 
this paper. This scheme translates to \textsc{Part} while incurring a small additive term in the competitive ratio 
that stems from the varying weights in \textsc{Part}. We discuss extensions of our scheme to arbitrary values 
of $p$ and demonstrate experimentally that competitive ratios better than $2$ can still be obtained. Last, we give a lower bound of 
$1.08$ on the competitive ratio of every deterministic algorithm for \textsc{Flow} (\textbf{Theorem~\ref{thm:lbflow}}). 
Unlike the lower bounds for $\textsc{Part}$, this lower bound does not rely on the discrete properties of integers.

\mypara{Techniques}
Consider first the $p=2$ case and all-ones sequences, which corresponds to preemptively guessing the center of the request sequence. 
One potential technique is {\em reservoir sampling} \cite{v85}, which allows the sampling of a uniform random element while processing the input
sequence. It naturally suits the preemptive online model and can be used to place a separator at a uniform random position in the request sequence, 
giving a randomized algorithm with expected competitive ratio $1.5$. We show that the {\em randomized geometric guessing} technique allows us to improve on this bound: For a 
random $\delta \in (0,1)$ and a carefully chosen value $X$, reset the single separator to the current position every time the total weight seen 
so far equals $\lceil X^{\delta \cdot i}\rceil$, for $i \in \{1, 2, \dots \}$, giving a $1.345$-competitive algorithm. Via Yao's
principle, we prove that this algorithm is optimal. We then analyze essentially the same algorithm on weighted sequences and show that it is 
$1.632$-competitive. 

For general $p$, consider the following algorithm for all-ones sequences of unknown length $n$ (assume also that $p$ is even): 
First, fill all partitions with weight $w = 1$. Whenever partitions are entirely filled, merge them pairwise creating
$p/2$ partitions of weight $2w$ and then update $w \gets 2 w$. Then $w$ always constitutes the bottleneck value of this partitioning. 
After the merging, fill the empty partitions with weight $w$ and repeat. Note that the optimal 
bottleneck value $opt$ is bounded as $opt \ge \frac{n}{p}$. Since at least half of all partitions computed by the algorithm have weight $w$, 
it holds that $w \frac{p}{2} \le n$, which together with $opt \ge \frac{n}{p}$ implies $w \le 2 opt$, giving a $2$-competitive algorithm. 
The $2$-competitive algorithm given in this paper is based on the intuition provided and also works for sequences with arbitrary weights. 

To go beyond the competitiveness of $2$, consider the key moment that leads to the $2$-competitiveness of the above algorithm:
Just after merging all $p$ partitions of weight $w$ into $p/2$ partitions of weight $2w$, the competitive ratio is $2$. 
To avoid this situation, note that merging only a single pair of partitions would not help. Instead, after merging
two partitions, we need to guarantee that the new bottleneck value is substantially smaller than twice the current bottleneck value. 
This implies that the weights of 
the merged partitions cannot both be close to the current bottleneck value, and it is hence beneficial to establish and merge partitions with 
different weights. In fact, our $1.08$ lower bound for \textsc{Flow} makes use of this observation: If at some moment the competitive 
ratio is too good, then most partitions have similar weight, which implies that when these partitions have to be merged in the future 
the competitive ratio will be large.

From an upper bound perspective, a clever merging scheme is thus required, which establishes partitions of different weights and, in particular,
remains analyzable. We give such a scheme for $\textsc{Flow}$, when $p$ is a power of two. As an illustration, consider
the case $p = 4$ as depicted in Table~\ref{tab:flowscheme}. Recall that in \textsc{Flow} we are allowed to determine the weight
of the incoming elements. We first initialize the partitions with values $x, x^2, x^3, x^4$, 
for $x = 2^\frac{1}{4}$, and evolve the partitions as in the table. Note that, at every moment, all partition weights are different, but
at the same time never differ by more than a factor of $2$. A key property is that at the end of the scheme the weights of 
the partitions are a multiple of the initial weights of the partitions. This allows us to repeate the scheme and limits the analysis
to one cycle of the scheme. We prove that the competitive ratio of our scheme is $1.68$ for every $p$ that is 
a power of two. We discuss how our scheme can be extended to other values of $p$ and demonstrate experimentally 
that competitive ratios better than $2$ can still be obtained.

\begin{table}
\begin{center}
\begin{tabular}{c|c|c|c|c}
$P_1$ & $P_2$ & $P_3$ &  $P_4$ & next \\
\hline 
 $\bm{x}$ & $\bm{x^{2}}$ & $x^{3}$ & $x^{4}$ & $x^{2}$ \\
 $x(1+x)$ & $x^{3}$ & $x^{4}$ & $\bm{x^{2}}$ & $\bm{x^{3}}$ \\
 $x(1+x)$ & $\bm{x^{3}}$ & $\bm{x^{4}}$ & $x^{2}(1 + x)$ & $x^{4}$ \\
 $x(1+x)$ & $x^{3}(1+x)$ & $x^{2}(1 + x)$ & $\bm{x^{4}}$ & $\bm{x^{5}}$ \\
 $\bm{x(1+x)}$ & $\bm{x^{3}(1+x)}$ & $x^{2}(1 + x)$ & $x^{4}(1+x)$ & $x^{3}(1+x)$ \\
 $x(1+x)(1+x^{2})$ & $\bm{x^{2}(1 + x)}$ & $\bm{x^{4}(1+x)}$ & $x^{3}(1+x)$ & $x^{5}(1+x)$ \\
 $x(1+x)(1+x^{2})$ & $x^{2}(1 + x)(1+x^{2})$ & $\bm{x^{3}(1+x)}$ & $\bm{x^{5}(1+x)}$ & $x^{4}(1+x)$\\
 $x(1+x)(1+x^{2})$ & $x^{2}(1 + x)(1+x^{2})$ & $x^{3}(1+x)(1+x^{2})$ & $\bm{x^{4}(1+x)}$ & $\bm{x^{6}(1+x)}$\\
 $x(1+x)(1+x^{2})$ & $x^{2}(1 + x)(1+x^{2})$ & $x^{3}(1+x)(1+x^{2})$ & $x^{4}(1+x)(1+x^{2})$ &
\end{tabular}
\end{center}
\caption{A partitioning scheme for $\textsc{Flow}$ for $p=4$. The bold elements are merged next.\label{tab:flowscheme}}
\end{table}

\mypara{Further Related Work}
The study of \textsc{Part} in the offline setting has a rich history with early works dating back to the 
80s \cite{b88,f91,hl92,mo95,ms95,kms97,hcn92,mp97,pa04}. After a series of improvements, Frederickson gave a highly 
non-trivial linear time algorithm \cite{f91}. \textsc{Part} finds many applications, especially in load balancing scenarios (e.g. \cite{mp97,k11,k16}). 
It has recently been studied in the context of streaming algorithms where it serves as a building block for partitioning XML documents \cite{k16}. 

Recently, Stanton and Kliot \cite{sk12} expressed interest in simple online strategies for data partitioning. 
They studied online graph partitioning heuristics\footnote{Phrased in the context of streaming algorithms, but their algorithms are in fact online} 
for the balanced graph partitioning problem and demonstrated experimentally that simple heuristics work well in practice. 
Stanton later analyzed the behavior of these heuristics on random graphs and gave good quality bounds \cite{s14}. 
Interestingly, besides this line of research, we are unaware of any further attempts at online data partitioning. 

Many works provide additional power to the online algorithm. Besides preemption, common resource augmentations include lookahead (e.g. \cite{g95}), distributions on the input (e.g. \cite{kmt11}), or advice (e.g. \cite{efkr11}). Preemptive online algorithms have been studied for various online problems. One example with a rich
history is the matching problem (e.g. \cite{fkmsz05,z12,elsw13,ctv15}). 

\mypara{Outline}
In Section~\ref{sec:prelim}, we formally define the studied problems and the preemptive online model. In Section~\ref{app:nonpreemptive}, we prove that 
every non-preemptive algorithm for \textsc{Part} has a competitive ratio of $\Omega(p)$. Then, 
we give our algorithms and lower bounds for $\textsc{Part}$ for the special case $p=2$ in Section~\ref{sec:pequals2}. 
All our results for \textsc{Part} and \textsc{Flow} for arbitrary $p$ are given in Section~\ref{sec:anyp}. 

\section{Preliminaries} \label{sec:prelim}
\mypara{Partitioning Integer Sequences}
In this paper, we study the following problem:
\begin{definition}[Partitioning Integer Sequences]
Let $X=w_1, \dots, w_n \in \mathbb{N}^n$ be an integer sequence, and let $p \in \mathbb{N}$ be an integer. The problem of partitioning integer sequences
consists of finding separators $S = s_1, \dots, s_{p-1}$ such that $1 = s_0 \le s_1 \le s_2 \le \dots \le s_{p-1} \le s_p = n+1$ and the maximum weight of a 
partition is minimized, i.e., 
$\max_{j \in \{0, \dots, p-1 \}} \sum_{i=s_j}^{s_{j+1}-1} w_i$
is minimized. The weight of a heaviest partition is the {\em bottleneck value} of the partitioning. This problem is abbreviated by $\textsc{Part}$.  
\end{definition}
\mypara{Online Model} In the online model, parameter $p$ is given to the algorithm beforehand, and the integers $X = w_1, \dots, w_n$ arrive online. 
Upon reception of an integer $w_i$ (also called a {\em request}), the algorithm has to decide whether to place a partition separator after $w_i$. 
In the non-preemptive online model, placing a separator is final, while in the preemptive model, when processing $w_i$ previously placed 
separators may be removed. The total number of separators in place never exceeds $p-1$ and separators can only be inserted at
the current request. 

\mypara{Competitive Ratio} The {\em competitive ratio} of a deterministic online algorithm for \textsc{Part} is the ratio between 
the bottleneck value of the computed partitioning and the bottleneck value of an optimal partitioning, maximized over all potential inputs. 
If the algorithm is randomized, then we are interested in the {\em expected competitive ratio}, where the expectation is taken over the 
coin flips of the algorithm.
 
 
\mypara{Partitioning Online Flows}
 We connect \textsc{Part} to the problem of partitioning a continuous online flow, abbreviated by \textsc{Flow}. 
 \begin{definition}[Preemptive Partitioning of a Continuous Online Flow]
In \textsc{Flow}, time 
is continuous starting at time $0$. Flow enters the system with unit and constant speed such that at time $t$, the total volume of 
flow $\int_{0}^t 1 \, dx = t$ has been injected. The goal is to ensure that when the flow stops at time $t_{max} \ge t_0$ 
($t_0$ is an arbitrarily small initial warm-up period), which is unknown to the algorithm, the total amount of flow is partitioned into 
$p$ parts such that the weight of a heaviest partition is minimized. More formally, at time $t_{max} \ge t_0$, the objective is that partition 
separators $S = s_1, \dots, s_{p-1}$ with $0 = s_0 \le s_1 \le \dots \le s_{p-1} \le s_p = t$ are in place such that:
$\max_{j \in \{0, \dots, p-1 \}} \int_{s_j}^{s_{j+1}} 1 \, dt = \max_{j \in \{0, \dots, p-1 \}} s_{j+1} - s_j$
is minimized. Similar to \textsc{Part}, in the preemptive online model, partition separators can only be inserted at the current time $t$, 
and previously inserted partition separators can be removed. 
\end{definition}
Even though \textsc{Flow} is defined as a continuous problem, it can be seen as a special case of \textsc{Part} where the algorithm can 
determine the weight of every incoming element.

\section{An $\Omega(p)$ Lower Bound In The Non-preemptive Online Model} \label{app:nonpreemptive}
Solving \textsc{Part} in the non-preemptive online model is difficult since  the total weight
of the input sequence is unknown to the algorithm. Since inserted partition separators cannot be removed, any partition 
created by the algorithm may be too small if the input sequence is heavier than expected. This intuition is formalized in the following theorem:

\begin{theorem}\label{thm:lbnonpreemp} 
 For every $p \ge 2$, every randomized non-preemptive online algorithm for \textsc{Part} has expected competitive ratio $\Omega(p)$,
 even on all-ones sequences.
\end{theorem}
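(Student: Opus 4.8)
The plan is to establish the bound via Yao's minimax principle \cite{be98}: it suffices to exhibit a distribution over all-ones input sequences on which \emph{every deterministic} non-preemptive online algorithm has expected competitive ratio $\Omega(p)$. The intuition is that a deterministic algorithm facing an all-ones stream of unknown length commits irrevocably to a fixed set of at most $p-1$ separator positions, and such a fixed set cannot yield a good partitioning at many different \emph{scales} of the total length simultaneously.

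Concretely, I will let the stream length be $n_j = p\cdot 2^{j}$ for $j \in \{0,1,\dots,p-1\}$, each value chosen with probability $1/p$; note that $\mathrm{opt}(n_j) = \lceil n_j/p\rceil = 2^{j}$. Fix a deterministic algorithm. Since the stream is all-ones its behaviour is fixed, so let $a_j$ denote the number of separators it ever places at positions inside the ``octave'' $I_j := (n_{j-1},n_j]$ (with $n_{-1} := 0$). Because the algorithm uses at most $p-1$ separators, all of them at positions $\le n_{p-1}$, and the octaves $I_0,\dots,I_{p-1}$ are disjoint and cover $(0,n_{p-1}]$, we get $\sum_{j=0}^{p-1} a_j \le p-1$.

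The key step is a per-scale lower bound on the bottleneck value: when the stream stops at length $n_j$, the algorithm's bottleneck is strictly larger than $\frac{n_j}{2(a_j+1)}$. Indeed, a separator of value $q$ is placed while processing $w_{q-1}$, so every separator with position in $I_j$ (hence $q\le n_j$) is already committed once the stream has length $n_j$. At that moment these $a_j$ octave-$j$ separators, together with the last separator committed at a position $\le n_{j-1}$ (or the left endpoint $s_0=1$ if there is none) and the right endpoint $s_p = n_j+1$, are consecutive boundaries of the computed partitioning that cut a suffix of length $>n_j-n_{j-1}=n_j/2$ into exactly $a_j+1$ blocks; by averaging, one of these blocks — an actual partition of the computed solution — has weight $>\frac{n_j/2}{a_j+1}$ (the case $a_j=0$ is the same one-block statement). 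Hence the ratio at scale $n_j$ exceeds $\frac{n_j}{2(a_j+1)\cdot 2^{j}}=\frac{p}{2(a_j+1)}$, and averaging over $j$ together with Cauchy--Schwarz ($\sum_j \frac{1}{a_j+1}\ge \frac{p^2}{\sum_j(a_j+1)}\ge\frac{p^2}{2p-1}$) gives expected ratio $>\frac1p\sum_{j}\frac{p}{2(a_j+1)}\ge\frac{p}{2(2p-1)}>\frac{p}{4}$. Yao's principle then transfers this $\frac{p}{4}$ bound to randomized algorithms.

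I expect the main obstacle to be the per-scale bottleneck bound: one has to argue carefully which separators are already committed at the instant the stream has length $n_j$ (this is exactly where the geometric spacing and the off-by-ones between separator values and stream lengths must be handled cleanly), and to verify that the $a_j+1$ pieces identified really are partitions of the algorithm's current solution. The remaining ingredients — the choice of geometric scales $p\cdot 2^{j}$, the Cauchy--Schwarz/averaging computation, and the Yao framework — are routine. If a sharper constant is desired one can instead take $K>p$ scales $n_j = p\cdot 2^{j}$, $j<K$, which drives the constant toward $p/2$, but $p/4$ already suffices for the claimed $\Omega(p)$ bound.
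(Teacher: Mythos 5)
Your proposal is correct, and it takes a genuinely different route from the paper. The paper also invokes Yao's principle, but over a family of $p^2$ all-ones sequences with \emph{arithmetically} spaced lengths $4ip$: it argues that each sequence on which the algorithm beats ratio $p/2$ forces a separator into the middle half of that sequence, and then pigeonholes these witnesses against the budget of $p-1$ separators to conclude that all but $O(p)$ of the $p^2$ sequences incur ratio at least $p/2$. You instead use only $p$ \emph{geometrically} spaced lengths $n_j=p\cdot 2^j$, charge each separator to the octave containing it, prove a per-scale bottleneck bound $\ge \frac{n_j}{2(a_j+1)}$ by averaging over the $a_j+1$ blocks of the suffix beyond $n_{j-1}$, and finish with Cauchy--Schwarz against $\sum_j(a_j+1)\le 2p-1$. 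The trade-offs: the paper's counting gives the sharper constant $\frac{p}{2}-\frac12$ versus your $\frac{p}{4}$ (recoverable, as you note, by taking more scales), and its hard distribution lives on lengths polynomial in $p$ rather than exponential; on the other hand, your argument is quantitatively smoother --- it extracts value from \emph{every} scale rather than classifying sequences as good or bad, and it sidesteps the paper's claim that the distinguished separators $s(i)$ are pairwise distinct, which as literally stated needs extra care since the witness intervals $[ip,3ip]$ overlap for nearby $i$. Your identification of the commitment-time bookkeeping as the delicate step is accurate, and your handling of it (a separator at value $q\le n_j$ is fixed while processing request $q-1$, so the suffix decomposition consists of genuine blocks of the current solution) is sound.
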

\begin{proof}
Let $k = p^2$ and let $\Sigma = \{\sigma_1, \dots, \sigma_k \}$ be a set of request sequences where $\sigma_i$ is the all-ones sequence of length $4 ip$.
Let $\cA$ be a randomized algorithm for $\textsc{Part}$, and assume that its expected competitive ratio on every instance of $\Sigma$ is at most $c$. 
Then by Yao's lemma, there exists a deterministic algorithm $\cA_d$ with average approximation ratio at most $c$ over the instances of $\Sigma$. 

Let $S = \{s_1, \dots, s_{p-1} \}$ denote the set of separators output by $\cA_d$ on $\sigma_k$. Note that since $\sigma_i$ is a prefix 
of $\sigma_k$, the output of $\cA_d$ on $\sigma_i$ is a subset of the separators $S$. Partition now $\Sigma$ into $\Sigma_0$ and $\Sigma_1$ such that the 
approximation ratio of $\cA_d$ on $\sigma \in \Sigma_1$ is strictly smaller than $p/2$, and the approximation ratio of $\cA_d$ on $\sigma \in \Sigma_0$ is at 
least $p/2$. Consider now a $\sigma_i \in \Sigma_1$. Then, there exists a separator $s(i) \in S$ with $1/4 \cdot 4 i p \le s_j \le (3/4) \cdot 4 i p$, since otherwise the 
bottleneck value of the partitioning output by $\cA_d$ on $\sigma_i$ was at least $2 p i$. This in turn would imply that the approximation ratio was at least 
$2 p i / (4i) = \frac{1}{2} p$ (since the optimal bottleneck value on $\sigma_i$ is $4i$), contradicting the fact that $\sigma \in \Sigma_1$.
Next, note that the separators $s(i)$ and $s(j)$, for $i \neq j$ and $\sigma_i, \sigma_j \in \Sigma_1$, are necessarily different. Thus, since the number of separators 
is $p-1$, the size of $\Sigma_1$ is bounded by $p-1$. The average approximation factor $c$ of $\cA_d$ on instances $S$ is thus at least
$$c \ge \frac{|\Sigma_0| \cdot \frac{p}{2} + |\Sigma_1| \cdot 1}{|\Sigma|} = \frac{(k - (p-1)) \frac{p}{2}}{k} > \frac{p}{2} - \frac{p^2}{2k} \ge \frac{p}{2} - \frac{p^2}{2k} = \frac{p}{2} - \frac{1}{2} \ .  $$
\end{proof}

\section{Guessing the Center: Part for $p=2$} \label{sec:pequals2}
In this section, we consider \textsc{Part} for $p=2$, i.e., a single separator needs to be introduced into the request sequence
so as to split it into two parts of similar weight. We start with all-ones sequences and present an 
asymptotically optimal preemptive online algorithm. Then, we show how to extend this algorithm to sequences of arbitrary weights.

\subsection{All-ones Sequences}
The special case $p=2$ on all-ones sequences corresponds to preemptively guessing the center of a request sequence of unknown length. 
Deterministic algorithms cannot achieve a competitive ratio better than $2$ here, since request sequences that end just after a deterministic 
algorithm placed a separator give a competitive ratio of $2$.
\begin{observation}
 Every deterministic preemptive online algorithm for \textsc{Part} with $p=2$ has a competitive ratio of $2$.
\end{observation}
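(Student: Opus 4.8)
The plan is to prove matching bounds: the upper bound $\le 2$ is immediate, and the lower bound $\ge 2$ exploits the fact that for $p=2$ a freshly (re)inserted separator necessarily leaves one block empty.

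For the upper bound I would argue that for $p=2$ every feasible partitioning — in particular the one output by any algorithm — consists of at most two blocks whose weights sum to the total input weight $W:=\sum_i w_i$, so its bottleneck value is at most $W$. On the other hand, in an optimal partitioning both blocks have weight at most $opt$, hence $W\le 2\,opt$. Combining, the bottleneck value of any algorithm is at most $W\le 2\,opt$, so no algorithm (deterministic or not) is worse than $2$-competitive.

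For the lower bound, fix a deterministic preemptive algorithm $\cA$ and feed it the all-ones sequence $1,1,1,\dots$, the adversary being free to end the input at any time. Let $b(t)$ be the bottleneck value of $\cA$'s current partitioning right after the $t$-th request; the optimum on the length-$t$ prefix equals $\lceil t/2\rceil$, so stopping at time $t$ yields ratio $b(t)/\lceil t/2\rceil$. The structural fact where $p=2$ and preemption enter is that $\cA$ maintains a single separator which, whenever it is inserted or re-inserted, must be placed right after the current request; hence at every step $t$ in which $\cA$ touches its separator (insert, re-insert, or remove it) one block is empty and $b(t)=t$, giving ratio $t/\lceil t/2\rceil\to 2$. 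I would then distinguish two exhaustive cases. If $\cA$ modifies its separator at arbitrarily large times $t$, stop at a sufficiently large such $t$: the ratio is $t/\lceil t/2\rceil\ge 2-2/(t+1)>2-\varepsilon$. Otherwise $\cA$ eventually freezes its separator, so for all large $t$ the left block has some fixed weight $a$ and $b(t)=\max(a,\,t-a)=t-a$; stopping at a sufficiently large even $t$ gives ratio $(t-a)/(t/2)=2-2a/t>2-\varepsilon$. Either way, for every $\varepsilon>0$ there is a finite all-ones input on which $\cA$ achieves ratio more than $2-\varepsilon$, so its competitive ratio is at least $2$; with the upper bound it is exactly $2$.

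The only point needing care is verifying that the two cases genuinely cover every behaviour of $\cA$ — in particular that ``never placing a separator'' and ``placing one but never touching it again'' both fall under the frozen case (with $a=0$ in the former) — and making the two ``$\to 2$'' limits quantitative enough to beat a given $2-\varepsilon$. This is routine bookkeeping around the single clean observation that, with $p=2$, a newly (re)inserted separator always empties one side of the sequence.
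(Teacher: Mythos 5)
Your proof is correct and rests on exactly the observation the paper uses: the adversary stops the all-ones sequence right after the algorithm (re)places its separator, at which point one block is empty and the ratio tends to $2$. The extra bookkeeping you add (the trivial upper bound $W\le 2\,opt$ and the ``frozen separator'' case) is sound and just makes explicit what the paper leaves as a one-line remark.
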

Using a single random bit, the competitiveness can be improved to $1.5$. This {\em barely-random} algorithm is 
given in Appendix~\ref{app:barelyrandom}. Using $O(\log n)$ random bits, we can improve the competitive ratio 
to $1.344$, which is best possible, and will be presented now.

\begin{algorithm}[H]
\caption{$\cA_x$}
 \begin{algorithmic}
   \STATE Choose uniform random $\delta\in (0,1)$, $i \rightarrow 0$
	\FOR{each request $j=1, \dots, n$}
	\IF {$j = \lceil x^{i+\delta} \rceil$}
	\STATE move separator to current position 
	\STATE $i\rightarrow i+1$
	\ENDIF
	\ENDFOR
 \end{algorithmic}
 \caption{Alg. $\cA_x$ for \textsc{Part} and $p=2$ \label{alg:p2}}
\end{algorithm}
Algorithm $\cA_x$, as depicted in Algorithm~\ref{alg:p2}, is parametrized by a real $x>2$, which will be optimized in Theorem~\ref{T:identicalupper}. 
It moves the separator to the current position as soon as the $\lceil x^{i+\delta} \rceil$-th request is processed, where $i$ is any integer and 
$\delta\in (0,1)$ is a random number.

\textbf{Remark.} The continuous random variable $\delta$ is only taken for convenience in the analysis; a bit precision of $O(\log n)$ 
is enough. 

In the following, denote by $R_{\cA_x}^n$ the competitive ratio of $\cA_x$ on a sequence of length $n$.

\begin{theorem}\label{T:identicalupper}
There is a constant $x\approx 3.052$ such that $\Exp [R_{\cA_x}^n] \approx 1.344 + O(n^{-1})$.
\end{theorem}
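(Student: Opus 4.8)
The plan is to fix the input length $n$ and compute, as a function of the random choice $\delta$, the bottleneck value produced by $\cA_x$, then average over $\delta \in (0,1)$. The key observation is that the separator of $\cA_x$ sits at position $\lceil x^{i+\delta}\rceil$ where $i$ is the largest integer with $\lceil x^{i+\delta}\rceil \le n$; write $n = x^{m}$ informally (ignoring the ceilings for now), so that $m$ is the "true logarithm'' of $n$. The separator's position is then roughly $x^{i+\delta}$ with $i+\delta$ being the largest value of the form $(\text{integer})+\delta$ not exceeding $m$; equivalently, if we let $\{m-\delta\}$ denote the fractional part, the separator sits at about $x^{m - \{m-\delta\}} = n\cdot x^{-\{m-\delta\}}$. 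The two partition weights are therefore approximately $n x^{-s}$ and $n(1 - x^{-s})$ where $s = \{m-\delta\}$ is, as $\delta$ ranges uniformly over $(0,1)$, itself uniform on $(0,1)$. The bottleneck value is $n \cdot \max(x^{-s}, 1 - x^{-s})$, and the optimum is $n/2$, so the competitive ratio is $2\max(x^{-s}, 1-x^{-s})$.

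Next I would integrate this ratio over $s \in (0,1)$. There is a threshold $s^\ast$ with $x^{-s^\ast} = 1/2$, i.e. $s^\ast = \log_x 2 = \ln 2 / \ln x$ (this is where we need $x > 2$, so that $s^\ast \in (0,1)$). For $s < s^\ast$ the heavier side is the left partition of weight $n x^{-s}$; for $s > s^\ast$ it is the right partition of weight $n(1 - x^{-s})$. Hence
\begin{align*}
\Exp[R^n_{\cA_x}] \approx 2\int_0^{s^\ast} x^{-s}\, ds + 2\int_{s^\ast}^1 (1 - x^{-s})\, ds
= \frac{2}{\ln x}\bigl(1 - x^{-s^\ast}\bigr) + 2(1 - s^\ast) + \frac{2}{\ln x}\bigl(x^{-1} - x^{-s^\ast}\bigr).
\end{align*}
Substituting $x^{-s^\ast} = 1/2$ and $s^\ast = \ln 2/\ln x$ gives a closed-form function $f(x) = 2 - \tfrac{2\ln 2}{\ln x} + \tfrac{2}{\ln x}(x^{-1} - \tfrac12 \cdot 2) = \ldots$ — a one-variable expression I would then minimize over $x > 2$ by setting $f'(x) = 0$. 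The minimizer is the claimed $x \approx 3.052$ and the minimum value is $\approx 1.344$.

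The main obstacle is not the calculus but making the ceiling-function approximation rigorous to obtain the stated $O(n^{-1})$ error term. Concretely: (i) the separator sits at $\lceil x^{i+\delta}\rceil$, not $x^{i+\delta}$, which perturbs each partition weight by at most $1$, hence the ratio by $O(1/n)$; (ii) $n$ need not equal $x^m$ for integer-plus-$\delta$ exponents, so the "last'' separator position $\lceil x^{i+\delta}\rceil$ as $\delta$ varies does not trace out exactly $n x^{-s}$ with $s$ uniform — rather, as $\delta$ increases through $(0,1)$ the index $i$ drops by one at the point where $\lceil x^{i+\delta}\rceil$ would exceed $n$, and one checks this crossover point is uniform up to $O(1/n)$; (iii) there is the degenerate possibility that no separator is ever placed (if $n < \lceil x^{\delta}\rceil$), but this happens only for an $O(1/n)$-measure set of $\delta$ and contributes $O(1/n)$ to the expectation since the ratio is always at most $2$. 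Carefully bounding each of these three discretization effects by $O(1/n)$ and collecting them yields $\Exp[R^n_{\cA_x}] = f(x) + O(n^{-1})$, and plugging in the optimal $x$ completes the proof. I would also double-check the boundary case $s \to 0$ (separator at the very end, ratio $\to 2$) is handled by the integral, which it is since it is a single point.
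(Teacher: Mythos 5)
Your proposal is correct and follows essentially the same route as the paper: express the bottleneck as a function of the random offset, integrate over that offset, and minimize the resulting one-variable function $f(x)=2-2\log_x 2+\tfrac{2}{x\ln x}$ over $x>2$, whose minimizer is $x\approx 3.052$ with value $\approx 1.344$. The only difference is presentational: your change of variables to $s=\{m-\delta\}$ (uniform on $(0,1)$) collapses the paper's case analysis over $\alpha$ and the three ranges of $\delta$ into the single integral $\int_0^1 2\max(x^{-s},1-x^{-s})\,ds$, which evaluates to the same $f(x)$.
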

\begin{proof}
 Let $\alpha \in [0, 1)$ and $i \in \N$ be such that $n=2x^{i+\alpha}$.
Then, the bottleneck value of an optimal partition is $\lceil \frac{n}{2} \rceil = \lceil x^{i+\alpha} \rceil$.
We now bound the bottleneck value of the computed partitioning $r_{\cA_x}^n$, which depends on various ranges of $\alpha$ and $\delta$.
We distinguish two ranges for $\alpha$, and within each case, we distinguish three ranges of $\delta$:

\vspace{0.1cm}
\noindent \textbf{Case 1:} $\alpha>1-\log_x {2}$ (note that we assumed $x>2$). 
In order to bound $r_{\cA_x}^n$, we split the possible values of $\delta$ into three subsets:
 \begin{itemize}
\item{If $\delta\in (0,\alpha + \log_x 2 -1]$, then we have that $x^{\delta+i+1} \le 2x^{i+\alpha} = n$. In this case, the bottleneck value is $r_{\cA_x}^n=x^{\delta+i+1}+O(1)$. }
\item{If $\delta\in (\alpha + \log_x 2 -1, \alpha]$, then we have that $x^{\delta+i+1} > n$ but $x^{\delta+i} \le \frac{n}{2}$. In this case, $r_{\cA_x}^n=n - x^{\delta+i} + O(1)=2x^{\alpha+i}-x^{\delta+i} + O(1)$.}
\item{If $\delta\in (\alpha, 1)$, then we have that $x^{\delta+i+1} > n$ and $x^{\delta+i} \in (\frac{n}{2},n)$. In this case, $r_{\cA_x}^n=x^{\delta+i}$.}
\end{itemize}
Using these observations, we can bound the expected competitive ratio as follows:
\begin{align*}
\Exp[R_{\cA_x}^n]&=\int_0^{\alpha + \log_x 2 -1} {\frac{x^{\delta+i+1}}{x^{i+\alpha}}} d\delta +
 \int_{\alpha + \log_x 2 -1}^\alpha {\frac{2x^{\alpha+i} - x^{\delta+i}}{x^{i+\alpha}}} d\delta +
 \int_\alpha^1 {\frac{x^{\delta+i}}{x^{i+\alpha}}} d\delta + O(n^{-1}) \\
&= \frac{1}{x^\alpha}\left(\int_0^{\alpha + \log_x 2 -1} {x^{\delta+1}} d\delta +
\int_{\alpha + \log_x 2 -1}^\alpha (2x^{\alpha} - x^{\delta}) d\delta +
 \int_\alpha^1 {x^{\delta}} d\delta \right) + O(n^{-1}) \\
&=2-2\log_x 2 +\frac{2}{x\ln x} + O(n^{-1}). 
\end{align*}

\vspace{0.1cm}

\noindent \textbf{Case 2:} $\alpha\le 1-\log_x {2}$.
We deal with this case similarly, but we need to group the possible values for $\delta$ in a different way:
\begin{itemize}
\item{If $\delta\in (0,\alpha]$, then $x^{\delta+i+1} > n$ but $x^{\delta+i} \le \frac{n}{2}$. In this case, $r_{\cA_x}^n=n - x^{\delta+i} + O(1)$.}
\item{If $\delta\in (\alpha, \alpha + \log_x 2]$, then $x^{\delta+i} > \frac{n}{2}$ and $x^{\delta+i} \le n$. In this case, $r_{\cA_x}^n=x^{\delta+i} + O(1)$.}
\item{If $\delta\in (\alpha + \log_x 2, 1)$, then $x^{\delta+i} > n$. In this case, $r_{\cA_x}^n=n - x^{\delta+i-1} + O(1)$ (note that $i\ge 1$ here).}
\end{itemize}

Plugging the values above in the formula for the expected value, we obtain a different sum of integrals, which however leads to the same function as above:
\begin{align*}
\Exp[R_{\cA_x}]&=
\int_0^{\alpha} {\frac{2x^{\alpha+i} - x^{\delta+i}}{x^{i+\alpha}}} d\delta +
 \int_{\alpha}^{\alpha + \log_x 2} {\frac{x^{\delta+i}}{x^{i+\alpha}}} d\delta +
 \int_{\alpha + \log_x 2}^1 {\frac{2x^{\alpha+i} - x^{\delta+i-1}}{x^{i+\alpha}}} d\delta + O(n^{-1})\\
&=2-2\log_x 2 +\frac{2}{x\ln x} + O(n^{-1}). 
\end{align*}
Moreover, the formulas above are independent of $\alpha$. Thus, it remains to find a value of $x$ that minimizes $f(x)\overset{def}{=} 2-2\log_x 2 +\frac{2}{x\ln x}$. Observe that $f'(x)=-\frac{2}{x^2\ln^2 x} - \frac{2}{x^2\ln x}+\frac{\ln 2}{x\ln^2 x}$, and $f'(x)=0$ if and only if $x=\log_2(ex)$. With a simple transformation, the latter is equivalent to $z e^z=-\frac{\ln 2}{e}$ with $z=-x\ln 2$, so the value that minimizes $f(x)$ can be computed as 
$x_{\text{min}}=-\frac{W_{-1}(-\ln 2/e)}{\ln 2}\approx 3.052$, where $W_{-1}$ is the lower branch of Lambert's $W$ function. The claim of the theorem follows 
by calculating $f(x_{\text{min}})$.
\end{proof}

Next, we prove that no algorithm can achieve an (expected) competitive ratio better than the one claimed in Theorem~\ref{T:identicalupper}.
The proof applies Yao's Minimax principle and uses a hard input distribution over all-ones sequences of length $n \in [n_{\text{min}},n_{\text{max}}]$, 
for some large values of $n_{\text{min}}$ and $n_{\text{max}}$, where the probability that the sequence is of length $n$ is proportional to $1/n$. 

\begin{theorem}\label{T:identicallower}
For any randomized algorithm $\cA$, $\Exp[R_{\cA}^n]\ge 1.344$.
\end{theorem}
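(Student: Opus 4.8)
The plan is to apply Yao's minimax principle: exhibit a distribution $\mathcal D$ over all-ones input sequences on which every deterministic preemptive online algorithm has expected competitive ratio at least $1.344$, which then lower-bounds the expected competitive ratio of any randomized algorithm. Following the hint in the text preceding the statement, I would let the length $n$ of the sequence be drawn from $\{n_{\min}, n_{\min}+1, \dots, n_{\max}\}$ with $\Pr[\text{length}=n] \propto 1/n$, i.e. $\Pr[\text{length}=n] = \frac{1}{H \cdot n}$ where $H = \sum_{n=n_{\min}}^{n_{\max}} 1/n \approx \ln(n_{\max}/n_{\min})$. The point of the $1/n$ weighting is scale-invariance: it makes the instance look roughly geometric, matching the geometric guessing structure of the optimal algorithm $\mathcal A_x$, so that the analysis is essentially independent of where in the range the true length falls.

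Next I would describe the behaviour of an arbitrary deterministic algorithm on this distribution. At any time the algorithm has placed one separator; by preemption it may move it, but it only ever sits at one position at a time, so on the prefix of length $n_{\max}$ the algorithm's separator traces out some nondecreasing sequence of positions $q_1 \le q_2 \le \dots$, where $q_j$ denotes the position of the separator at the moment the $j$-th request is processed (equivalently, a nondecreasing step function $q:\{1,\dots,n_{\max}\}\to\mathbb N$ with $q(m)\le m$). If the sequence stops at length $n$, the algorithm's bottleneck value is $\max\{q(n),\, n-q(n)\}$, whereas $opt = \lceil n/2\rceil$. Crucially, since the separator position is monotone and can never decrease, once $q(n)$ exceeds $n/2$ for the current length it can only get worse as $n$ grows, so the algorithm is forced to "reset" its separator to near the current end periodically; the reset points $m_1 < m_2 < \dots$ (the values of $m$ where $q$ jumps up to near $m$) play the role of the $\lceil x^{i+\delta}\rceil$ in $\mathcal A_x$. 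Between two consecutive resets $m_k$ and $m_{k+1}$, for a stopping length $n\in(m_k,m_{k+1}]$ the separator is essentially frozen at $m_k$ (or wherever it was; placing it anywhere strictly inside is never better than one of the endpoints of the relevant intervals for the worst case over the sub-range), and I would compute the contribution $\sum_{n} \frac{1}{Hn}\cdot\frac{\max\{m_k,\, n-m_k\}}{\lceil n/2\rceil}$ to the expected ratio over $n$ in that block, approximating the sum by an integral $\frac{1}{H}\int \frac{\max\{m_k,\,n-m_k\}}{n/2}\frac{dn}{n}$.

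The heart of the argument is then an optimization: show that for \emph{any} choice of reset points, the total expected ratio is at least $\min_x f(x) = 2 - 2\log_x 2 + \frac{2}{x\ln x}$ evaluated at $x_{\min}\approx 3.052$, i.e. $\approx 1.344$. The clean way is to argue that the worst case for the algorithm (from the algorithm's perspective, the best) is, up to lower-order terms, a geometric ladder of reset points $m_k \approx c\cdot x^k$ for some ratio $x$ and offset, because the $1/n$-weighted objective is invariant under rescaling $n$; any non-geometric spacing can be improved ratio-wise by a smoothing/averaging argument. With geometric resets of ratio $x$, a direct computation of the integral (identical in form to the two-case integral in the proof of Theorem~\ref{T:identicalupper}, with the roles of algorithm and adversary swapped and the extra $1/n$ factor absorbed by integrating $d\delta$ over the uniform phase induced by where $\ln n_{\max}$ falls modulo $\ln x$) gives exactly $f(x)$; minimizing over the free parameter $x$ gives $f(x_{\min})$. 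Finally, letting $n_{\min}\to\infty$ and $n_{\max}/n_{\min}\to\infty$ kills all the $O(1/n_{\min})$ and boundary error terms, yielding $\Exp[R_{\mathcal A}^n] \ge 1.344$.

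The main obstacle I anticipate is making rigorous the claim that geometric reset points are (nearly) optimal for the deterministic algorithm, and simultaneously handling boundary effects at $n_{\min}$ and $n_{\max}$ (the first and last partial blocks, where a clever algorithm might try to exploit knowing it is near an endpoint of the support). The standard fixes are: (i) choose the support ratio $n_{\max}/n_{\min}$ huge so that each boundary block contributes only an $O(1/\log(n_{\max}/n_{\min}))$ fraction of the total $H$, hence negligible; and (ii) prove the geometric-is-optimal claim by a convexity/exchange argument on the block contributions — if two consecutive gap ratios $m_{k+1}/m_k$ and $m_{k+2}/m_{k+1}$ differ, replacing both by their geometric mean does not increase the combined expected ratio, since the per-block cost $g(\rho)$ as a function of the gap ratio $\rho$ is convex in $\log\rho$. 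Once this is in place, the calculation reduces to exactly the one already carried out for the upper bound, which is the reassuring sanity check that the bound $1.344$ is tight.
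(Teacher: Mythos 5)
Your overall strategy is the same as the paper's: Yao's principle with lengths distributed proportionally to $1/n$ on $[n_{\min},n_{\max}]$, a decomposition of the expected ratio into blocks between consecutive separator-placement points $i_j<i_{j+1}$, and the per-block integral whose normalized value is $f(x)=2-2\log_x 2+\frac{2}{x\ln x}$ with $x=i_{j+1}/i_j$, minimized at $x\approx 3.052$ to give $1.344$. The place where you diverge is the step you yourself flag as the main obstacle: you propose to first prove that geometric reset points are optimal via a smoothing/exchange argument and then reduce to the upper-bound computation. This step is superfluous. Once you have the per-block computation, each block's contribution is (up to $O(1/i_j)$ terms) at least $f(i_{j+1}/i_j)\cdot H_{i_j}^{i_{j+1}-1}/S \ge 1.344\cdot H_{i_j}^{i_{j+1}-1}/S$ for \emph{every} gap ratio, because $f$ is bounded below by its minimum pointwise; the total is then a weighted average of quantities each at least $1.344$, and no claim about the optimal spacing of reset points is needed. (The paper also splits off the case $i_{j+1}\le 2i_j$, where the bottleneck never switches from $a$ to $n-a$ and a cruder bound of $1.4$ per unit mass suffices.) Your exchange argument would likely go through — $u\mapsto u\,f(e^u)=2u-2\ln 2+2e^{-u}$ is indeed convex — but it adds work and risk for no gain.

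The one genuine gap is your treatment of the boundary blocks. You propose to make $n_{\max}/n_{\min}$ huge so that "each boundary block contributes only an $O(1/\log(n_{\max}/n_{\min}))$ fraction of the total." That is false for the first block $[n_{\min},i_1)$: the algorithm may place no separator until $i_1\gg n_{\min}$ (or none at all in the range), in which case this block carries an arbitrarily large fraction — even all — of the probability mass, so it cannot be dismissed as negligible. The paper handles it with a dedicated argument: let $i_0<n_{\min}$ be the last separator placed before the range (WLOG $i_0\ge n_{\min}/2$); if $i_1\le 2i_0$ the block has mass $O(1/S)$ and one may concede ratio $1$ there, while if $i_1>2i_0$ the same Case-2 computation shows the block's normalized contribution is in fact at least $2-O(1/S)$. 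You need an argument of this kind; the last block $[i_k,n_{\max}]$, by contrast, really is handled by the generic per-block bound.
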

\begin{proof}

\def \nmin {{n_\text{min}}}
\def \nmax {{n_\text{max}}}

We will prove the theorem by using Yao's Minimax principle \cite{y77}. To this end, let us first 
consider an arbitrary \emph{deterministic} algorithm $\cA_{\text{det}}$. Assume the length of the 
sequence is random in the interval $X := [\nmin,\nmax]$ for large values of 
$\nmax$ and $\nmin$ with $\nmax > 2 \cdot \nmin$ and has the following distribution: 
The sequence ends at position $n\in X$ with probability $p_n$ which is proportional to $\frac{1}{n}$, i.e., using the
definition $\displaystyle S=\sum_{m=\nmin}^{\nmax}\frac{1}{m}$, we have
$$p_n := \Pr[\text{sequence is of length }n]= \frac{1}{n\cdot S} \ .$$

We will show that for \emph{each} deterministic algorithm $\cA_{\text{det}}$, if the 
input sequence is distributed as above, then $\Exp [R_{\cA_{\text{det}}}^n] \ge 1.344 - O(\ln^{-1} \nmax/\nmin)$, where the expectation 
is taken over the  distribution of $n$. 

Let $J$ denote the set of requests at which $\cA_{\text{det}}$ places the separator when processing the 
all-ones sequence of length $\nmax$. Note that the set of separator positions placed by $\cA_{\text{det}}$ 
on sequences of shorter lengths are a subset of $J$. Let $I = J \cap X = \{i_1, \dots, i_k \}$ (the $i_j$ are ordered with increasing value). 

For $n \in X$, let $r_{\cA_{det}}^n$ be the bottleneck value of the partitioning computed by $\cA_{\text{det}}$ 
on the sequence of length $n$. We bound $\Exp [R_{\cA_{\text{det}}}^n] = \sum_{n = \nmin}^{\nmax} p_n   R_{\cA_{\text{det}}}^n$ by separately bounding every partial sum in the following decomposition:
$$
\Exp [R_{\cA_{\text{det}}}^n] =
E(\nmin, i_1) + E(i_1,i_2) + \dots + 
E(i_{k-1}, i_k) + E(i_k, \nmax), 
$$
where for each $a > b$,  $E(a, b)=\sum_{n = a}^{b - 1} p_n  R_{\cA_{\text{det}}}^n$.
The first and last terms need a special care, so we will start with bounding all other terms. In the following, $H_p^{q}=\sum_{n=p}^{q}\frac{1}{n}$ denotes partial harmonic sums for $q\ge p\ge 1$. In particular, $S=H_{\nmin}^\nmax$.

Thus, we proceed in three steps:
\begin{enumerate}
 \item Consider an index $1 \le j < k$ and let us bound the sum $E(i_j, i_{j+1})$. Let us denote $a=i_j$ and $b=i_{j+1}$.
We need to consider two cases.

\vspace{0.1cm}
\noindent \textbf{Case 1:} $b \le 2 a$. Then for all $n \in \{a,\dots,b-1 \}$, the bottleneck value computed by the algorithm is
$r_{\cA_{det}}^n=a$ (since $n/2 < a$). Then:
\begin{equation}\label{E:p2lbsum}
E(a, b) \ge \sum_{n=a}^{b-1} \frac{1}{n S} \cdot \frac{a}{\lceil n/2\rceil}\ge \frac{2a}{S}\sum_{n=a}^{b-1} \frac{1}{n(n+1)}=\frac{2a}{S}\left(\frac{1}{a}-\frac{1}{b}\right)>1.4\cdot \frac{H_a^{b-1}}{S},
\end{equation}
where the last inequality can be proved as follows. First, it is easily checked that the ratio $\Phi(a,b)=2(1-\frac{a}{b})/H_{a}^{b-1}$ decreases 
when $b$ increases with $a$ kept fixed, implying that (recall that $b\le 2a$ and using standard approximations of harmonic sums) 

$$\Phi(a,b) \ge  \Phi(a,2a)= 1/H_{a}^{2a-1} \ge \frac{1}{\ln 2 + \frac{1}{a}}>1.4,$$
where the last inequality holds when $a=i_j$ is large enough (say $i_j\ge \nmin/2\ge 50$).

\vspace{0.1cm}

\noindent \textbf{Case 2:} $b > 2a$. In this case, for all $n=a,\dots,2a-1$, if the sequence is of length $n$, then $r_{\cA_{det}}^n=a$, as in case 1. 
However, when $n\ge 2a$, then $n/2 \ge a$, so $r_{\cA_{\text{det}}}^n=n-a$. Using these observations, we can bound $\Exp [R_{\cA_{\text{det}}}^n]$ as follows:
\begin{eqnarray*}
\Exp [R_{\cA_{\text{det}}}^n] & = & \sum_{n=a}^{2a-1} \frac{1}{nS} \frac{a}{\lceil n/2\rceil} + \sum_{n=2a}^{b-1} \frac{1}{n S} \frac{n-a}{\lceil n/2\rceil}  \\
& \ge & \frac{2a}{S}\sum_{n=a}^{2a-1}\frac{1}{n(n+1)} + \frac{2}{S}H_{2a + 1}^{b}  -\frac{2a}{S}\sum_{n=2a}^{b-1} \frac{1}{n(n+1)}\\
& = & \frac{2a}{S}\left(\frac{1}{a}-\frac{1}{2a}\right) + \frac{2}{S}H_{2a + 1}^{b} - \frac{2a}{S}\left(\frac{1}{2a}-\frac{1}{b}\right) \\
& = & \frac{2}{S} + \frac{2 a}{S b} + \frac{2}{S}(H_{a}^{b} - H_{a}^{2a}) \\
& = & \frac{H_{a}^{b}}{S} \cdot \left( 2 + \frac{2a}{bH_{a}^{b}} - \frac{H_a^{2a}}{H_{a}^{b}}\right),
\end{eqnarray*}
where the third line is obtained by using the identity $H_{2a+1}^b = H_a^b-H_a^{2a}$.
Again, using a standard approximation for the harmonic sums, and setting $x = \frac{b}{a}$, we can approximate:
\begin{eqnarray*}
2 + \frac{2a}{bH_{a}^{b}} - \frac{H_a^{2a}}{H_{a}^{b}} \ge 2 + \frac{2a}{b\ln \frac{b}{a}} - \frac{\ln 2}{\ln \frac{b}{a}} - \Order(a^{-1}) =2 + \frac{2}{x\ln x} -\log_x 2 - \Order(a^{-1}),
\end{eqnarray*}
Note that the function $f(x)=2 + \frac{2}{x\ln x} -\log_x 2$ is exactly the same 
that was minimized in the proof of Thm.~\ref{T:identicalupper}, and achieves its minimum in 
$(1,\infty)$ at $x_{\text{min}} \approx 3.052$, giving $f(x_{\text{min}})\approx 1.344$. Thus, we have 
$E(i_j, i_{j+1}) \ge \frac{H_{i_j}^{i_{j+1}-1}}{S}(1.344 -O(\frac{1}{i_j}))$.

\item The term $E(i_k, \nmax)$ can be bounded by $\frac{H_{i_k}^{\nmax-1}}{S}(1.344 -O(\frac{1}{i_k}))$ by an identical argument as above.

\item The term $E(\nmin, i_1)$ needs a slightly different approach. Let $i_0$ denote the last separator that the algorithm placed before $\nmin$. We can assume that $i_0\ge \nmin/2$, as otherwise the algorithm could only profit by moving the separator to $\nmin/2$. We consider two cases. First, if $i_1\le 2i_0$, then we simply assume the algorithm performs optimally in the range $[\nmin, i_1)$: 
\[
E(\nmin, i_1)=\frac{H_{\nmin}^{i_1-1}}{S} \ge 1.344 \cdot \frac{H_{\nmin}^{i_1-1}}{S} - 0.5\frac{H_{\nmin}^{i_1-1}}{S} > 1.344 \cdot \frac{H_{\nmin}^{i_1-1}}{S} - 1/S,
\]
since (recalling that $i_1\le 2i_0\le 2\nmin$),  $H_{\nmin}^{i_1} < H_{\nmin}^{2\nmin}< 1$.

On the other hand, when $i_1>2i_0$ (and by the discussion above, $2i_0 \ge \nmin$), then with calculations similar to the one in Case 2 above, we can obtain: 
\begin{align*}
E(x_\text{min}, i_1)&=\sum_{n=\nmin}^{2i_0-1} \frac{1}{nS} \frac{i_0}{\lceil n/2\rceil} + \sum_{n=2i_0}^{i_1-1} \frac{1}{n S} \frac{n-i_0}{\lceil n/2\rceil}\\
& \ge \frac{2i_0}{S}\sum_{n=\nmin}^{2i_0-1}\frac{1}{n(n+1)} + \frac{2}{S}H_{2i_0 + 1}^{i_1}  -\frac{2i_0}{S}\sum_{n=2i_0}^{i_1-1} \frac{1}{n(n+1)}\\
& \ge \frac{2}{S}\left(\frac{i_0}{\nmin} + \frac{i_0}{i_1} + H_{\nmin}^{i_1} - H_{\nmin}^{2i_0}\right) \ge 2\cdot \frac{H_{\nmin}^{i_1}}{S} - O(1/S),
\end{align*}
because $i_0\le \nmin$ and thus $H_{\nmin}^{2i_0} < 1$.
\end{enumerate}
It remains to plug the obtained estimates in (\ref{E:p2lbsum}):
\begin{align*}
\Exp [R_{\cA_{\text{det}}}^n] &=E(\nmin, i_1) + \sum_{j=1}^{k-1} E(i_j,i_{j+1}) + E(i_k, \nmax)\\
 &\ge (1.344 - O(1/\nmin))\cdot\frac{H_\nmin^{i_1} + \sum_{j=1}^{k-1} H_{i_j}^{i_{j+1}-1} + H_{i_k}^\nmax}{S}-O(1/S) \\
&= 1.344 - O(1/S).
\end{align*}
Last, by Yao's principle, every randomized algorithm has a competitive ratio of at least $\Exp [R_{\cA_{\text{det}}}^n]$.
\end{proof}

\subsection{General Weights}
Algorithm $\cA_x$ can be adapted to weighted sequences $X=w_1,w_2,\dots,w_n$ of positive integers as follows:  
$X$ can be thought of as a sequence of $W_n=\sum_{i=1}^n w_i$ unit weight requests and we simulate $\cA_x$ on this unit
weight sequence. Whenever $\cA_x$ attempts to place a separator, but the position does not fall at the end of a
weight $w_i$, the separator is placed after $w_i$.

If the weights of the sequence are bounded, algorithm $\cA_x$ can be analyzed similarly as Theorem~\ref{T:identicalupper}, 
by treating all requests as unit weights. This introduces an additional error term: 
\begin{corollary} \label{cor:lowarbitraryweights}
There is a constant value of $x\approx 3.052$ such that $E[R_{\cA_x}^n] \approx 1.344 + O(B/W_n)$ for any sequence $X=w_1,w_2,\dots,w_n$ with weights $w_i \le B$.
\end{corollary}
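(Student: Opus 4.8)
The plan is to reduce to Theorem~\ref{T:identicalupper} by regarding $X$ as an all-ones sequence of length $W_n=\sum_i w_i$ and bounding the loss caused by snapping separator positions to weight boundaries. First I would record the optimum bound: since the $w_i$ are positive integers, every prefix sum $W_j=\sum_{i\le j}w_i$ is an integer, so any split of $X$ leaves a part of weight at least $\lceil W_n/2\rceil$; hence $\mathrm{opt}(X)\ge\lceil W_n/2\rceil$, which is exactly the optimal bottleneck value of the all-ones sequence of length $W_n$.

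Next I would couple the weighted simulation with a run of $\cA_x$ on the all-ones sequence of length $W_n$ using the same random $\delta$. The bottleneck value $r_{\cA_x}^{W_n}$ of that all-ones run equals $\max(s,W_n-s)$, where $s$ is the final separator position, i.e.\ the largest value $\lceil x^{i+\delta}\rceil\le W_n$ (if there is no such value, $W_n$ is bounded by an absolute constant, the competitive ratio is trivially at most $2$, and the claimed bound holds --- as it does whenever $B/W_n=\Omega(1)$). In the weighted run the separator is instead placed after the element $\ell$ with $W_{\ell-1}<s\le W_\ell$, so the first weighted part has weight $W_\ell\in[s,\,s+B)$ and the second has weight $W_n-W_\ell\le W_n-s$; thus each part deviates from the corresponding all-ones part by less than $B$, giving pointwise in $\delta$ the inequality $b_w=\max(W_\ell,W_n-W_\ell)\le r_{\cA_x}^{W_n}+B$ for the weighted bottleneck $b_w$.

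Taking expectations and combining the two bounds,
\[
\Exp[R_{\cA_x}^n]=\Exp\!\left[\frac{b_w}{\mathrm{opt}(X)}\right]\le\Exp\!\left[\frac{r_{\cA_x}^{W_n}+B}{\lceil W_n/2\rceil}\right]=\Exp[R_{\cA_x}^{W_n}]+\frac{B}{\lceil W_n/2\rceil},
\]
and Theorem~\ref{T:identicalupper} (with length parameter $W_n$ and the same $x\approx 3.052$) gives $\Exp[R_{\cA_x}^{W_n}]\approx 1.344+\Order(W_n^{-1})$; since $B\ge 1$ the $\Order(W_n^{-1})$ term is absorbed into $\Order(B/W_n)$, which is the claim.

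I expect the only genuinely delicate point to be the pointwise inequality $b_w\le r_{\cA_x}^{W_n}+B$: one must check that rounding the separator to the next weight boundary moves it by strictly less than $B$ in total weight (using $w_i\le B$ and the characterization $W_{\ell-1}<s\le W_\ell$), that this perturbs each of the two part-weights additively by less than $B$ rather than multiplicatively, and that the degenerate case where $\cA_x$ never places a separator is harmless. The rest is a verbatim reading of the proof of Theorem~\ref{T:identicalupper} with its $\Order(1)$ rounding terms replaced by $\Order(B)$.
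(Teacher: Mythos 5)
Your argument is correct and is exactly the reduction the paper sketches for this corollary: simulate $\cA_x$ on the sequence of $W_n$ unit requests, observe that snapping the separator to the end of the element containing position $s$ shifts the split point by less than $B$ in weight (so $b_w\le\max(s,W_n-s)+B$), and divide by $\mathrm{opt}(X)\ge\lceil W_n/2\rceil$ to inherit the bound of Theorem~\ref{T:identicalupper} with an extra $O(B/W_n)$ term. The handling of the degenerate no-separator case and the absorption of $O(W_n^{-1})$ into $O(B/W_n)$ are both fine.
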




When arbitrary weights are possible, a non-trivial bound can still be proved. Interestingly, the optimal gap size between the separator positions
is larger than in the case of unweighted sequences. 

\begin{theorem}\label{T:weightedp2}
There is a constant value of $x\approx 5.357$ such that $\Exp [R_{\cA_x}^X] \le  1.627 + O(W_n^{-1})$ for each sequence of total weight $W_n$.
\end{theorem}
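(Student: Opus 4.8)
The plan is to mirror the proof of Theorem~\ref{T:identicalupper}, but to analyse where the single separator lands in terms of \emph{accumulated weight} rather than of the unit position. Write $W := W_n$, let $W_j = w_1 + \dots + w_j$ be the prefix sums ($W_0 = 0$, $W_n = W$), and set $s^+(\tau) := \min\{W_j : W_j \ge \tau\}$. When $\cA_x$ is simulated on the $W$ unit requests, the separator is reset for the last time when the accumulated weight crosses the largest threshold $T^\ast = \lceil x^{i^\ast+\delta}\rceil \le W$, and it ends up at the prefix sum $s^+(T^\ast)$. The first step is to verify that the substitution $u := \log_x(W/T^\ast)$ turns the uniform $\delta\in(0,1)$ into a variable that is uniform on $[0,1)$ up to an $O(1/W)$ total-variation error (caused by the ceilings — note $x^{i^\ast+\delta} > (W-1)/x$ — and by negligibly small $W$), so that $T^\ast = Wx^{-u}$ with $u\sim U[0,1]$. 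Writing $m(\tau) := \max(s^+(\tau),\, W-s^+(\tau))$ for the bottleneck value produced by a separator at $s^+(\tau)$, and using $\mathrm{opt}\ge W/2$ to absorb the additive error, this gives
\begin{equation*}
\Exp[R^X_{\cA_x}]\;=\;\frac{1}{\mathrm{opt}}\int_0^1 m(Wx^{-u})\,du + O(W^{-1})\;=\;\frac{1}{\mathrm{opt}\cdot\ln x}\int_{W/x}^{W}\frac{m(\tau)}{\tau}\,d\tau + O(W^{-1}).
\end{equation*}

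Now fix the input, put $\theta := \mathrm{opt}/W \in [\tfrac12,1]$, and recall that every $w_i \le \mathrm{opt}$ and that some prefix sum lies in $[W-\mathrm{opt},\mathrm{opt}]$. Two observations reduce the problem. First, no prefix sum can lie in the \emph{open} interval $(W-\mathrm{opt},\mathrm{opt})$: such a $p$ would give split value $\max(p,W-p)<\mathrm{opt}$, contradicting optimality — hence the ``median'' prefix sum equals $W-\mathrm{opt}$ or $\mathrm{opt}$, and is followed by a jump of length $\le \mathrm{opt}$. Second, an exchange argument pins down the worst input: (i) on $[0,W-\mathrm{opt}]$ one has $m(\tau)=W-s^+(\tau)$, so inserting prefix sums there only lowers $s^+$ and raises the integrand, and we may take $[0,W-\mathrm{opt}]$ to be ``dense'' with $W-\mathrm{opt}$ itself a prefix sum; (ii) for $\tau>W-\mathrm{opt}$ the next prefix sum $j_1\in[\mathrm{opt},W]$ is reached and $m=s^+\ge W/2$ there, and a short computation shows $\int_{W-\mathrm{opt}}^{W}\tfrac{m(\tau)}{\tau}\,d\tau$ is maximised at $j_1=W$, i.e.\ the sequence is ``many unit elements, then one element of weight $\mathrm{opt}$''; (iii) of the two admissible median positions, $W-\mathrm{opt}$ dominates $\mathrm{opt}$. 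For this canonical input $m(Wx^{-u})=W$ when $Wx^{-u}>W-\mathrm{opt}$ (i.e.\ $u<\log_x\tfrac{1}{1-\theta}$) and $m(Wx^{-u})=W(1-x^{-u})$ otherwise, whence
\begin{equation*}
\frac{1}{\mathrm{opt}}\int_0^1 m(Wx^{-u})\,du \;=\; \frac{\ln x - 1 + \theta + 1/x}{\theta\,\ln x},
\end{equation*}
which is decreasing in $\theta$ (the numerator's $\theta$-derivative argument uses $\ln x - 1 + 1/x>0$ for $x>2$) and hence maximised at $\theta=\tfrac12$, where it equals $2-\tfrac{x-2}{x\ln x}$.

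It remains to choose $x$. Minimising $g(x):=2-\tfrac{x-2}{x\ln x}$ over $x>2$ leads to $2\ln x = x-2$, i.e.\ $x=e^{(x-2)/2}$, whose root is $x_{\min}\approx 5.357$; there $x-2=2\ln x$, so $g(x_{\min}) = 2 - 2/x_{\min} \approx 1.627$. The regime $\theta \ge 1-1/x$ (where $Wx^{-u}>W-\mathrm{opt}$ for all $u$) is easier and yields at most $\tfrac{x}{x-1}<1.627$. Together these give the claimed bound.

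The main obstacle I expect is steps (ii)--(iii): the naive pointwise estimate $m(\tau)\le\max\bigl(W-\tau,\ \min(\tau+\mathrm{opt},W)\bigr)$ is too weak — it already loses a factor near $1.81$ — because it behaves as if a heavy element sat under \emph{every} $\tau$, whereas a single large jump occupies only a bounded $\log$-length of $[W/x,W]$. Turning the monotonicity/exchange heuristics above into a fully rigorous bound for arbitrary sequences, while keeping all rounding losses inside $O(W^{-1})$, is where the real care is needed; an alternative is to dispense with the exchange argument and, exactly as in Theorem~\ref{T:identicalupper}, split $(0,1)$ into subintervals according to the position of $Wx^{-u}$ relative to $W-\mathrm{opt}$, $W/2$ and $\mathrm{opt}$ and bound the competitive ratio on each subinterval separately.
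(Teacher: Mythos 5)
Your proposal is correct and arrives at exactly the paper's bound by essentially the paper's route: both proofs reduce to the canonical worst case ``$W/2$ unit requests followed by one request of weight $W/2$'' and then minimise the very same function, since your $g(x)=2-\frac{x-2}{x\ln x}$ is identical to the paper's $2+\frac{2}{x\ln x}-\frac{1}{\ln x}$ and your stationarity condition $x-2=2\ln x$ is the paper's $x_{\min}=-2W_{-1}(-\tfrac{1}{2e})\approx 5.357$. The difference is organisational: the paper keeps the straddling element $w_m$ and the tail $B$ as explicit parameters ($\alpha_1,\alpha_2,w',w''$) and grinds through four $\delta$-ranges in three cases, whereas you collapse everything into one integral parameterised by $\theta=\mathrm{opt}/W$. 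The gap you worry about in steps (ii)--(iii) is not actually there: the two structural facts you already state (no prefix sum lies in the open interval $(W-\mathrm{opt},\mathrm{opt})$, and every element has weight at most $\mathrm{opt}$) imply the \emph{pointwise} domination $m(\tau)\le W-\tau+O(1)$ for all $\tau\le W-\mathrm{opt}$ --- because $s^+(\tau)$ is then either at most $W-\mathrm{opt}$ (giving $m=W-s^+(\tau)\le W-\tau$) or equal to the smallest prefix sum $\ge\mathrm{opt}$, which in that sub-case must equal $\mathrm{opt}$ itself (otherwise the split at the preceding prefix sum would beat $\mathrm{opt}$), giving $m=\mathrm{opt}\le W-\tau$ --- and trivially $m(\tau)\le W$ for $\tau>W-\mathrm{opt}$. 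This is sharper than the estimate $\max(W-\tau,\min(\tau+\mathrm{opt},W))$ you rightly dismiss, and it makes the exchange argument unnecessary: the canonical integrand dominates every input with the same $\theta$ pointwise, so your computation goes through rigorously with all rounding inside $O(W^{-1})$.
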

\begin{proof}
Let $X=w_1,w_2,\dots,w_n$ be the input sequence of total weight $W_n$, and let $m = \argmin_{m'} \sum_{i \le m'} w_i \ge \frac{W_n}{2}$. Then,
$w_m$ is the {\em central weight} of the sequence, and we denote $\frac{W_n}{2}$ the {\em central point}. We will argue first that replacing all 
$w_i$ left of $w_m$ by a sequence of $\sum_{i < m} w_i$ unit
requests, and replacing all $w_i$ right of $w_m$ by a single large request of weight $\sum_{i > m} w_i$ worsens the approximation factor of the 
algorithm. Indeed, suppose that the algorithm attempts to place a separator at a position $j$ that falls on an element $w_i$, which is located left
of $w_m$. Then the algorithm places the separator after $w_i$, which brings the separator closer to the center and thus improves the partitioning.
Similarly, suppose that the algorithm attempts to place a separator at position $j$ that falls on an element $w_i$, which is located right of $w_m$.
By replacing all weights located to the right of $w_m$ by a single heavy element, the algorithm has to place the separator at the end of the sequence, 
which gives the worst possible approximation ratio. Thus, we suppose from now on that $X$ is of the form $X = 1 \dots 1 w_m B$, where $B$ may be non-existent.

Assume that the central point splits the request with weight $w_m$ into two parts $w',w''\ge 0$, such that $w' + \sum_{i < m} w_i = \frac{W_n}{2}$. 
Clearly, the optimal bottleneck value is $opt=\frac{W_n}{2} + \min\{w',w''\}$. 
Further assume that $W_n=2x^{i+\alpha}$ for $\alpha\in [0,1)$ and $i\in \N$. Then, $\frac{W_n}{2}=x^{i+\alpha}$ is the central point. 
Let $\alpha_1, \alpha_2$ be such that $x^{i+\alpha_1}$ is the starting point of weight $w_m$, and 
$x^{i+\alpha_2}$ is the starting point of weight $B$. Note that $\alpha_2$ is non-negative, but $\alpha_1$ can be negative. However, if $\alpha_1<0$, we can replace $\alpha_1=0$ without decreasing the approximation ratio, because in both cases the algorithm places the separator after $w_m$, while in the case of  negative $\alpha_1$ the optimum can only be larger than when $\alpha_1=0$. Thus, we assume w.l.o.g. that $\alpha_1\ge 0$, so $\alpha_1\in [0, \alpha]$ and $\alpha_2\in [\alpha, \alpha+\log_x 2]$.


Again, we consider several cases. In the estimates below, we ignore the rounding terms as they are all 
$O(1)$, and hence the error term in the approximation ratio is $O(opt^{-1})=O(W_n^{-1})$. 
Also, we use $W$ in place of $W_n$ for brevity.

\vspace{0.1cm}
\noindent \textbf{Case 1:}  $\alpha< 1-\log_x {2}$. In this case, $x^{i+1}>W$. So for all $\delta\in [0,\alpha+\log_x 2]$, $x^{i+\delta}$ is the position where the last separator would be (in the unit weights case). However, when $\delta>\alpha+\log_x 2$ then $x^{i+\delta} > W$ and $x^{i+\delta-1}<W/2$ is the last separator.

Here we have:
\begin{itemize}
\item{$\delta\in [0,\alpha_1]$: $r_{\cA_x}= W-x^{i+\delta}$ (all unit weights before $w_m$).}
\item{$\delta\in [\alpha_1, \alpha_2]$: the separator is placed after $w_m$, and $r_{\cA_x} = W/2 + w''$.}
\item{$\delta\in [\alpha_2, \alpha+\log_x 2]$: assume the worst case bound $r_{\cA_x} \le W$.}
\item{$\delta\in [\alpha + \log_x 2, 1]$:  $r_{\cA_x} = W-x^{i+\delta-1}$, as the last separator is at $x^{i+\delta-1} < x^i \le x^{i+\alpha_1}$.}
\end{itemize}
Computing the expectation gives:
\begin{align}
E[R^X_{\cA_x}]&\le \int_0^{\alpha_1} \frac{W-x^{i+\delta}}{opt} d\delta +
 \nonumber \int_{\alpha_1}^{\alpha_2} {\frac{W/2 + w''}{opt}} d\delta 
 +\int_{\alpha_2}^{\alpha + \log_x 2} {\frac{W}{opt}} d\delta \\
\nonumber &+ \int_{\alpha + \log_x 2}^1 \frac{W-x^{i+\delta-1}}{opt} d\delta \\
&= \frac{1}{opt}\left(W(1+\frac{1}{x\ln x}) -\alpha_2(W/2-w'') + \alpha_1(W/2-w'')-\frac{x^{i+\alpha_1}}{\ln x} \right).\label{E:case11pre}
\end{align}
First, assume that $w'' \le w'$, which implies that $\alpha_1 \le \log_x (W/2 - w'')-i$. In this case $opt=W/2+w''$. Let us see which values of $\alpha_1$ maximize the term $\phi(\alpha_1)=(\frac{W}{2}-w'')\alpha_1 -\frac{x^{i+\alpha_1}}{\ln x}$ in the parentheses. We have $\phi'(\alpha_1)=\frac{W}{2}-w'' - x^{i+\alpha_1}\ge 0$, since $\alpha_1 \le \log_x (W/2 - w'')-i$ by assumption (that $w''\le w'$). So $\phi(\alpha_1)$ is increasing in the interval $[0,\log_x (W/2 - w'')-i]$, and $\alpha_1=\log_x (W/2 - w'')-i$ is the maximizer. Plugging this value and $\alpha_2=\log_{x}(W/2+w'')-i$ in the expectation formula and rearranging the terms gives:
\begin{align*}
E[R^X_{\cA_x}] \le \frac{W(1+\frac{1}{x\ln x}) + (\frac{W}{2}-w'')(\log_x (\frac{W}{2}-w'') - \log_x (\frac{W}{2}+w'') -\frac{1}{\ln x})}{\frac{W}{2}+w''}.
\end{align*}
Note that the right hand side is a decreasing function of $w''$ and the maximum is achieved when $w''=0$:
\begin{equation}\label{E:case111}
E[R^X_{\cA_x}] \le 2+\frac{2}{x\ln x}-\frac{1}{\ln x}.
\end{equation}
Now assume that $w''\ge w'$. Then $opt=W/2+w'$. By plugging  the value $\alpha_2=\log_x (W/2+w'')-i$ in (\ref{E:case11pre}) we see again that the expression is a decreasing function of $w''$, so it is maximum when $w''$ takes its minimum value $w''=w'$:
\begin{align*}
E[R^X_{\cA_x}]&\le 
\frac{W(1+\frac{1}{x\ln x}) +(\frac{W}{2}-w')(\log_x(\frac{W}{2}-w')-\log_x(\frac{W}{2}+w')-\frac{1}{\ln x})}{\frac{W}{2}+w'}.
\end{align*}
where we also used $x^{\alpha_1+i}=W/2-w'$. This is again a decreasing function of $w'$ and gives exactly the same bound (\ref{E:case111}).

\vspace{0.1cm}
\noindent \textbf{Case 2:} $\alpha > 1- \log_x 2$. Here $x^{i+\delta+1} < W$ for all $\delta\in [0,\alpha+\log_x 2-1]$, so we assume that the algorithm places the separator at the end of the sequence for such $\delta$.
Here we have to distinguish two sub-cases.

\vspace{0.1cm}
\noindent \textbf{Case 2.1:} $\alpha_1 > \alpha+\log_x 2-1$. We have:
\begin{itemize}
\item{$\delta\in [0,\alpha+\log_x 2-1]$: $r_{\cA_x}\le W$.}
\item{$\delta\in [\alpha+\log_x 2-1, \alpha_1]$:  $r_{\cA_x} = W-x^{i+\delta}$.}
\item{$\delta\in [\alpha_1, \alpha_2]$:  $r_{\cA_x} = W/2 + w''$.}
\item{$\delta\in [\alpha_2, 1]$:  $r_{\cA_x} \le W$.}
\end{itemize}

Computing the expectation gives:
\begin{align*}
E[R^X_{\cA_x}]&\le \int_0^{\alpha+\log_x 2-1} \frac{W}{opt} d\delta +
 \int_{\alpha+\log_x 2-1}^{\alpha_1} {\frac{W-x^{i+\delta}}{opt}} d\delta +
 \int_{\alpha_1}^{\alpha_2} {\frac{W/2 + w''}{opt}} d\delta + \int_{\alpha_2}^1 \frac{W}{opt} d\delta \\
&= \frac{1}{opt}\left(W(1+\frac{1}{x\ln x}) -\alpha_2(W/2-w'') + \alpha_1(W/2-w'')-\frac{x^{i+\alpha_1}}{\ln x} \right).
\end{align*}
The latter is the same formula as in Case 1, so (\ref{E:case111}) holds in this case too.

\vspace{0.1cm}
\noindent \textbf{Case 2.2:} $\alpha_1 \le \alpha+\log_x 2-1$. In the analysis below, we assume that $r_{\cA_x}$ takes the worst-case value $W$ for all $\delta\in [0,\alpha+\log_x 2-1]$. In this case, we assume w.l.o.g. that $\alpha_1=\alpha+\log_x 2-1$, because the optimum can only become worse when $\alpha_1$ is smaller, while the algorithm (as we assume) will not profit. Thus, $x^{i+\alpha_1}=W/x$ and $\alpha_1=\log_x W - 1 - i$.
We have:
\begin{itemize}
\item{$\delta\in [0,\alpha_1]$: $r_{\cA_x}\le W$.}
\item{$\delta\in [\alpha_1, \alpha_2]$:  $r_{\cA_x} = W/2 + w''$.}
\item{$\delta\in [\alpha_2, 1]$:  $r_{\cA_x} \le W$.}
\end{itemize}

Computing the expectation gives:
\begin{align}
E[R^X_{\cA_x}]&\le \int_0^{\alpha_1} \frac{W}{opt} d\delta +
 \int_{\alpha_1}^{\alpha_2} {\frac{W/2 + w''}{opt}} d\delta + \int_{\alpha_2}^1 \frac{W}{opt} d\delta \nonumber\\
&= \frac{1}{opt}\left(W +\alpha_1(W/2-w'') - \alpha_2(W/2-w'') \right)\nonumber\\
&=\frac{1}{opt}\left(W+(W/2-w'')(\log_x \frac{W}{W/2+w''} - 1)\right)\label{E:thelastcase},
\end{align}
where we used $\alpha_1=\log_x W - 1 - i$ and $\alpha_2=\log_x(W/2+w'')-i$.

Now, if $w''\le w'$, then $opt=W/2+w''$. Plugging this value in (\ref{E:thelastcase}), we obtain the expression $(Z-1)\log_x Z + 1$ with $Z=\frac{W}{W/2+w''}$, which is a decreasing function of $w''$, so it is maximized for $w''=0$, in which case we have $E[R^X_{\cA_x}]\le 1+\log_x 2$.

On the other hand, if $w''>w'$, then $opt=W/2+w'=W/2 + W/2-x^{i+\alpha_1} = W-W/x$. Let us optimize the numerator of (\ref{E:thelastcase}) as a function of $w''$. The derivative is  
\[
-\log_x \frac{W}{W/2+w''}+1 + \frac{W/2-w''}{(W/2+w'')\ln x} = 1- \frac{1}{\ln x}\left(\frac{W}{W/2+w''} + \ln \frac{W}{W/2+w''}\right)
\]
Recall that $w''> w' = W/2-W/x$, so the expression above is at least $1-\frac{1}{\ln x}\left(\frac{1}{1-1/x} + \ln \frac{1}{1-1/x}\right)$ and is positive for all $x>3$, showing that the bound we obtained is an increasing function of $w''$ and is maximized when $w''=W/2$. Plugging the values of $opt$ and $w''=W/2$ in (\ref{E:thelastcase}), we obtain that $E[R^X_{\cA_x}] \le \frac{1}{1/2+1/x}$.

Thus, covering all cases, we obtain the bound $E[R^X_{\cA_x}]\le \max\{g(x), h(x), f(x)\} + O(W_n^{-1})$, where $g(x)\overset{def}{=} 2 +\frac{2}{x\ln x}-\frac{1}{\ln x}$, $h(x)=1+\log_x 2$, and $f(x)=\frac{1}{1/2+1/x}$.
 It can be shown as in Thm.~\ref{T:identicalupper} that $x_{min}=-2W_{-1}(-\frac{1}{2e})\approx 5.3567$ minimizes $g(x)$ and $g(x_{min})\approx 1.627$. It can also be checked that $\max\{h(x_{min}),f(x_{min})\}<1.5$.
\end{proof}

The competitive ratio $1.627$ is tight for algorithm $\cA_x$. This is achieved on sequences consisting of $W_n/2$ 
unit weight requests followed by a request of weight $W_n/2$. Furthermore, using a similar idea as for the unweighted case, 
a barely-random algorithm that uses only a single bit can be obtained with competitive ratio $1.75$ (analysis omitted).

Last, it can be seen that on sequences with exponentially increasing weights, no algorithm can achieve a competitive 
ratio better than $1.5$.

\begin{theorem} \label{T:weightedlower}
Every randomized algorithm for \textsc{Part} on sequences with arbitrary weights has an expected competitive ratio of at least $1.5$. 
\end{theorem}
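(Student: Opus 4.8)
The plan is to use Yao's Minimax principle \cite{y77} against a hard distribution supported on prefixes of the exponentially growing sequence $w_k = 2^{k-1}$, $k=1,2,\dots$. Write $W_i=\sum_{k=1}^i w_k = 2^i-1$ for the weight of the length-$i$ prefix; since the last element $w_i=2^{i-1}$ is just under half of $W_i$, the optimal partitioning places the separator immediately in front of it, so $opt_i=2^{i-1}$. Fix a large parameter $N$ and let the input be the length-$T$ prefix with $T$ drawn uniformly from $\{N+1,\dots,2N\}$. By Yao's principle it then suffices to show that every deterministic algorithm has expected ratio $\ge 3/2-o(1)$ under this distribution, where the $o(1)$ vanishes as $N\to\infty$.

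Fix a deterministic algorithm, and for every $i$ let $\tau_i$ be the index such that after processing $w_i$ the (single) separator lies immediately after $w_{\tau_i}$, with $\tau_i=i$ meaning ``at the current position'' and a separate convention for ``no separator present''. The key structural fact, forced by the preemptive model, is that at each step the separator can only be kept in place or moved to the current request, i.e.\ $\tau_i\in\{\tau_{i-1},i\}$ (and the ``no separator'' option can only hurt). A direct computation of the bottleneck value $\max(W_{\tau_i},W_i-W_{\tau_i})$ then yields, for the ratio $r_i$ of the algorithm at time $i$: $r_i=1$ exactly when $\tau_i=i-1$; $r_i=3/2$ when $\tau_i=i-2$; and $r_i\ge 7/4$ in every remaining case (namely $\tau_i=i$, ``no separator'', or $\tau_i\le i-3$). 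In particular $r_i\ge 3/2$ whenever $\tau_i\ne i-1$, and all these weights are integers so no rescaling is needed and the $r_i$ are computed exactly.

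The crux is the link between consecutive times. If $r_i=1$ then $\tau_i=i-1$, and since $i-1\ne i$ the only way to be in this state is $\tau_{i-1}=i-1$; but $\tau_{i-1}=i-1$ means that at time $i-1$ the separator sat at the current position, so $r_{i-1}=W_{i-1}/opt_{i-1}=(2^{i-1}-1)/2^{i-2}=2-2^{2-i}\ge 3/2$ for $i$ large. Hence no two consecutive times are ``good'' (have $r=1$), and every good time $i$ has a distinct predecessor $i-1$ with $r_{i-1}\ge 2-o(1)$. Letting $G\subseteq\{N+1,\dots,2N\}$ be the set of good times and $g=|G|$, the at least $N-2g$ remaining indices have $r_i\ge 3/2$, so
\[ \sum_{i=N+1}^{2N} r_i \;\ge\; g\cdot 1 \;+\; (g-1)\cdot(2-o(1)) \;+\; (N-2g+1)\cdot\tfrac32 \;=\; \tfrac32 N - O(1) - o(N), \]
the $-1$ and $O(1)$ absorbing the single predecessor that may fall just outside the range. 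Dividing by $N$, the expected ratio under the distribution is $\ge 3/2-o(1)$; letting $N\to\infty$ and invoking Yao's principle shows every randomized algorithm has expected competitive ratio $\ge 3/2$, as claimed. (Incidentally, for a deterministic algorithm one does not even need the averaging: $\tau_i=i-1$ cannot hold for two consecutive $i$, so some $r_i\ge 3/2$ already.)

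I expect the main obstacle to be the careful justification of the constraint $\tau_i\in\{\tau_{i-1},i\}$ directly from the definition of the preemptive model, together with the (routine but exhaustive) case check that $\tau_i\ne i-1$ forces $r_i\ge 3/2$; once these are settled, the combinatorial counting is immediate and every rounding/boundary correction is manifestly lower order. It is also worth recording explicitly that, unlike Theorem~\ref{T:identicallower}, this bound genuinely exploits the freedom to choose large, exponentially increasing weights: on all-ones sequences the gap between consecutive achievable bottleneck values is too small for this argument.
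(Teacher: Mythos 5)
Your proposal is correct and follows essentially the same route as the paper: the same hard family of exponentially doubling sequences, a uniform distribution over prefix lengths in a range $[N,2N]$, Yao's principle, and the same key case analysis (a ratio of $1$ at time $i$ forces the separator to have been placed at time $i-1$, where the ratio was $\approx 2$, while all other configurations give ratio $\ge 3/2$). The paper organizes the sum into blocks between consecutive separator placements rather than pairing each "good" index with its predecessor, but this is only a cosmetic difference in bookkeeping.
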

\begin{proof}
 Let $S_i = 2^0, 2^1, 2^2, \dots, 2^{i-1}$ denote the exponentially increasing sequence of length $i$, let $x_{\text{min}}, x_{\text{max}}$ be
 large integers with $x_{\text{max}} \ge 2 x_{\text{min}}$, and let $X = \left[ x_{\text{min}}, x_{\text{max}} \right]$. We consider the performance
 of any deterministic algorithm $\cA_{\det}$ on the uniform input distribution on set $\mathcal{S} = \{ S_i \, : \, i \in X \}$. The result for
 randomized algorithms then follows by applying Yao's lemma.
 
 Let $J$ be the set of requests at which $\cA_{\det}$ places a separator on input $S_{x_{\text{max}}}$. Note that the set of separator positions on 
 any other input of $\mathcal{S}$ is a subset of $J$. Let $I = J \cap X = \{ i_1, \dots, i_k \}$ be the separators placed within the interval $X$ (ordered
 such that $i_j < i_{j+1}$, for every $j$) and let $i_0$ be the right-most separator placed before $x_{\text{min}}$. 
 W.l.o.g., we can savely assume that $i_0 = x_{\text{min}} - 1$, since this does not worsen the algorithm (it is the optimal choice for $S_{x_{\text{min}}}$). 
 Furthermore, we can also assume that $x_k < x_{\text{max}}$, since placing a separator at request $x_{\text{max}}$ gives the worst ratio possible for $S_{x_{\text{max}}}$.
 
 We bound now the expected competitive ratio of $\cA_{\det}$, where the expectation is taken over the inputs $\mathcal{S}$. To this end, notice that 
 the optimal bottleneck value $OPT_i$ on sequence $S_i$ is $OPT_i = 2^{i-1}$. We write $R_{\cA_{\det}}^{i}$ to denote the competitive ratio of $\cA_{\det}$
 on sequence $S_i$. Then:
 
 \begin{eqnarray*}
  \Exp_{S_i \gets \mathcal{S} } R_{\cA_{\det}}^{i} & =  & \frac{1}{x_{\text{max}} - x_{\text{min}} + 1} \cdot \sum_{n = x_{\text{min}}}^{x_{\text{max}}} R_{\cA_{\det}}^{n} \ , \mbox{ and } \\   
  \sum_{n = x_{\text{min}}}^{x_{\text{max}}} R_{\cA_{\det}}^{n} & = & \underbrace{ \sum_{n = x_{\text{min}}}^{i_1 - 1} R_{\cA_{\det}}^{n}}_{I} + \sum_{n = i_1}^{i_2 - 1} R_{\cA_{\det}}^{n} + 
  \dots + \sum_{n = i_{k-1}}^{i_k - 1} R_{\cA_{\det}}^{n} + \underbrace{\sum_{n = i_k}^{x_{\text{max}}} R_{\cA_{\det}}^{n}}_{II} \ .
 \end{eqnarray*}
 We now bound $I$, $II$, and $\sum_{n = i_j}^{i_{j+1} - 1} R_{\cA_{\det}}^{n}$ for every $1 \le j \le k-1$, separately.
 \begin{enumerate}
  \item  First, observe that for every $1 \le j \le k-1$, we have $R_{\cA_{\det}}^{i_j} = \frac{2^{i_j} - 1}{2^{i_j-1}} = 2 - \frac{1}{2^{i_j - 1}}$, $R_{\cA_{\det}}^{i_j + 1} = 1$ (if $i_j + 1 < i_{j+1}$), and in general for every $a \ge 2$ with $i_j +a  < i_{j+1}$, $R_{\cA_{\det}}^{i_j+ a} = \frac{2^{i_j} + 2^{i_j+1} + \dots + 2^{i_j + a-1}}{2^{i_j + a-1}} \ge 1.5$.  Hence, $\sum_{n = i_j}^{i_{j+1} - 1} R_{\cA_{\det}}^{n}  \ge (1.5 - \frac{1}{2^{i_j-1}})(i_{j+1} - i_j) \ge (1.5 - \frac{1}{2^{x_{\text{min}}-1}})(i_{j+1} - i_j)$.
 \item  Concerning $I$, first notice that if $i_1 = x_{\text{min}}$, then $I = 0$. By similar considerations as above, if $i_1 = x_{\text{min}} + a$, then
 $I = \sum_{b=0}^{a-1} \frac{2^{x_{\text{min}}-1} + 2^{x_{\text{min}}} + 2^{x_{\text{min}}+1} + \dots + 2^{x_{\text{min}}+b-1}}{2^{x_{\text{min}} + b - 1}}$ which is at least $1.5 \cdot a$, if
 $a \ge 4$.
  \item Last, concerning $II$, this case is identical to the first case, and we can bound $II$ by $II \ge (1.5 - \frac{1}{2^{x_{\text{min}}-1}})(x_{\text{max}} - i_k)$. 
 \end{enumerate}
 Thus, we can bound $\Exp_{i \gets X} R_{\cA_{\det}}^{i}$ by:
 \begin{eqnarray*}
  \Exp_{S_i \gets \mathcal{S}} R_{\cA_{\det}}^{i} & \ge & \frac{1}{x_{\text{max}} - x_{\text{min}} + 1} \left( (x_{\text{max}} - x_{\text{min}} + 1-4) (1.5 - \frac{1}{2^{x_{\text{min}}-1}})  + 4 \cdot 1 \right) \\
  & = & 1.5 - \Order(\frac{1}{x_{\max}}) .
 \end{eqnarray*}
Thus, since $x_{\text{max}}$ can be chosen arbitrarily large, every deterministic algorithm has an expected competitive ratio of $1.5$. The result for
randomized algorithms follows by applying Yao's principle.
\end{proof}

\section{Partitioning with Arbitrary Number of Partitions} \label{sec:anyp}
We give now our results for $\textsc{Part}$ for arbitrary $p$. We first give an algorithm 
and a lower bound that directly consider \textsc{Part}. Then, we address \textsc{Part} indirectly, by first
solving \textsc{Flow}. 

\subsection{Algorithm and Lower Bound for Part: Direct Approach}

\subsubsection{Algorithm}
We give now a deterministic $2$-competitive algorithm for \textsc{Part} for any number of partitions $p$ in the preemptive online model.

A building block of our algorithm is the \textsc{Probe} algorithm, which has previously been used for tackling \textsc{Part} \cite{f91,kms97,k16}.
Algorithm \textsc{Probe} takes an integer parameter $B$, which constitutes a potential bottleneck value, and traverses the input sequence from
left to right, placing separators such that partitions of maximal size not larger than $B$ are created. 
It is easy to see that \textsc{Probe} creates at most $p$ partitions if $B \ge B^*$, the optimal bottleneck value.
\begin{lemma}\label{lem:probe}
 Let $X$ be an integer sequence. If there exists a partitioning of $X$ into $p$ parts with maximum partition weight 
 $m$, then $\textsc{Probe}(m)$ creates at most $p$ partitions.
\end{lemma}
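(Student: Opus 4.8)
The plan is a standard ``greedy stays ahead'' argument. Let $1 = g_0 \le g_1 \le g_2 \le \dots$ denote the separator positions produced by $\textsc{Probe}(m)$, so that the $(j{+}1)$-th partition it creates spans requests $g_j, g_j + 1, \dots, g_{j+1} - 1$ and, by construction, is the longest prefix starting at position $g_j$ whose total weight does not exceed $m$. Let $1 = o_0 \le o_1 \le \dots \le o_p = n+1$ be the separators of the hypothesised partitioning of $X$ into $p$ parts, each of weight at most $m$. First I would record a reduction observation: the hypothesis forces $w_i \le m$ for every $i$, since each $w_i$ lies inside some part of that partitioning. Hence $\textsc{Probe}(m)$ never stalls — every partition it opens contains at least one request — and the positions $g_j$ are well defined until the sequence is exhausted.

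The core claim is that $g_j \ge o_j$ for every $j$ for which $g_j$ is defined, which I would prove by induction on $j$. The base case $g_0 = o_0 = 1$ is immediate. For the inductive step, assume $g_j \ge o_j$. Because the weights are non-negative, $\sum_{i=g_j}^{o_{j+1}-1} w_i \le \sum_{i=o_j}^{o_{j+1}-1} w_i \le m$, where the last inequality is the validity of the $(j{+}1)$-th part of the hypothesised partitioning. Since $\textsc{Probe}(m)$ extends its $(j{+}1)$-th part as far to the right as possible subject to the total weight staying at most $m$, it follows that $g_{j+1} \ge o_{j+1}$, which completes the induction.

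Finally I would apply the claim with $j = p$: either $\textsc{Probe}(m)$ has already terminated using fewer than $p$ parts, or $g_p$ is defined and $g_p \ge o_p = n+1$, which means that the first $p$ parts already cover all of $X$. In either case $\textsc{Probe}(m)$ creates at most $p$ partitions, as claimed.

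I do not expect a genuine obstacle here; the proof is short. The only two points that demand a little care, and that I would state explicitly rather than leave implicit, are (i) the reduction observation that no single weight exceeds $m$, so that the greedy procedure cannot get stuck, and (ii) the use of non-negativity of the weights in the inductive step to guarantee that shrinking the left endpoint of a summation range cannot increase the sum.
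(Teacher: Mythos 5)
Your argument is correct. The paper does not actually prove Lemma~\ref{lem:probe} --- it is stated as an easy/known fact about the \textsc{Probe} routine inherited from earlier work on \textsc{Part} --- and your greedy-stays-ahead induction ($g_j \ge o_j$ for all $j$) is exactly the standard argument one would supply; both of the care points you flag (no single weight exceeds $m$, and monotonicity of partial sums under shrinking the left endpoint) are the right ones to make explicit.
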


\begin{algorithm}
\caption{$2$-Approximation for \textsc{Part} \label{alg:preemptive-two-approx}}
 \begin{algorithmic}
   \STATE $w_i \gets 0$ for all $1 \le i \le p$ \COMMENT{current partitions}
   \STATE $S \gets 0$ \COMMENT{current total weight}
   \STATE $m \gets 0$ \COMMENT{current maximum}
   \WHILE{request sequence not empty}
     \STATE $x \gets$ next request
     \STATE $S \gets S + x$, $m \gets \max \{ m, x \}$
     \STATE $B \gets 2 \cdot \max \{ m, S / p \}$ \COMMENT{bottleneck val.}
     \STATE Run \textsc{Probe}$(B)$ on $w_1, w_2, \dots, w_p, x$ and update partition weights $w_1, \dots, w_p$ \label{line:192}
   \ENDWHILE 
 \end{algorithmic}
\end{algorithm}

Instead of running \textsc{Probe} directly on the input sequence, we will run \textsc{Probe} for a carefully chosen bottleneck value $B$ on 
the sequence $w_1, \dots, w_p, x$ of current partition weights $w_1, \dots, w_p$ followed by the value of the current request $x$. 
We will prove that this run of \textsc{Probe} creates at most $p$ partitions. If \textsc{Probe} places a subsequence 
$w_i, \dots, w_j$ of partition weights into the same partition, then the partition separators between current partitions $i, \dots, j$ are removed.
See Algorithm~\ref{alg:preemptive-two-approx} for details.

\begin{theorem} \label{thm:generalpub1}
 Algorithm~\ref{alg:preemptive-two-approx} is a deterministic $2$-competitive algorithm for \textsc{Part}.
\end{theorem}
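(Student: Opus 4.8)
The plan is to show two things: (i) the run of \textsc{Probe}$(B)$ in line~\ref{line:192} always produces at most $p$ partitions, so the algorithm is well defined and never holds more than $p-1$ separators; and (ii) at every step the bottleneck value of the maintained partitioning is at most $B = 2\max\{m, S/p\}$, which is at most $2 B^{*}$ since both $m$ and $S/p$ are lower bounds on the optimal bottleneck value $B^{*}$ of the prefix seen so far (and $B^{*}$ on the prefix is at most $B^{*}$ on the whole input).

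For (i), the key point is that after processing a request the multiset of current partition weights $w_1,\dots,w_p$ together with the new request $x$ forms a sequence whose total weight is exactly $S$, and every single element of this sequence has weight at most $B$: each $w_i$ is at most the current bottleneck value, which by induction is at most the value of $B$ from the \emph{previous} step, hence at most the current $B$ since $S$ and $m$ are non-decreasing; and $x \le m \le B/2 \le B$. Since there is a partitioning of this $(p+1)$-element sequence into $p$ parts of weight at most $B$ — namely, take $S/p \le B/2 \le B$ as a target, or more carefully just invoke that total weight $S \le p\cdot(B/2) + \text{(slack)}$ together with each element being at most $B$ — Lemma~\ref{lem:probe} gives that \textsc{Probe}$(B)$ creates at most $p$ parts. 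The cleanest way to certify the hypothesis of Lemma~\ref{lem:probe} is: greedily, any sequence of total weight $S$ with maximum element $\le B/2$ can be packed into $\lceil 2S/B \rceil \le \lceil S/(S/p)\rceil \cdot \tfrac{1}{2}\cdot 2 = p$ bins of capacity $B$ when $B \ge 2S/p$, because consecutive merging only stops a bin below $B/2$ when the next element would overflow, and each stopped bin has weight $> B/2 \ge S/p$, so at most $S/(S/p) = p$ of them... — this counting argument (a standard "first-fit on a line with items bounded by half the capacity" bound) is what I would write out carefully.

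For (ii), once we know \textsc{Probe}$(B)$ succeeds with at most $p$ parts, every partition it outputs has weight at most $B$ by definition of \textsc{Probe}, and the algorithm adopts exactly this partitioning (merging current partitions $i,\dots,j$ whenever their weights $w_i,\dots,w_j$ landed in one \textsc{Probe}-part); so the maintained bottleneck value is at most $B$ at all times, in particular at the final step. Since $m \le B^{*}$ (some optimal part must contain the heaviest element) and $S/p \le B^{*}$ (averaging), we get $B = 2\max\{m,S/p\} \le 2 B^{*}$, and the partitioning held at the end is over the entire input, so the algorithm is $2$-competitive. The main obstacle is purely the bin-counting step in (i): one must be slightly careful that merging the \emph{old} partition weights (rather than the original requests) does not blow up the count, which is exactly why the bound $B = 2\max\{m, S/p\}$ uses the factor $2$ — it leaves enough room for the \textsc{Probe} pass over the already-coarsened sequence to still fit in $p$ parts. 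I would isolate this as a short claim ("any weighted sequence of total weight $S$ with every element $\le B/2$ and $B \ge 2S/p$ admits a partitioning into at most $p$ parts each of weight $\le B$") and prove it by the greedy packing argument above, then combine with Lemma~\ref{lem:probe}.
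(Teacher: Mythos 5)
Your overall structure matches the paper's: show that the call to \textsc{Probe}$(B)$ on the sequence $w_1,\dots,w_p,x$ always yields at most $p$ partitions, then conclude $2$-competitiveness from $m\le B^{*}$ and $S/p\le B^{*}$. Part (ii) of your argument is fine. The gap is in part (i): the packing claim you propose to isolate (``every element $\le B/2$ and $B\ge 2S/p$ implies a partitioning into $p$ parts of weight $\le B$'') has a hypothesis that the input to \textsc{Probe} does not satisfy. You correctly note $x\le m\le B/2$, but the old partition weights $w_i$ are only bounded by the \emph{previous} value of $B$, hence by the current $B$ — not by $B/2$. For example, with $p=2$ on the all-ones sequence, after three requests the algorithm holds a single partition of weight $3=B$, which is strictly larger than $B/2=1.5$. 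So the ``first-fit with items at most half the capacity'' counting, whose whole point is that every closed bin has weight more than $B/2$, cannot be invoked, and your proof of (i) does not go through as written.

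The repair is to exploit that the input to \textsc{Probe} has only $p+1$ elements, rather than to bound individual element sizes. This is what the paper does: if \textsc{Probe}$(B)$ produced more than $p$ partitions on a $(p+1)$-element sequence, every element would be a singleton partition, so $w_i+w_{i+1}>B\ge 2S/p$ for all $1\le i\le p$; summing over the $\lfloor (p+1)/2\rfloor\ge p/2$ disjoint adjacent pairs gives
\[
S=\sum_{i=1}^{p+1}w_i\ \ge\ \sum_{i=1}^{\lfloor (p+1)/2\rfloor}\bigl(w_{2i-1}+w_{2i}\bigr)\ >\ \frac{p}{2}\cdot\frac{2S}{p}\ =\ S,
\]
a contradiction. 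Note that this argument is indifferent to how large any single $w_i$ is; it only uses the total weight and the number of elements. I recommend you replace your isolated packing claim with this pairing argument (and keep the rest of your write-up, which is essentially the paper's proof).
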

\begin{proof}
 First, suppose that the run of \textsc{Probe} Algorithm~\ref{alg:preemptive-two-approx} succeeds in 
every iteration. Let $S$ be the weight of the entire input sequence, and let $m$ be its maximum. Since the optimal bottleneck 
value $B^*$ is trivially bounded from below by $\max \{m, S/p \}$, and the bottleneck value employed in the last run of 
\textsc{Probe} 
is $B = 2 \cdot \max \{ m, S / p \}$, we obtain an approximation factor of $2$.

Denote $w_{p+1}=x$. It remains to prove that the run of \textsc{Probe} always succeeds, i.e., in every iteration of the algorithm, the 
optimal bottleneck value of the sequence $w_1, w_2, \dots w_p, w_{p+1}$ is at most $B = 2 \cdot \max \{ m, S / p \}$ (where $S$ is the 
current total weight of the input sequence, and $m$ the current maximum). Indeed, if \textsc{Probe}$(B)$ does not succeed in 
creating $p$ partitions, then $w_i + w_{i+1}> B \ge 2 S / p$ must hold for all $1 \le i \le p$. But then:
$$
 S  = \sum_{i=1}^{p+1} {w_i} \ge \sum_{i=1}^{\lfloor (p+1)/2 \rfloor} \left( w_{2i - 1} + w_{2i} \right) > \lfloor (p+1)/2 \rfloor \cdot 2 S / p \ge S,
$$
a contradiction, which proves correctness of the algorithm.
\end{proof}

\subsubsection{Lower Bound}
Next, we present a lower bound for \textsc{Part} that makes use of the discrete properties of integers (and does not apply to \textsc{Flow}). 
For a given $p \ge 2$, let $\sigma_1$ and $\sigma_2$ be the all-ones sequences of lengths $2p$ and $2p+1$, respectively. Consider first a
deterministic algorithm $\cA_{\text{det}}$. If $\cA_{\text{det}}$ computes an optimal solution on $\sigma_1$, i.e., a partitioning
consisting of $p$ partitions each of weight $2$, then $\cA_{\text{det}}$ on $\sigma_2$ is at least $\frac{4}{3}$-competitive, since 
the merging of any two partitions creates a bottleneck value of $4$ while the optimal bottleneck is $3$. On the other hand, 
if $\cA_{\text{det}}$ is not optimal on $\sigma_1$, then $\cA_{\text{det}}$ is at least $\frac{3}{2}$-competitive on $\sigma_1$. Thus, 
every deterministic algorithm for \textsc{Part} is at least $\frac{4}{3}$-competitive.

Consider now a randomized algorithm $\cA_{\text{rand}}$. Consider further the input distribution over $\{ \sigma_1, \sigma_2 \}$ so that 
$\sigma_1$ occurs with probability $\frac{2}{5}$, and $\sigma_2$ occurs with probability $\frac{3}{5}$. Let $p$ denote the 
probability that $\cA_{\text{rand}}$ outputs an optimal solution on $\sigma_1$. Then, the approximation ratio of $\cA_{\text{rand}}$ is bounded from below by:
$$
  \Pr \left[ \sigma_1 \mbox{ occurs} \right] \cdot \left( p \cdot 1 + (1-p) \cdot \frac{3}{2} \right) + \left[ \sigma_2 \mbox{ occurs} \right] \cdot \left( p \cdot \frac{4}{3} + (1-p) \cdot 1 \right) = \frac{6}{5} \ .
$$
We thus established the following theorem:
\begin{theorem} \label{thm:lbpart}
 Every deterministic (randomized) preemptive online algorithm for \textsc{Part} is at least $\frac{4}{3}$-competitive (resp. $\frac{6}{5}$-competitive).
\end{theorem}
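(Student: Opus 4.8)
The plan is the standard two-instance adversary argument, followed by Yao's principle for the randomized part. Fix $p \ge 2$ and take two all-ones sequences with $\sigma_1$ a prefix of $\sigma_2$: let $\sigma_1 = 1^{2p}$, chosen because its optimal partitioning is \emph{rigid} --- the only way to attain bottleneck value $2$ is to cut after every second request, i.e.\ to place the $p-1$ separators exactly at positions $2, 4, \dots, 2p-2$ --- and let $\sigma_2 = 1^{3p}$, so that $\mathrm{OPT}(\sigma_2) = 3$ whereas $\mathrm{OPT}(\sigma_1) = 2$. Since a deterministic algorithm behaves identically on $\sigma_1$ and on the length-$2p$ prefix of $\sigma_2$, it is in the same configuration after $2p$ requests in both runs, and I would split into two cases. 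If that configuration is \emph{not} the rigid optimum of $\sigma_1$, then some partition of $\sigma_1$ already has weight $\ge 3$ (the weights being integral), so the ratio on $\sigma_1$ is at least $3/2 > 4/3$. If it \emph{is} the rigid optimum, I claim the ratio on $\sigma_2$ is at least $4/3$, which finishes the deterministic bound.

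To prove that claim, the key use of preemption is that on the suffix (requests $2p+1, \dots, 3p$) the algorithm may only \emph{delete} separators from $\{2, 4, \dots, 2p-2\}$ and \emph{insert} new ones at positions $> 2p$; it can never place a separator at position $2p-1$ or $2p$ --- a ``dead zone''. Hence in the final partitioning of $\sigma_2$ every separator sits at an even position $\le 2p-2$ or at a position $\ge 2p+1$, which forces every partition contained in $[1, 2p-2]$ to have \emph{even} weight. To reach bottleneck $3$ one would therefore need all such partitions to have weight exactly $2$, which consumes all $p-1$ separators and leaves the block covering the dead zone with weight $p+2$; cutting that block would need a separator one cannot afford. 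So the bottleneck on $\sigma_2$ is at least $4$ and the ratio is at least $4/3$. (It is worth checking that bottleneck exactly $4$ is achievable --- keep roughly half the old separators to form weight-$4$ blocks and spend the remainder on the suffix --- so the bound is tight, though only $\ge 4/3$ is needed here.)

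For the randomized bound I would apply Yao's principle to the input distribution that outputs $\sigma_1$ with probability $\tfrac{2}{5}$ and $\sigma_2$ with probability $\tfrac{3}{5}$, and bound the expected ratio of an arbitrary deterministic algorithm against it. According to whether the algorithm reaches the rigid optimum of $\sigma_1$ at time $2p$, its expected ratio is at least either $\tfrac{2}{5}\cdot 1 + \tfrac{3}{5}\cdot\tfrac{4}{3} = \tfrac{6}{5}$ or $\tfrac{2}{5}\cdot\tfrac{3}{2} + \tfrac{3}{5}\cdot 1 = \tfrac{6}{5}$; the mixture weights $\tfrac{2}{5}$ and $\tfrac{3}{5}$ are exactly those that equalize the two alternatives at $\tfrac{6}{5}$. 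Yao's principle then transfers this to all randomized algorithms.

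I expect the main obstacle to be the middle step: ruling out that some clever schedule of deletions and late insertions on $\sigma_2$ pushes the bottleneck below $4$. The care needed is that keeping old separators non-consecutively is perfectly allowed --- it only merges weight-$2$ blocks into weight-$4$ blocks --- so the argument must rest on the ``partitions contained in $[1, 2p-2]$ have even weight'' observation rather than on a naive claim that retaining a non-prefix of the old separators is harmful. Everything else --- the rigidity of the optimum of $\sigma_1$, the values $\mathrm{OPT}(\sigma_1) = 2$ and $\mathrm{OPT}(\sigma_2) = 3$, and the Yao arithmetic --- is routine.
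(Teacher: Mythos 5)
Your argument is correct and follows the same skeleton as the paper's proof: two all-ones instances sharing a prefix, a case split on whether the deterministic algorithm has reached the (unique) optimal configuration of the shorter instance after $2p$ requests, and Yao's principle with the $(\tfrac{2}{5},\tfrac{3}{5})$ distribution for the randomized bound. The substantive difference is your choice of second instance, $\sigma_2=1^{3p}$ instead of the paper's $1^{2p+1}$, together with the explicit ``dead zone'' analysis. This difference is not cosmetic: in the model as defined, the last partition is open-ended, so an algorithm sitting at the rigid optimum of $\sigma_1$ after $2p$ requests simply absorbs request $2p+1$ into its last block and achieves bottleneck $3=$ optimum on $1^{2p+1}$; no merge is forced, and the paper's assertion that ``the merging of any two partitions creates a bottleneck value of $4$'' does not apply to that algorithm. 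Your longer suffix genuinely forces the contradiction and thus repairs the argument rather than merely rephrasing it. The one step to state in the right order is the middle deduction: ``all blocks contained in $[1,2p-2]$ have weight exactly $2$'' by itself only pins the retained separators to $2,4,\dots,2m$ for some $m$; it is the straddling block's weight $\ge 2p+1-2m$ being at most $3$ that forces $m=p-1$ and hence exhausts the separator budget, leaving the final block of weight $p+2\ge 4$. You do invoke both facts, but the write-up should make clear that the second is what consumes all $p-1$ separators. The rigidity of the optimum of $1^{2p}$, the integrality argument in the other case, and the Yao arithmetic are all as in the paper.
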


\subsection{Algorithm and Lower Bound for Part: Indirect Approach Via Flow} \label{sec:flow}
\subsubsection{Algorithm}
\begin{algorithm}
 \begin{algorithmic}
 \REQUIRE Integer $p$, real number $x$
 \STATE $S_i \gets $ super-partition with base $i$, $\forall i \in [p]$
 \STATE $j \gets 1$
 \STATE $X_j \gets S_1$ \COMMENT{Initial conf. equals $S_1$}
 \FOR{$p-1$ times}
  \FOR{$i \gets 1 \dots p$}
   \STATE merge-next($S_i$)
   \STATE $X' \gets $ length $p$ prefix of $S_1, S_2, \dots, S_p$ 
   \IF{$X' \neq X_j$} \label{line:023}
    \STATE $j \gets j + 1$, $X_j \gets X'$
   \ENDIF
  \ENDFOR
 \ENDFOR
\end{algorithmic}
\caption{Periodic Scheme}\label{A:genflowp}
\end{algorithm}

We will first give a deterministic algorithm for \textsc{Flow}, which is roughly $1.68$-competitive, and then prove a lower bound of $1.08$.

In the following, we assume that $p$ is a power of two. We first give a scheme, which is based on a positive real number $x$ 
(we will set $x=2$ later), that defines a sequence of $p$-ary vectors $X_1, X_2, \dots$. Then, we will prove in Lemma~\ref{lem:conflow} 
how this scheme can be applied to \textsc{Flow}. To define our scheme, we require the concept of a {\em super-partition}:
\begin{definition}[Super-partition]
 For an integer $1 \le b \le p$ denoted the base, let $S = S_1, \dots, S_p$ with $S_1 = x^{\frac{b}{p}}$ 
 and $S_{i+1} = x^\frac{1}{p} S_{i}$ be the initial configuration of the super-partition. The super-partition
 evolves by merging in each step the pair of partitions whose sum is minimal, thus decreasing the length 
 of the super-partition by $1$ in every step (since we assumed that $p$ is a power of two, 
 overall we do $\log n$ merge sweeps from left to right).
\end{definition}

Our scheme is best explained via the evolution of $p$ super-partitions with different bases and is depicted in Algorithm~\ref{A:genflowp}. 
It outputs a sequence $X_1, X_2, \dots$, which corresponds to the evolution of the weights of the $p$ partitions in an algorithm for $\textsc{Flow}$.

To illustrate the scheme, as in Table~\ref{tab:flowscheme} in the introduction, we consider the case $p = 4$. Initially, four
super-partitions are in their initial state, and since $X_1$ equals the length $4$ prefix of the values of the super-partitions $S_1, S_2, \dots, S_p$,
we have that $X_1$ equals $S_1$ (in the following, the $X_i$ are highlighted in bold):
$$
 \overbrace{\bm{x^{\frac{1}{4}}, x^{\frac{2}{4}}, x^{\frac{3}{4}}, x^{\frac{4}{4}}}}^{S_1} \quad | \quad \overbrace{x^{\frac{2}{4}}, x^{\frac{3}{4}}, x^{\frac{4}{4}}, x^{\frac{5}{4}}}^{S_2} \quad | \quad \overbrace{x^{\frac{3}{4}}, x^{\frac{4}{4}}, x^{\frac{5}{4}}, x^{\frac{6}{4}}}^{S_3} \quad | \quad \overbrace{x^{\frac{4}{4}}, x^{\frac{5}{4}}, x^{\frac{6}{4}}, x^{\frac{7}{4}}}^{S_4}
$$
Then, merge-next advances super-partition $S_1$ into its next state, by merging the two lightest weights ($x^{\frac{1}{4}}$ and $x^{\frac{2}{4}}$ to $x^{\frac{1}{4}} + x^{\frac{2}{4}} = x^{\frac{1}{4}}(1 + x^{\frac{1}{4}})$). This makes the weight $x^{\frac{2}{4}}$ of super-partition $S_2$ advance to position $4$ and is thus included in $X_2$:
$$
 \bm{x^{\frac{1}{4}}(1+x^{\frac{1}{4}}), x^{\frac{3}{4}}, x^{\frac{4}{4}}} \quad | \quad \bm{x^{\frac{2}{4}}}, x^{\frac{3}{4}}, x^{\frac{4}{4}}, x^{\frac{5}{4}} \quad | \quad x^{\frac{3}{4}}, x^{\frac{4}{4}}, x^{\frac{5}{4}}, x^{\frac{6}{4}} \quad | \quad x^{\frac{4}{4}}, x^{\frac{5}{4}}, x^{\frac{6}{4}}, x^{\frac{7}{4}}
$$
Next, super-partition $S_2$ is advanced, which gives $X_3$:
$$
 \bm{x^{\frac{1}{4}}(1+x^{\frac{1}{4}}), x^{\frac{3}{4}}, x^{\frac{4}{4}}} \quad | \quad \bm{x^{\frac{2}{4}} (1 + x^{\frac{1}{4}})}, x^{\frac{4}{4}}, x^{\frac{5}{4}} \quad | \quad x^{\frac{3}{4}}, x^{\frac{4}{4}}, x^{\frac{5}{4}}, x^{\frac{6}{4}} \quad | \quad x^{\frac{4}{4}}, x^{\frac{5}{4}}, x^{\frac{6}{4}}, x^{\frac{7}{4}}
$$
Note that in the next two steps, super-partitions $S_3$ and $S_4$ are advanced, which does not affect
the first four positions. Hence, in these iterations, the condition in Line~\ref{line:023} evaluates to false and no new $X_j$ values are created. 
The entire evolution is illustrated in Appendix~\ref{app:completeschemefour}.

\mypara{Analysis}
Throughout the analysis, we use the notation $X_i[j]$ to denote the $j$th element of $X_i$. We also fix $x = 2$. 
Some of our results are stated for $x=2$ while others use a general $x$ for convenience. We also use the 
abbreviation $\alpha=x^{1/p} = 2^{1/p}$. We first show how our scheme can be used to obtain an algorithm for \textsc{Flow}. 
\begin{lemma}\label{lem:conflow}
The scheme of Algorithm~\ref{A:genflowp} gives rise to a deterministic algorithm for $\textsc{Flow}$ with competitive ratio $\frac{p+4}{p} \cdot \max_{j} \frac{\max X_j}{\avg X_j}$.
\end{lemma}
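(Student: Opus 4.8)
The plan is to convert the combinatorial scheme produced by Algorithm~\ref{A:genflowp} into an online strategy for \textsc{Flow} and then charge its bottleneck value against a lower bound on the optimum. First I would describe the algorithm: at the start we inject a warm-up volume so that the $p$ partitions carry weights proportional to $X_1 = S_1$, i.e.\ the $i$-th partition has weight $c \cdot X_1[i]$ for the appropriate scaling constant $c$; thereafter, whenever the cumulative injected volume reaches the next threshold dictated by the scheme, we perform the merge prescribed when passing from $X_j$ to $X_{j+1}$ (removing one separator and re-inserting it at the current time), so that the vector of current partition weights is always a scalar multiple of some $X_j$. Concretely, if at some time the weights equal $c \cdot X_j$, then the total volume injected so far is $c \cdot \sum_i X_j[i] = c\, p \cdot \avg X_j$; since the scheme is periodic (the final configuration's weights are a fixed multiple of the initial ones, as stressed in the introduction and guaranteed by the super-partition construction), the scaling constant $c$ grows geometrically and the strategy is well-defined for arbitrarily long flows.

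The second step is the competitive analysis at an arbitrary stopping time $t_{\max}\ge t_0$. Suppose that at time $t_{\max}$ the last fully completed configuration is $c \cdot X_j$ for some $j$, so $c\, p\, \avg X_j \le t_{\max}$. The optimum bottleneck value for a flow of length $t_{\max}$ split into $p$ parts is exactly $t_{\max}/p$, hence $opt = t_{\max}/p \ge c \cdot \avg X_j$. On the other hand, between two consecutive configurations the algorithm only adds volume to partitions, and the amount of extra volume accumulated since configuration $c\cdot X_j$ before the next merge is at most the ``gap'' between consecutive thresholds; one shows this gap is at most (roughly) $4c\cdot\avg X_j / p \cdot$ something — more precisely, one step of a super-partition merge adds volume comparable to a single partition weight, and there are a bounded number of super-partition advances between consecutive $X_j$'s, giving an additive slack of at most $\tfrac{4}{p}\cdot c\, p\,\avg X_j = 4c\,\avg X_j$ in the worst case. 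Therefore the bottleneck value of the algorithm at time $t_{\max}$ is at most
\[
 c \cdot \max X_j + 4c\cdot\avg X_j \;=\; c\cdot\avg X_j\left(\frac{\max X_j}{\avg X_j} + 4\right)
 \;\le\; c\cdot\avg X_j \cdot \frac{p+4}{p}\cdot\frac{\max X_j}{\avg X_j},
\]
using $\max X_j \ge \avg X_j$. Dividing by $opt \ge c\cdot\avg X_j$ yields the claimed ratio $\frac{p+4}{p}\cdot\max_j\frac{\max X_j}{\avg X_j}$, where the maximum is over all configurations because $t_{\max}$ can land on any of them.

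The main obstacle I anticipate is bounding the ``slack'' term cleanly, i.e.\ controlling how much extra volume can be injected between the moment the weights equal $c \cdot X_j$ and the moment the next prescribed merge happens. This requires understanding the threshold schedule implied by the super-partition evolution: each \texttt{merge-next}$(S_i)$ in Algorithm~\ref{A:genflowp} corresponds to filling up certain partitions to a prescribed weight, and one has to verify that at most a constant (in terms of $p$, order $p$ summed, order $4$ after normalization) number of units of $c\cdot\avg X_j$ worth of volume passes before the configuration visibly changes in the length-$p$ prefix. A secondary technical point is handling the warm-up period $t_0$: we need $t_0$ large enough that the initial scaling constant is well-defined and the rounding/initialization costs are negligible, which is exactly why the definition of \textsc{Flow} allows an arbitrarily small but positive $t_0$. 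Once the slack bound is pinned down, the rest is the short arithmetic above.
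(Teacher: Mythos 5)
Your setup matches the paper's: initialize with a scaled copy of $X_1$ during the warm-up, perform the prescribed merge each time the injected volume reaches the next threshold so that the weights are always a scalar multiple of some $X_j$ (with a partially filled last partition in between), and use periodicity of the scheme to handle long flows. The divergence — and the error — is in the final charging step. You bound the algorithm's bottleneck by $c\max X_j + 4c\avg X_j$ (max of the last completed configuration plus additive slack) and then claim
$c\,\avg X_j\bigl(\tfrac{\max X_j}{\avg X_j}+4\bigr)\le c\,\avg X_j\cdot\tfrac{p+4}{p}\cdot\tfrac{\max X_j}{\avg X_j}$.
Writing $R=\max X_j/\avg X_j$, this inequality reads $R+4\le R+\tfrac{4R}{p}$, i.e.\ $R\ge p$ — but the whole point of the construction is that $R<2$, so the step is false for every $p>2$. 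An additive slack of $4\avg X_j$ in the numerator can only ever give a bound of the form $R+4$, which is far weaker than $\tfrac{p+4}{p}R$.

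The correct accounting keeps the slack out of the numerator entirely. At any moment the algorithm's partitions are the first $p-1$ entries of the \emph{current} configuration $X_i$ plus a last partition of weight $y\le X_i[p]$, so the bottleneck is at most $\max X_i$ with no additive term. The total injected volume is at least $y+\sum_{j=1}^{p-1}X_i[j]\ge p\cdot\avg X_{i-1}$ (the first $p-1$ entries of $X_i$ carry exactly the total weight of $X_{i-1}$), so $opt\ge\avg X_{i-1}$. The $\tfrac{p+4}{p}$ factor then enters \emph{multiplicatively} via the paper's Lemma~\ref{lem:avg}: since $\max X_i\le 2\min X_i$ and $\max X_i\le 2\max X_{i-1}$, one gets $X_i[p]\le\max X_i\le 4\avg X_{i-1}$, hence $\avg X_i\le\tfrac{p+4}{p}\avg X_{i-1}$, and therefore
$\tfrac{\max X_i}{\avg X_{i-1}}\le\tfrac{p+4}{p}\cdot\tfrac{\max X_i}{\avg X_i}$.
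Your intuition that the inter-configuration volume is "about $4\avg$" is the right ingredient, but it must be deployed as a $\tfrac{4}{p}$ relative loss on the denominator (one new partition of weight at most $4\avg X_{i-1}$ out of $p$ partitions), not as an additive $4\avg X_j$ on top of the maximum.
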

\begin{proof}
W.l.o.g., we assume that the warm-up period in $\textsc{Flow}$ is time $\sum_{x \in S_1} x$ (if not, then we scale our scheme). We initialize
the $p$ partitions with $X_1$. The transition from $X_i$ to $X_{i+1}$ in our scheme should be understood as a two-step process: 
First, the merge operation takes place (i.e., the advancing of some super-partition), which either creates an empty partition to the right, or
the last partition has been merged with more incoming flow. In case an empty partition is created, it is then filled with flow up to value 
$X_{i+1}[p]$. When our scheme terminates but the flow has not yet ended, then we simply repeat the scheme,
where all values are scaled. To see that this is possible, let $X_{\text{max}}$ be the last partition created by Algorithm~\ref{A:genflowp}, and observe that 
$X_{\text{max}}[i] = x^{\frac{i-1}{p}} \sum_{j=1}^p S_1[j]$, while $X_1[i] = x^{\frac{i}{p}}$. Hence, the final configuration is a scaled version 
of the initial configuration (by factor $x^{-\frac{1}{p}} \sum_{j=1}^p S_1[j]$). 

It is therefore enough to assume that the flow stops during the first iteration of the scheme.
At that moment, the last partition may not be entirely filled. Consider a final partitioning such that the first $p-1$ partitions 
coincide with the weights of the first $p-1$ values of some $X_i$, and the last partition is of some arbitrary weight $y \le X_i[p]$. 
Then, the competitive ratio is bounded by:
$$ \frac{ \max \{X_i[1], \dots, X_i[p-1], y \} }{\frac{1}{p} \left( y + \sum_{j=1}^{p-1}X_i[j] \right) } \le \frac{ \max X_i }{\avg X_{i-1} } \le \frac{p+4}{p} \cdot \frac{\max X_i}{\avg X_i} \ ,$$ 
where we used $\avg X_{i-1} \ge \frac{p+4}{p} \avg X_i$, which will be proved below. 
\end{proof}
The next lemma was used in the proof of the previous lemma.
\begin{lemma}\label{lem:avg}
 Let $x = 2$. Then, for every $i$: 
 $$\avg X_{i-1} \ge \frac{p}{p+4} \avg X_i .$$
\end{lemma}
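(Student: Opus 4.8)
The plan is to track how the total weight of the first $p$ positions (equivalently, $p \cdot \avg X_i$) changes from step $i$ to step $i+1$ of the scheme. The transition $X_i \to X_{i+1}$ is driven by a single \texttt{merge-next} operation on some super-partition, say $S_\ell$. Two things can happen to the first-$p$-position window: either (a) $S_\ell = S_1$ and the merge happens strictly inside the window, so the window loses no mass but gains a new element pushed in from $S_2$ at position $p$; or (b) the merge in $S_\ell$ is what causes a weight from $S_{\ell+1}$ to advance into position $p$ of the global configuration. In fact, by the structure of the scheme, the only transitions that create a \emph{new} $X_{i+1} \ne X_i$ are those where a fresh value enters position $p$, and that new value equals $\alpha \cdot X_i[p]$ is the natural guess — but more carefully, one should read off from the super-partition definition that the value newly occupying position $p$ in $X_{i+1}$ is exactly $x^{1/p}$ times the value that occupied position $p$ in $X_i$ only in the very first few steps; in general it is the next element of the relevant super-partition. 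So first I would pin down precisely: $\avg X_{i+1} = \avg X_i + \tfrac{1}{p}(\text{new element at position } p) - \tfrac{1}{p}(\text{net mass lost from the window})$.

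The key structural claim to isolate is: in each new configuration $X_{i+1}$, the element that has just entered at position $p$ is at least $2\,\avg X_i$ — or, dually, $\avg X_{i+1} \ge \avg X_i + \tfrac{1}{p}\bigl(X_{i+1}[p] - (\text{something} \le X_{i+1}[p])\bigr)$ with the net change bounded below. Concretely, since all weights in any configuration lie within a factor $2$ of one another (this is the invariant illustrated in Table~\ref{tab:flowscheme} and follows from the $x=2$, $\alpha = 2^{1/p}$ setup: initial weights span $x^{1/p}$ to $x$, and merging the two lightest keeps everything in $[\alpha^{j}, 2\alpha^{j}]$ after cycle $j$), the new element at position $p$ is at least as large as $\avg X_i$, and typically close to $\max X_i \le 2\avg X_i$. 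I would show the mass flowing \emph{out} of the window in one transition is at most (roughly) $\tfrac{4}{p}\avg X_i$ worth of ``deficit'' spread over the four-ish steps between consecutive new $X_j$'s — this is where the constant $4$ enters, matching the $p/(p+4)$ denominator. Summing/telescoping the per-step inequality $\avg X_{i+1} \ge (1 + \tfrac{c}{p})\avg X_i$ with the right $c$ over the (at most $4$) intermediate steps that separate distinct configurations yields $\avg X_{i-1} \ge \tfrac{p}{p+4}\avg X_i$ (note the indices: the lemma compares $X_{i-1}$ to $X_i$, so I want $\avg X_i \le \tfrac{p+4}{p}\avg X_{i-1}$, i.e.\ the average does not grow by more than a factor $\tfrac{p+4}{p}$ in one recorded step).

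Here is the cleanest route I would actually take. Work directly from the closed forms. From the super-partition definition, after $k$ merge sweeps the weights of $S_b$ are $x^{b/p}$ times a fixed vector $v^{(k)}$ depending only on $k$ (and $p$), and $\sum$ of the entries of $v^{(k)}$ equals $p$ (the total is conserved under merging and $\sum_j S_1[j] = \sum_j x^{j/p}$ at $k=0$, which I would normalize). So $p \cdot \avg X_i$ is a sum of $p$ consecutive entries drawn from the concatenated sequence $x^{1/p} v^{(k_1)},\, x^{2/p} v^{(k_1)},\,\dots$ across the relevant super-partitions, and I want to bound the ratio of two such consecutive $p$-windows. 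Each new recorded configuration $X_{i}$ shifts this window so as to drop entries that are collectively at least $\tfrac{p}{p+4}$ of the incoming-window total — equivalently, the dropped entries are never ``too heavy'' relative to the new entries, because both are controlled by the factor-$2$ spread. I would make this quantitative: the window's left end advances past at most a bounded number of entries (bounded in terms of how many \texttt{merge-next} calls separate two distinct prefixes), each entry at most $2 \cdot (\text{current min})$, and the new entries each at least the current min, giving $\avg X_i / \avg X_{i-1} \le 1 + 4/p$ after plugging the count $4$.

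The main obstacle, I expect, is the bookkeeping of exactly which transitions change the length-$p$ prefix and by how much — the scheme interleaves merges on $S_1,\dots,S_p$ round-robin, and most merges (those on $S_3,\dots,S_p$ in the $p=4$ example) do not touch the prefix at all, so the ``distance'' between consecutive distinct $X_j$'s in terms of raw \texttt{merge-next} steps varies, and I need a uniform bound of the form ``at most $4$ units of average-mass are displaced per recorded step.'' Getting that constant to be exactly $4$ (rather than something larger that would weaken the competitive ratio) is the delicate part, and it is presumably why the hypothesis $p$ a power of two and the specific choice $x = 2$ are needed: they make the merge pattern regular enough that the displaced mass and the gained mass can be compared entry-by-entry within the factor-$2$ window.
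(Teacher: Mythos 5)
Your proposal circles the right ingredients but rests on a structural misunderstanding that derails it. In the scheme, $X_i$ is always the length-$p$ \emph{prefix} of the concatenated super-partitions: its left end never moves, a merge inside the prefix conserves mass, and the only change in total weight from $X_{i-1}$ to $X_i$ is the single new element pulled in at position $p$. Hence the exact identity $\sum_{x\in X_i}x=\sum_{y\in X_{i-1}}y+X_i[p]$, with no loss term. Your repeated concern with ``net mass lost from the window,'' ``the window's left end advances past at most a bounded number of entries,'' and ``at most $4$ units of average-mass displaced per recorded step'' is therefore a red herring; the constant $4$ in $\frac{p}{p+4}$ does not come from counting intermediate merge steps at all. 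It comes from bounding the incoming element: $X_i[p]\le\max X_i\le 2\max X_{i-1}\le 4\min X_{i-1}\le 4\avg X_{i-1}$, using the two invariants $\max X_j\le 2\min X_j$ and $\max X_j\le 2\max X_{j-1}$ (both consequences of $x=2$ and the structure of super-partitions). Plugging this into the identity gives $p\avg X_i\le 4\avg X_{i-1}+p\avg X_{i-1}$, which is the lemma.

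Beyond the misdirected bookkeeping, two concrete gaps remain. First, you state the key inequality on the entering element in the wrong direction (``the element that has just entered at position $p$ is at least $2\,\avg X_i$''); what is needed is an \emph{upper} bound on it relative to $\avg X_{i-1}$. (Your weaker observation $X_i[p]\le\max X_i\le 2\avg X_i$ would in fact suffice for $p\ge 4$, since it yields $(p-2)\avg X_i\le p\avg X_{i-1}$ and $\frac{p-2}{p}\ge\frac{p}{p+4}$ in that range, but you never commit to this computation.) Second, neither the factor-$2$ spread $\max X_j\le 2\min X_j$ nor the growth bound $\max X_j\le 2\max X_{j-1}$ is actually verified --- you only gesture at Table~\ref{tab:flowscheme} --- and these are precisely the facts the proof rests on. As written, the proposal is a collection of partially inconsistent plans rather than a proof.
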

\begin{proof}
First, by investigating the structure of $X_i$, it can be seen that $\max X_i \le 2 \cdot \min X_i$, for every $X_i$,
and $\max X_i \le 2 \max X_{i-1}$. These two bounds give $\max X_i \le 4 \min X_{i-1}$, which implies $\max X_i \le 4 \avg X_{i-1}$.
Then,
 \begin{eqnarray*}
  \avg X_{i} & = & \frac{1}{p} \cdot \sum_{x \in X_i} x = \frac{1}{p} \cdot \left( X_i[p] + \sum_{y \in X_{i-1}} y \right) \le \frac{1}{p} \cdot \left( 4 \avg X_{i-1} + \sum_{y \in X_{i-1}} y \right) \\
  & = & \frac{p+4}{p} \avg X_{i-1} .
 \end{eqnarray*}
\end{proof}

Thus, in order to obtain a good algorithm for $\textsc{Flow}$, we need to bound the max-over-average ratio 
$\frac{\max X_j}{\avg X_j}$, for every $X_j$, of the scheme. To this end, observe that $X_i$ is of the form:
$$X_i = \underbrace{S_1 S_2 \dots S_j}_{\mbox{Length $l$}} [ \underbrace{S_{j+1} \dots S_k}_{\mbox{Length $l+1$}} ],$$
where $S_1, \dots, S_j$ are of length $l$, $S_{j+1} \dots, S_k$ are of length $l+1$, $S_{j+1} \dots, S_k$ may or may not exist,
and the last super-partition of $X_i$ may be incomplete (not entirely included in $X_i$). 
Using Lemma~\ref{O:basicmaxavg}, it can be seen that for base $x = 2$, the maximum of 
$X_i$ always lies in either the right-most super-partition of length $l$ that is entirely included in $S_i$, or in the incomplete super-partition
of length $l$ (if it exists). 

In general, in order to describe the current $X_i$, let $L, t, m$ be such that $L$ is a power of two, 
$1\le t \le L/2$, all super-partitions have lengths either 
$l=L-t$ or $l+1=L-t+1$, and $m$ is the rightmost index of the super-partition that has length $L-t$ and is entirely included in $X_i$. 
Such a triple $(L,t,m)$ completely describes the state of $X_i$.

First, we need an expression for $\max X_i$. This quantity depends on the current lengths of the super-partitions and the value $m$, and using the quantities $L$ and $l$,
we obtain the following lemma: 

Our analysis requires a structural result on super-partitions, which we give first. In the following lemma, we relate the maximum value of a given super-partition 
to its length, and we give an expression for the total weight of a super-partition.

\begin{lemma}\label{O:basicmaxavg}
Fix a super-partition $S_i$ and suppose that its current length is $l=L-t$, where $L$ is a power of two and $1 \le t \le L/2$. Let $max_i$ denote the 
maximum at that moment and $sum_i$ the total weight of $S_i$. Then:
$$
sum_i = \frac{\alpha^i}{\alpha-1} , \quad \text{ and } \quad max_i = \alpha^{i}\cdot \alpha^{2tp/L}\cdot \frac{1-\alpha^{-2p/L}}{\alpha-1}.
$$
\end{lemma}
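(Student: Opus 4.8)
The plan is to analyze a single super-partition $S_i$ as it evolves from its initial configuration, tracking its total weight and its maximum value as functions of the number of merge sweeps it has undergone, which is captured by the pair $(L,t)$ via $l = L - t$.

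First I would compute $sum_i$. The total weight of a super-partition is invariant under merging, since merging two partitions replaces them with their sum. Thus $sum_i$ equals the total weight of the \emph{initial} configuration of $S_i$, which by definition is $\sum_{k=1}^{p} x^{\frac{i + (k-1)}{p}} = x^{i/p}\sum_{k=0}^{p-1}\alpha^{k} = \alpha^{i}\cdot\frac{\alpha^{p}-1}{\alpha-1}$. Since $\alpha = x^{1/p} = 2^{1/p}$, we have $\alpha^{p} = 2$, so $\alpha^{p}-1 = 1$ and $sum_i = \frac{\alpha^{i}}{\alpha - 1}$, as claimed. (This is where the normalization $x = 2$ is used.)

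Next I would compute $max_i$. The key structural observation is that the merging rule — always merge the minimal-sum adjacent pair — applied to a geometric initial sequence produces, after one full left-to-right sweep, another geometric-type sequence; more precisely, after $s$ sweeps the super-partition has length $p/2^{s}$ and its entries are, up to a common factor, consecutive terms of a geometric progression with ratio $\alpha^{2^{s}}$, and the maximum is the last (rightmost) entry. I would prove this by induction on the number of sweeps $s$: a full sweep pairs position $2k-1$ with $2k$, and since the initial ratio is $\alpha$, consecutive partial sums $x^{(2k-1)/p}(1+\alpha)$ are again geometric with ratio $\alpha^{2}$ and the common factor $(1+\alpha)$ can be pulled out; the inductive step is identical with $\alpha$ replaced by $\alpha^{2^{s-1}}$. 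The maximum of such a progression of length $\ell$ with ratio $\rho$ is $(\text{first term})\cdot\rho^{\ell-1}$, equivalently $(\text{sum})\cdot\frac{\rho-1}{\rho^{\ell}}\cdot\rho^{\ell-1} = sum_i\cdot\frac{\rho-1}{\rho}$; substituting $\rho = \alpha^{2p/L}$ (since $L$ sweeps' worth of doubling from length $p$ down to length $L$ means $2^{s} = p/L$, wait — more carefully, $\rho = \alpha^{2^s}$ where the current length is $p/2^s$, and since lengths are $L-t$ or $L-t+1$ with $L$ a power of two, $L = p/2^{s}$ for the current "level", giving $\rho = \alpha^{p/L}$) — I will need to be careful here to match the exponent $\alpha^{2tp/L}$ in the statement, which suggests the maximum within a level-$L$ super-partition at "sub-step" $t$ picks up an extra $\alpha^{2tp/L}$ factor from partial merging within the current sweep.

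The main obstacle will be exactly this bookkeeping: correctly identifying, for a super-partition that is \emph{midway} through a sweep (reflected by $1 \le t \le L/2$), which of its entries is currently the maximum and expressing it via $i$, $L$, and $t$. Once the geometric structure across complete sweeps is established by the induction above, I would analyze a partial sweep: after $t$ merges of a length-$L$ geometric sequence with ratio $\rho_0 = \alpha^{p/L}$, the merged entries $x^{(2k-1)/p}(1+\rho_0)$ for $k \le t$ have grown relative to the unmerged tail, and a short computation comparing the largest merged entry to the largest unmerged entry shows the maximum is attained at the boundary and equals $sum_i\cdot\frac{\rho_1 - 1}{\rho_1}$ adjusted by the partial factor, yielding $\alpha^{i}\cdot\alpha^{2tp/L}\cdot\frac{1-\alpha^{-2p/L}}{\alpha-1}$ after simplification using $\frac{\rho_1-1}{\rho_1} = 1 - \rho_1^{-1}$ and $sum_i = \frac{\alpha^i}{\alpha-1}$. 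I would then sanity-check the formula at the endpoints $t = 1$ and $t = L/2$ against Table~\ref{tab:flowscheme} with $p = 4$, $\alpha = 2^{1/4}$, to confirm the exponents.
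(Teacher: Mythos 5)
Your proposal matches the paper's proof in both parts: the sum is the (merge-invariant) geometric series $\alpha^i\frac{\alpha^p-1}{\alpha-1}=\frac{\alpha^i}{\alpha-1}$, and the maximum is obtained by observing that at length $L$ the entries of $S_i$ form a geometric progression with ratio $\alpha^{p/L}$ and first term $\alpha^{i-1}\sum_{j=1}^{p/L}\alpha^j$, so that after $t$ partial merges the maximum is the just-merged pair $s_{i,2t-1}+s_{i,2t}$, which simplifies to the stated expression. Your explicit induction on sweeps and the comparison of the merged pair against the largest unmerged entry only spell out steps the paper asserts without proof, so the approach is essentially identical.
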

\begin{proof}
The first claim is easy: We have 
$$sum_1 = \sum_{j=1}^p \alpha^j=\frac{\alpha^{p+1}-\alpha}{\alpha-1}=\frac{\alpha}{\alpha-1}, \quad \mbox{ and } \quad sum_i=\alpha^{i-1}\cdot sum_1 \ ,$$
which gives the result. 

Denote by $s_{i,j}$ the $j$th element of $S_i$. Concerning the maximum, suppose first that super-partition $S_1$ is of length $L$. 
Then, the $k$th element of $S_1$ is $s_{1,k}=\alpha^{(k-1)p/L}\cdot s_{1,1}= \alpha^{(k-1)p/L}\cdot \sum_{j=1}^{p/L}\alpha^j=\alpha^{(k-1)p/L}\cdot \frac{\alpha^{p/L+1}-\alpha}{\alpha-1}$.
The corresponding elements in $S_i$, when it has length $L$, are $s_{i,k}=\alpha^{i-1}\cdot s_{1,k}= \alpha^{i-1} \cdot \alpha^{(k-1)p/L}\cdot \frac{\alpha^{p/L+1}-\alpha}{\alpha-1}$. 
Consider now the general setting, when $S_i$ is of length $l=L-t$. Then:
$$
max_i = s_{i, 2t-1} + s_{i, 2t}= (1+\alpha^{p/L})\cdot \alpha^{(2t-2)p/L}\cdot \alpha^{i-1} \cdot \frac{\alpha^{p/L+1}-\alpha}{\alpha-1}=\alpha^{i}\cdot \alpha^{2(t-1)p/L}\cdot \frac{\alpha^{2p/L}-1}{\alpha-1}.
$$
\end{proof}

Bounding $\avg X_i$ is the challenging part. Based on the triple $(L,t,m)$, our analysis requires a bound on $c$, the number 
of super-partitions entirely included in $X_i$. It is easy to see that $c$ is at least $\lceil \frac{p+m-l}{l+1} \rceil$. 
This, however, introduces difficulties, since optimizing over such a function is difficult if we need a fine enough 
optimization that does not allow us to ignore rounding effects (i.e., the estimate $c \ge \frac{p+m-l}{l+1}$ is not good enough in some cases). 

In Lemma~\ref{L:maxavgratio} we thus conduct a case distinction: In the easy cases (if $c$ is large), the estimate 
$c \ge \frac{p+m-l}{l+1}$ is good enough for our purposes and the max-over-average ratio follows by a careful calculation. 
In the other cases ($c$ is small), we analyze the max-over-average ratio using {\em reference points}: A reference point is
one where all partitions included in $X_i$ have equal length which is a power of two (for example the initial configuration
is one). We then study the behavior of the scheme between two consecutive reference points and obtain a different bound.
Lemma~\ref{L:maxavgratio} is the most technical part of this analysis.

\begin{lemma}\label{L:maxavgratio}
Let $R=\frac{\max X_i}{\avg X_i}$. The following bounds hold for each $X_i$ with configuration $(L,t,m)$, with an additive error term in $O(1/p)$:
$$ 
R \le \min \left \{\frac{\ln2\cdot 2^{2t/L}(1-2^{-2/L})}{2^{\frac{p+1-l}{p(l+1)}}-1}, 
\frac{2p\cdot (2^{2/L}-1)L}{2^{1/\ln2}\ln2\cdot(p+2-L)},
\frac{4\ln2\cdot (2^{2/L}-1) \cdot 2^{-\frac{2p}{cL}}}{2^{\frac{c+2}{p}}(2^{1/p}-1) + 2^{-1/p}-2^{-c/p}}\right \} .$$

\end{lemma}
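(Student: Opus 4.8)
The plan is to reduce the ratio $R$ to the closed forms of Lemma~\ref{O:basicmaxavg} and then establish each of the three terms of the claimed minimum by a separate argument, splitting according to whether the number $c$ of super-partitions fully contained in $X_i$ is ``large'' or ``small''.

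First I would pin down the numerator. By the structural observation preceding the statement, $\max X_i$ is attained either in the rightmost length-$l$ super-partition completely contained in $X_i$ --- that is, in $S_m$ --- or in the partially included super-partition that just dropped to length $l$; in the latter case its maximum is at most $\alpha\cdot\max_{S_m}$, so in any case $\max X_i = \alpha^m\cdot 2^{2t/L}\cdot\frac{1-2^{-2/L}}{\alpha-1}\cdot\bigl(1+\Order(1/p)\bigr)$ by Lemma~\ref{O:basicmaxavg} (recall $\alpha=2^{1/p}$). For the denominator, since $S_1,\dots,S_c$ are fully contained in $X_i$, Lemma~\ref{O:basicmaxavg} gives $\sum X_i \ge \sum_{j=1}^{c} sum_j = \frac{\alpha(\alpha^c-1)}{(\alpha-1)^2}$, hence $\avg X_i \ge \frac{\alpha(\alpha^c-1)}{p(\alpha-1)^2}$. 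Dividing and using $p(\alpha-1)=\ln2+\Order(1/p)$, $\alpha^m=2^{m/p}$, $\alpha^c=2^{c/p}$ yields the master inequality
$$R \le \frac{\ln2\cdot 2^{m/p}\cdot 2^{2t/L}(1-2^{-2/L})}{2^{c/p}-1} + \Order(1/p),$$
which is already of the right shape and, once the contribution of the partially included super-partition is added back into the estimate of $\avg X_i$ (negligible when $c$ is large, decisive when $c$ is small), will give the third term.

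For the first two (``easy'') terms I would feed into the master inequality the combinatorial bound $c \ge \lceil\frac{p+m-l}{l+1}\rceil \ge \frac{p+m-l}{l+1}$, so that $2^{c/p}-1 \ge 2^{(p+m-l)/(p(l+1))}-1$. Substituting and simplifying --- using $l=L-t$, the inequality $l+1>L/2$ that follows from $t\le L/2$, the constraint $ml\le p$ (which bounds $m$ in terms of $l$), and standard expansions such as $2^x-1=x\ln2+\Order(x^2)$ --- removes the $m$-dependence and produces the first term $\frac{\ln2\cdot 2^{2t/L}(1-2^{-2/L})}{2^{(p+1-l)/(p(l+1))}-1}$; re-running the same computation while keeping $L$ and $p$ explicit (and using $c\ge\frac{p-l}{l+1}$ with $l\ge L/2$) produces the second term $\frac{2p(2^{2/L}-1)L}{2^{1/\ln2}\ln2\,(p+2-L)}$. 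In this regime all incurred floor/ceiling and Taylor-remainder errors are of size $\Order(1/p)$.

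The third (``hard'') term covers the regime of small $c$, where both the rounding in $c\ge\frac{p+m-l}{l+1}$ and the dropped partial super-partition are too costly. Here I would argue via \emph{reference points}: configurations in which every contained super-partition has one and the same length, a power of two (the initial configuration being one), so that the average there is exactly $\frac{\alpha(2^{1/L'}-1)}{p(\alpha-1)^2}$ for the reference length $L'$. Walking forward from the reference point preceding $X_i$, merge by merge, I would control the drift of $\avg X$ and of $\max X$ via the structural relations $\max X_j\le 2\min X_j$, $\max X_j\le 2\max X_{j-1}$ (so $\max X_j\le 4\avg X_{j-1}$) and $\avg X_{j-1}\ge\frac{p}{p+4}\avg X_j$ of Lemma~\ref{lem:avg}, while $sum_j=\frac{\alpha^j}{\alpha-1}$ locates the reference value in terms of $c$ and $L$; this produces the third term $\frac{4\ln2(2^{2/L}-1)2^{-2p/(cL)}}{2^{(c+2)/p}(2^{1/p}-1)+2^{-1/p}-2^{-c/p}}$, the factor $4$ and the $2^{-2p/(cL)}$ being direct traces of these estimates. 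Taking the minimum of the three and collecting the $\Order(1/p)$ terms finishes the proof. The main obstacle is precisely this third case: as we walk forward within one reference interval the lengths, the index $m$ and the count $c$ all change, and one must keep this bookkeeping consistent and verify that the accumulated rounding --- which the lemma is deliberately phrased finely enough to see --- stays within $\Order(1/p)$; the first two cases are, by comparison, careful but routine manipulations of exponentials.
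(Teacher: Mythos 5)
Your setup for the first two bounds matches the paper: the master inequality obtained from Lemma~\ref{O:basicmaxavg} together with $p(\alpha-1)\approx\ln 2$ and the combinatorial bound $c\ge\frac{p+m-l}{l+1}$ is exactly how the paper proceeds. You are, however, vague on two steps the paper makes explicit: the $m$-dependence is removed not via a constraint like $ml\le p$ but by checking (by differentiation) that the resulting ratio is non-increasing in $m$ and evaluating at $m=1$; and the second term requires, after linearizing the denominator with $2^y-1>y\ln 2$, a genuine optimization over $t$ (substituting $t=1$ in the denominator and maximizing the numerator at $t=(1-\frac{1}{2\ln2})L+1$), which is the sole source of the constant $2^{1/\ln 2}$. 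Without that optimization you cannot produce the second term by ``re-running the same computation''.

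The real gap is in the third bound. You correctly diagnose that for small $c$ the partially included super-partition must be added to the average, but the tool you then reach for --- the coarse relations $\max X_j\le 2\min X_j$, $\max X_j\le 2\max X_{j-1}$ and $\avg X_{j-1}\ge\frac{p}{p+4}\avg X_j$ of Lemma~\ref{lem:avg} --- cannot deliver it: iterating these factor-of-two estimates over the many merges between two reference points only yields ratios near $2$, not the refined expression with the $2^{-2p/(cL)}$ factor. The paper's argument is instead quantitative. It takes as reference point the configuration $(L,t_c,m_c)$ at which exactly $c$ complete super-partitions fill all $p$ slots (so $(L-t_c)m_c+(c-m_c)(L-t_c+1)=p$, whence $t_c\le L-p/c+1$), and then \emph{counts} how many elements $T$ of $S_{c+1}$ have entered $X_i$ at a later state $(L,t,m)$: each increment of $t$ pushes in at least $c$ new elements, giving $T\ge(t-t_c)c+m+1$ and hence the explicit lower bound $s_{last}\ge\alpha^{c-1}\frac{\alpha^{T+t}-\alpha}{\alpha-1}$ on the partial super-partition's contribution, which is added to $\frac{\alpha^{c+1}-\alpha}{(\alpha-1)^2}$. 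The resulting ratio is shown to be decreasing in $t$, evaluated at $t=t_c$, and bounded using $t_c\le L-p/c+1$; this last substitution, not Lemma~\ref{lem:avg}, is the source of both the factor $4$ and the term $2^{-2p/(cL)}$. Your proposal is missing this counting argument, which is the crux of the lemma.
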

\begin{proof}
The first and second bounds will be used when there are relatively many super-partitions in $X_i$. In that case, we can afford 
ignoring the remainder part $S_{c+1}$ (when estimating the average) and also use the bound on $c$ that ignores rounding effects.

Thus, the average is estimated as follows: 
$$p\cdot \avg X_i \ge \sum_{t=1}^c sum_t=\frac{\sum_{1}^c \alpha^t}{\alpha-1}=\frac{\alpha^{c+1}-\alpha}{(\alpha-1)^2} \ .$$
Recall that the current maximum is achieved in $S_m$, so all we need is to bound the ratio $max_m/\avg X_i$. Using the expression 
for $max_m$ given in Lemma~\ref{O:basicmaxavg}, we obtain:
$$
\frac{max_m}{\avg X_i} \le \frac{p(\alpha-1)\alpha^{m-1}\alpha^{2tp/L}(1-\alpha^{-2p/L})}{\alpha^c-1}\le \frac{\ln2\cdot \alpha^{m-1}\alpha^{2tp/L}(1-\alpha^{-2p/L})}{\alpha^{\frac{p+m}{l+1}-1}-1},
$$
where we  used the approximation $p(1-\alpha)\approx \ln 2$ which holds with an error bounded by $1/p$ (which we will henceforth omit) and 
the bound $c\ge \frac{p+m}{l+1}-1$.
Differentiation shows that the right hand side is non-increasing as a function of $m$ and is maximum at $m=1$, which gives the first bound:
$$
\frac{max_m}{\avg X_i} \le \frac{\ln2\cdot \alpha^{2tp/L}(1-\alpha^{-2p/L})}{\alpha^{\frac{p-l}{l+1}}-1} \ .
$$
This can be used to obtain bounds for smaller values of $l$. For larger values, we use the bound $e^x>1+x$, applied to $\alpha^{\frac{p-l}{l+1}}-1$ in the denominator, which gives $\alpha^{\frac{p-l}{l+1}}-1=2^{\frac{p-l}{p(l+1)}}-1>\frac{\ln2\cdot (p-l)}{p(l+1)}$, which gives, after replacing $\alpha=2^{1/p}$,
$$
\frac{max_m}{\avg X_i} \le \frac{p\cdot (1-2^{-2/L})\cdot 2^{2t/L}(L-t+1)}{p-L+t} \ .
$$
A simple and crude optimization w.r.t. $t$, where we replace $t=1$ in the denominator and then optimize the numerator w.r.t. $t$ (giving $t=\left(1-\frac{1}{2\ln2}\right)L+1$), gives the second bound:
$$
\frac{max_m}{\avg X_i} \le \frac{2p\cdot (2^{2/L}-1)L}{2^{1/\ln2}\ln2\cdot(p+1-L)} \ .
$$ 

Now, consider the case when there are few super-partitions in $X_i$, i.e., $c$ is small. We need a more delicate bound on $\avg X_i$ (by losing precision when computing $\text{max} X_i$: uncertainty principle in action), so we shift the reference point to the time step when there are \emph{exactly} $c$ super-partitions in $X_i$ and no incomplete super-partitions. Let $(L, t_c, m_c)$ be the tuple describing this point.  We concentrate on the process of going from $(L, t_c, m_c)$ to $(L, t_{c+1}, m_{c+1})$.
Note that $(L-t_c)m_c+(c-m_c)(L-t_c+1)=p$, and $0< m_c< c$, so we have 
$L-\frac{p}{c}+\frac{1}{c} \le t_c \le L-\frac{p}{c}+1$.
Consider an arbitrary point $(L, t, m)$ during this process (note that $L$ is the same). What is the number $T$ of elements of $S_{c+1}$ that are in $X_i$? Note that at $(L, t_c+1, 1)$, $T=c-m_c$. Next,  each time $t$ increments, $T$ increases by at least $c$ (not counting the possible action inside $S_{c+1}$). Thus, $T\ge c-m_c + (t-t_c)\cdot c + m\ge (t-t_c)\cdot c + m +1$. This helps us estimate the sum $s_{last}$ of the elements of $S_{c+1}$ in $X_i$ at time $(L,t,m)$. Note that the length of $S_{c+1}$ is $L-t+1$, so
$$
s_{last}\ge\sum_{i=1}^{T+t-1}s_{c+1,i}=\alpha^{c-1}\cdot \sum_{i=1}^{T+t-1}s_{1,i}=\alpha^{c-1}\cdot \frac{\alpha^{T+t}-\alpha}{\alpha-1}.
$$
Now, we can bound the average
\begin{eqnarray*}
p\cdot \avg X_i & = & \sum_{i=1}^c sum_i + s_{last} \ge \frac{\alpha^{c+1}-\alpha}{(\alpha-1)^2} + \alpha^{c-1}\cdot \frac{\alpha^{T+t}-\alpha}{\alpha-1}\\
& = & \frac{1}{(\alpha-1)^2}\cdot(\alpha^{T+t+c-1}(\alpha-1) + \alpha^{c}-\alpha)\\
& \ge & \frac{1}{(\alpha-1)^2}\cdot(\alpha^m\cdot \alpha^{c(t-t_c+1)}(\alpha-1) + \alpha^{c}-\alpha),
\end{eqnarray*}
and the max-over-average ratio
\begin{eqnarray*}
\frac{max_m}{\avg X_i} & \le & \frac{p(\alpha-1)\alpha^{m}\cdot \alpha^{2tp/L}\cdot (1-\alpha^{-2p/L})}{\alpha^m\cdot \alpha^{c(t-t_c+1)}(\alpha-1) + \alpha^{c}-\alpha} \le \frac{\ln2\cdot\alpha^{c+1}\cdot \alpha^{2tp/L}\cdot (1-\alpha^{-2p/L})}{\alpha^{c-1}\cdot \alpha^{c(t-t_c+1)}(\alpha-1) + \alpha^{c}-\alpha}\\
&= & \frac{\ln2\cdot (1-\alpha^{-2p/L}) \cdot \alpha^{2tp/L}}{\alpha^{c+2}(\alpha-1)\cdot\alpha^{ct}\alpha^{-ct_c} + \alpha^{-1}-\alpha^{-c}}
\end{eqnarray*}
where we let $m$ take its maximum possible value $m=c+1$,  and used the approximation $p(\alpha-1)\approx \ln 2$. The last expression is a decreasing function of $t$, which can be checked by differentiation (the derivative is $A(2/L-c/p) + B$ for positive values $A,B$), so $t=t_c$ gives the worst-case bound
\begin{eqnarray*}
\frac{max_m}{\avg X_i} & \le & \frac{\ln2\cdot (1-2^{-2/L}) \cdot 2^{2t_c/L}}{2^{\frac{c+2}{p}}(2^{1/p}-1) + 2^{-1/p}-2^{-c/p}}  \le \frac{4\ln2\cdot (2^{2/L}-1) \cdot 2^{-\frac{2p}{cL}}}{2^{\frac{c+2}{p}}(2^{1/p}-1) + 2^{-1/p}-2^{-c/p}},
\end{eqnarray*}
where we also used the bound $t_c\le L-p/c+1$ (observed above) in the second inequality. This proves the third inequality of the lemma.
\end{proof}

Combining the previous results, we obtain the following results:
\begin{theorem}\label{thm:scheme}
For every $X_i$ in our scheme, the max-over-average ratio is bounded as:
$$\frac{\max X_i}{\avg X_i} \le \frac{\ln2}{\sqrt{2}-1}+O(1/p)\approx 1.673 + O(1/p) \ .$$
\end{theorem}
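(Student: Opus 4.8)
The plan is to derive the theorem directly from Lemma~\ref{L:maxavgratio}, which already gives $R=\frac{\max X_i}{\avg X_i}\le\min\{B_1,B_2,B_3\}$ for three explicit expressions $B_1,B_2,B_3$ in terms of the configuration $(L,t,m)$ and the number $c$ of complete super-partitions in $X_i$. Since $m$ has been optimized to its worst value inside that lemma, the only remaining free parameters are the power of two $L$, the index $t\in[1,L/2]$, and $c$, subject to $l=L-t$, $L\in(l,2l]$, and $c\approx p/l$ (up to rounding). I would perform a case analysis on the super-partition length $l$ (equivalently on $c$) and show that in each regime at least one of the three bounds is at most $\frac{\ln2}{\sqrt{2}-1}+O(1/p)$, with the target constant attained, asymptotically, at the single configuration $L=2$, $t=1$, $l=1$.

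First I would treat the regime where super-partitions are short, so $c$ is large; here $B_1$ is the right tool. Absorbing $p+1-l\approx p$ and $l+1\approx l$ into the $O(1/p)$ term and writing $\tau=t/L\in(0,1/2]$, the bound $B_1$ reduces to roughly $2\ln2\cdot(1-\tau)\cdot4^{\tau}$, whose maximum over $(0,1/2]$ is at $\tau^\star=1-\tfrac1{2\ln2}$ and is below $1.5$. The only place this large-$p$ simplification is invalid is at the smallest powers of two, which I would check directly: at $L=2$, $t=1$, $l=1$ one gets exactly $\frac{\ln2\cdot2\cdot(1-2^{-1})}{2^{1/2}-1}=\frac{\ln2}{\sqrt2-1}$, and a case check of $L=4,8,16,\dots$ shows the maximum of $B_1$ over $t$ is strictly smaller for every $L\ge4$. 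Hence the claimed constant is achieved at, and only at (in the limit), $L=2$.

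Second I would treat the regime where super-partitions are long, i.e. $l=\Theta(p)$ and only a bounded number $c$ of them fit in $X_i$, where $B_1$ degenerates (denominator $\Theta(1/p)$) and one must instead invoke $B_3$. Using $L\in(l,2l]$ with $l\approx p/c$ one has $2p/(cL)\in[1,2)$, and after substituting $u=1/L$ the numerator of $B_3$ behaves like $u\,e^{-(2p\ln2/c)u}$, maximized at $u=\frac{c}{2p\ln2}$; combined with the denominator $\Theta(c/p)$ this yields $R\lesssim 4/e\approx1.47<\frac{\ln2}{\sqrt2-1}$. An intermediate band of $L$ (large enough that $(2^{2/L}-1)L\approx2\ln2$ but still $L=o(p)$, so that $p+2-L\approx p$) can alternatively be dispatched by $B_2$, which is again $\approx4/e$ there. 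Taking the minimum of the three bounds in every case gives $R\le\frac{\ln2}{\sqrt2-1}+O(1/p)$ throughout.

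The main obstacle I anticipate is the bookkeeping at the boundaries between the three regimes: one must pick the cut-offs on $l$ (equivalently on $c$) so that within each regime the bound being used is simultaneously \emph{valid} — in the large-$c$ part the rounding-free estimate $c\ge\lceil(p+m-l)/(l+1)\rceil$ must be good enough, and where it is not, the reference-point argument underlying $B_3$ must apply — and \emph{numerically below} $\frac{\ln2}{\sqrt2-1}$; in particular $B_2$ is \emph{not} below the target at the smallest $L$ (e.g. $L=4$), so those must be handed to $B_1$. The small powers of two, where the large-$p$ approximations break down, have to be verified essentially by hand, and one must confirm that the accumulated approximation errors stay within the stated $O(1/p)$.
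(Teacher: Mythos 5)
Your proposal is correct and follows essentially the same route as the paper: a three-way case analysis on the super-partition length $l$ (equivalently on $c$), applying the first bound of Lemma~\ref{L:maxavgratio} for small $l$, the second for intermediate $l$, and the third when $c$ is bounded, with the extremal value $\frac{\ln 2}{\sqrt{2}-1}$ arising from the first bound at $L=2$, $t=1$. The paper's own proof is in fact terser than your write-up --- it delegates the small-$l$ and small-$c$ cases (and all $p<256$) to finite checks by computer --- so your explicit asymptotics for the three bounds and your identification of the worst configuration merely fill in details the paper leaves implicit.
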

\begin{proof}
We can assume that $p$ is large, e.g. $p\ge 256$, since we checked smaller values with a computer (see below). 
With this assumption, we can use the bounds in Lemma~\ref{L:maxavgratio}. The first bound gives the required approximation ratio for small values of $l$, $1\le l\le 8$, by checking each case separately. The second bound of the lemma gives the ratio for $8\le l \le p/16$. The remaining cases are when $l>p/16$, i.e. $c\le 16$. These cases can again be individually checked for each $c$, by using the third bound given by Lemma~\ref{L:maxavgratio}.
\end{proof}
\begin{corollary}
 If $p$ is a power of two, then there is an algorithm for $\textsc{Flow}$ with competitive ratio $\frac{\ln2}{\sqrt{2}-1}+O(1/p) \approx 1.673 + O(1/p)$.
\end{corollary}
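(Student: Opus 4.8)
The plan is to deduce the corollary directly from the already-established machinery. Theorem~\ref{thm:scheme} tells us that $\max_j \frac{\max X_j}{\avg X_j} \le \frac{\ln 2}{\sqrt 2 - 1} + O(1/p)$, and Lemma~\ref{lem:conflow} tells us that the scheme of Algorithm~\ref{A:genflowp} yields a deterministic algorithm for $\textsc{Flow}$ whose competitive ratio is $\frac{p+4}{p} \cdot \max_j \frac{\max X_j}{\avg X_j}$. So the only remaining work is to combine these two facts and check that the leading constant $\frac{\ln 2}{\sqrt 2 - 1}$ survives multiplication by $\frac{p+4}{p}$.

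First I would note that $\frac{p+4}{p} = 1 + \frac{4}{p} = 1 + O(1/p)$, so multiplying the bound from Theorem~\ref{thm:scheme} by this factor gives
$$
\frac{p+4}{p}\cdot\left(\frac{\ln 2}{\sqrt 2 - 1} + O(1/p)\right) = \frac{\ln 2}{\sqrt 2 - 1} + O(1/p),
$$
since $\frac{\ln 2}{\sqrt 2 - 1}$ is a constant and the cross terms $\frac{4}{p}\cdot\frac{\ln 2}{\sqrt 2 - 1}$ and $\frac{4}{p}\cdot O(1/p)$ are both absorbed into $O(1/p)$. Numerically, $\frac{\ln 2}{\sqrt 2 - 1}\approx \frac{0.6931}{0.4142}\approx 1.673$, which matches the claimed constant. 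Then I would invoke Lemma~\ref{lem:conflow} to conclude that the resulting algorithm is a valid deterministic algorithm for $\textsc{Flow}$ — this requires that $p$ is a power of two, which is exactly the hypothesis of the corollary (and is needed so that the super-partition merging sweeps of Definition are well-defined and the scheme is periodic).

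Since essentially all the heavy lifting has been done in Lemma~\ref{lem:conflow}, Lemma~\ref{lem:avg}, Lemma~\ref{O:basicmaxavg}, Lemma~\ref{L:maxavgratio}, and Theorem~\ref{thm:scheme}, there is no real obstacle here: the corollary is a one-line consequence. If anything, the only point deserving a sentence of care is the bookkeeping of the $O(1/p)$ terms — one should make sure the $O(1/p)$ error from Theorem~\ref{thm:scheme} and the $\frac{4}{p}$ factor do not interact to produce something worse than $O(1/p)$, but as shown above they do not. I would therefore write the proof of the corollary as: "Immediate from Lemma~\ref{lem:conflow} and Theorem~\ref{thm:scheme}, using $\frac{p+4}{p} = 1 + O(1/p)$."
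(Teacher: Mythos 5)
Your proposal is correct and is exactly the argument the paper intends: the corollary is an immediate combination of Lemma~\ref{lem:conflow} and Theorem~\ref{thm:scheme}, with the factor $\frac{p+4}{p}=1+O(1/p)$ absorbed into the $O(1/p)$ error term. The paper gives no separate proof for this corollary, so your one-line derivation is precisely what is expected.
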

Last, similar to Corollary~\ref{cor:lowarbitraryweights}, via rounding, an algorithm for $\textsc{Flow}$ can be used for $\textsc{Part}$ while incurring an 
error term that depends on the weights of the sequence.
\begin{corollary}
 There is a deterministic preemptive online algorithm for $\textsc{Part}$ with competitive ratio $1.68 + O(1/p) + O(max/S)$, 
 where $S$ is the total weight of the sequence and $max$ is the maximum element of the sequence.
\end{corollary}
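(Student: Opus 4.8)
The plan is to obtain the $\textsc{Part}$ algorithm by a black-box reduction to $\textsc{Flow}$, exactly mirroring the way $\cA_x$ is extended from unit-weight to weighted sequences in the $p=2$ case and the passage from Theorem~\ref{T:identicalupper} to Corollary~\ref{cor:lowarbitraryweights}. Combining Lemma~\ref{lem:conflow} with Theorem~\ref{thm:scheme}, for $p$ a power of two there is a deterministic preemptive online algorithm $\mathcal{B}$ for $\textsc{Flow}$ whose bottleneck value at every stopping time $t_{max}$ is at most $\bigl(\tfrac{\ln 2}{\sqrt 2-1}+O(1/p)\bigr)\cdot\tfrac{t_{max}}{p}\le(1.68+O(1/p))\cdot\tfrac{t_{max}}{p}$, since the optimal $\textsc{Flow}$ bottleneck is exactly $t_{max}/p$. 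Given a weighted instance $X=w_1,\dots,w_n$ with $S=\sum_i w_i$ and $max=\max_i w_i$, we run $\mathcal{B}$, feeding it $w_i$ units of flow when $w_i$ is read. Whenever $\mathcal{B}$ places (or re-places) one of its separators at a continuous position that falls strictly inside the block of flow contributed by some $w_i$, we put the corresponding $\textsc{Part}$ separator at the element boundary immediately after $w_i$; every removal performed by $\mathcal{B}$ is mirrored in the $\textsc{Part}$ solution.

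First I would verify that this is a legal preemptive online algorithm: the number of in-place separators always equals that of $\mathcal{B}$ and hence never exceeds $p-1$; a new separator is only inserted at the boundary after the current request, i.e.\ at the current position; and no separator is moved into the past. Then comes the error analysis, for which it suffices to bound the bottleneck value at the moment the stream ends, i.e.\ at $t_{max}=S$. Let $0=y_0<y_1<\dots<y_{p-1}<y_p=S$ be the continuous separator positions of $\mathcal{B}$ at that moment (the partial sums of its current configuration, cf.\ the proof of Lemma~\ref{lem:conflow}); by the competitiveness of $\mathcal{B}$ we have $y_k-y_{k-1}\le(1.68+O(1/p))\,S/p$ for each $k$. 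Each $\textsc{Part}$ separator sits at an element boundary within distance at most $max$ of the corresponding $y_k$, and this holds independently for each $k$, so the rounding perturbations do not accumulate along the sequence; hence the $k$-th $\textsc{Part}$ partition has weight $(y_k-y_{k-1})+e_k$ with $|e_k|\le 2\,max$, and the $\textsc{Part}$ bottleneck value is at most $(1.68+O(1/p))\,S/p+2\,max$.

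To finish, divide by the optimal $\textsc{Part}$ bottleneck $B^{*}$: by pigeonhole $B^{*}\ge S/p$, and since no element can be split $B^{*}\ge max$, so the competitive ratio is at most $(1.68+O(1/p))+2\,max/B^{*}=1.68+O(1/p)+O(max/S)$, and when $max$ is not small relative to $S/p$ the statement is in any case dominated by the $2$-competitive Algorithm~\ref{alg:preemptive-two-approx} (which also keeps the rounded separators in their relative order). The one genuinely delicate point — and where I would spend the most care — is the claim that rounding errors do not propagate: because $\mathcal{B}$'s scheme is self-referential in terms of injected \emph{flow volume} (and restarts in a rescaled copy once it reaches its final configuration, cf.\ Lemma~\ref{lem:conflow}), not in terms of separator counts or positions, displacing one separator does not affect where later ones are placed. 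I would make this precise by carrying the invariant that every in-place $\textsc{Part}$ separator lies within $max$ of $\mathcal{B}$'s current counterpart, and checking that each merge and each fill-up operation of $\mathcal{B}$ preserves it.
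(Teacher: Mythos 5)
Your reduction is exactly the rounding argument the paper has in mind (the paper itself gives only a one-line pointer to it, by analogy with Corollary~\ref{cor:lowarbitraryweights}), and your simulation, the legality check, and the observation that rounding cannot feed back into the \textsc{Flow} algorithm's decisions (since its state depends only on injected volume) are all sound. The one quibble is the very last rewriting: since the optimal \textsc{Part} bottleneck is only guaranteed to be at least $S/p$ (not $\Omega(S)$), the additive term $2\,max/B^{*}$ is $O(p\cdot max/S)$ rather than $O(max/S)$ --- an imprecision already present in the paper's own statement of the corollary, and harmless given your fallback to the $2$-competitive algorithm when $max$ is comparable to $S/p$.
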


\begin{figure}
 \begin{center} 
 \includegraphics[bb=60 60 510 360, height=6cm]{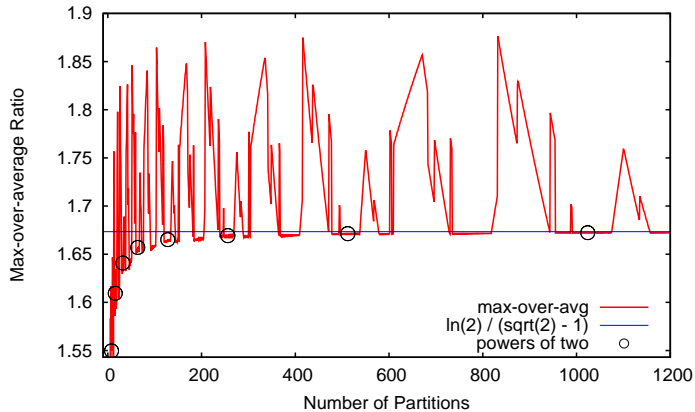}
 \caption{Extension of our scheme to arbitrary $p$. The encircled points correspond to the max-over-average ratio
 of powers of two. The blue curve is the bound proved in our analysis. \label{fig:plot}}
 \end{center} 
\end{figure}

\mypara{Extending Our Scheme to Arbitrary Number of Partitions}
Our scheme can equally be applied to arbitrary values of $p$, and we experimentally verified that max-over-average ratios
better than two are obtained. Conducting a rigorous analysis for values of $p$ that are not a power of two poses further 
complications and proves challenging. A case in point are the maxima of super-partitions, which could not be 
computed according to Lemma~\ref{O:basicmaxavg}, and many rounding problems.

In Figure~\ref{fig:plot}, we plot the max-over-average ratio of our scheme against the number of partitions $p$.
The plot shows that the power of two cases are the ones that give the best competitive ratio. The chaotic behavior
of the plot also indicates that many mechanisms within our scheme are at work at the same time that need to be bounded
appropriately. We thus pose as an open problem: Can we obtain an analyzable partitioning scheme for arbitrary
$p$ with max-over-average ratio strictly better than $2$?

\subsubsection{Lower Bound}
Consider an arbitrary current partitioning in \textsc{Flow}, and let
$x_{\text{max}}$ denote the current bottleneck value. For an appropriately chosen value $\alpha \in (1,2)$, 
partition the current partitions into two categories $L,H$, so that partitions in $L$ are of weight at most 
$\frac{x_{\text{max}}}{\alpha}$, and $H$ are all other partitions. Our argument is based on the observation
that if too many partition are in $L$, then the current competitive ratio cannot be good. However, if only
very few partitions are in $L$, then after consuming only very little flow from the input, two partitions in $H$
need to be merged with each other, which results in a large increase in the bottleneck value. This idea is
formalized in the following theorem:

\begin{theorem} \label{thm:lbflow}
Every deterministic preemptive online algorithm for \textsc{Flow} has an approximation factor of at least 
$\frac{5}{2}-\sqrt{2}\approx 1.086$. 
\end{theorem}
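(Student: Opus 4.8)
The plan is to instantiate the informal argument sketched just before the theorem: track an adversarially-designed flow, pick the threshold parameter $\alpha$, and show that no matter how the algorithm distributes its partitions across the "light" class $L$ and the "heavy" class $H$, it is forced into a bad ratio either immediately or very shortly afterwards. I would set up the argument at a carefully chosen time $t$ (say when the total injected flow is some value $W$), let $x_{\text{max}}$ be the algorithm's current bottleneck, and write $\ell = |L|$, $h = |H| = p - \ell$. The adversary will consider two continuation strategies for the flow and argue that at least one of them is damaging.

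The first case is when $L$ is large. Since every partition in $L$ has weight at most $x_{\text{max}}/\alpha$ and every partition in $H$ has weight at most $x_{\text{max}}$, the total weight satisfies $W \le \ell \cdot x_{\text{max}}/\alpha + h \cdot x_{\text{max}}$. If the adversary stops the flow now, $\mathrm{opt} \ge W/p$, so the competitive ratio is at least $x_{\text{max}}/\mathrm{opt} \ge p x_{\text{max}} / W \ge p / (\ell/\alpha + h) = p/(p - \ell(1 - 1/\alpha))$, which grows as $\ell$ grows. So if $\ell$ is above some threshold $\ell_0$, the ratio is already at least $\frac{5}{2} - \sqrt{2}$ and we are done.

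The second case is $\ell < \ell_0$, i.e. $h = p - \ell$ is close to $p$. Now the adversary keeps injecting flow. Each unit of new flow extends exactly one partition (the last one); to keep a bottleneck anywhere near $x_{\text{max}}$, within at most $\ell_0 + 1$ "fresh" partitions' worth of capacity — i.e. after roughly $O(\ell_0 \cdot x_{\text{max}})$ additional flow, but more precisely after the right small amount — the algorithm is forced to merge two partitions, and because almost all partitions lie in $H$ (weight $> x_{\text{max}}/\alpha$), with only $\ell < \ell_0$ exceptions it must at some point merge two partitions each of weight more than $x_{\text{max}}/\alpha$, producing a new bottleneck exceeding $2x_{\text{max}}/\alpha$. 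Meanwhile the adversary stops the flow right after this forced merge, so the new $\mathrm{opt}$ has barely increased over $x_{\text{max}}$ (quantitatively, $\mathrm{opt} \le x_{\text{max}} + (\text{small additional flow})/p$, which for large $p$ is $x_{\text{max}}(1 + o(1))$, or one can keep the exact bound), giving a ratio of essentially $2/\alpha$. Balancing the two cases — the $p/(p - \ell_0(1-1/\alpha))$ bound from Case 1 against the $2/\alpha$ bound from Case 2, and optimizing over $\alpha \in (1,2)$ and the threshold $\ell_0$ — should yield $\alpha$ with $2/\alpha = \frac{5}{2} - \sqrt{2}$, i.e. $\alpha = \frac{2}{5/2 - \sqrt 2} = \frac{2(5/2+\sqrt2)}{(5/2)^2 - 2} = \frac{5 + 2\sqrt2}{17/4}$, and one checks this lies in $(1,2)$.

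The main obstacle I anticipate is the bookkeeping in Case 2: making "the algorithm is forced to merge two heavy partitions after only a little more flow" fully rigorous. One must argue that a preemptive algorithm starting from a configuration with $h$ heavy partitions cannot indefinitely absorb new flow into the single growing tail partition and cannot use its $\ell$ light partitions as "sacrificial" merge targets forever — after at most $\ell$ merges involving a light partition it runs out of them, at which point the next merge (forced once the tail partition would otherwise exceed $x_{\text{max}}$, or once $p$ partitions are all nonempty and a new one is needed) is between two partitions of weight $> x_{\text{max}}/\alpha$. Getting the additive flow amount in this phase small enough that $\mathrm{opt}$ stays within the claimed factor of $x_{\text{max}}$ — and in particular ensuring the $O(1/p)$-type slack is harmless, or showing the statement holds for all $p$ — is where the care is needed; the two extremal cases and the algebra optimizing $\alpha$ are then routine.
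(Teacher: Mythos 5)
Your overall strategy is essentially the paper's: split the partitions into a light class $L$ and a heavy class $H$ at threshold $x_{\max}/\alpha$, observe that a large $L$ already forces a bad ratio, and otherwise force a bottleneck of $\frac{2x_{\max}}{\alpha}$ after a little more flow. (The paper phrases this not as a case distinction but as one chain of inequalities: $f$-competitiveness at the first time point yields $|L|\le p\,\frac{1-1/f}{1-1/\alpha}$, which is then fed into the bound on the average at the second time point.) Two things in your write-up need repair. First, the step you yourself flag as the main obstacle is genuinely missing, and the dynamic argument you sketch (``after at most $\ell$ merges involving a light partition it runs out of them'') is not the right tool --- for instance, the algorithm can avoid ever merging two heavy parts by letting the tail partition absorb all the new flow, and merging two light parts may or may not consume a ``sacrificial'' light part depending on whether the result is still light. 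The clean argument is static, not dynamic: consider the sequence $a_1\cdots a_p$ followed by $(|L|+1)\frac{2x_{\max}}{\alpha}$ units of new flow and show, via the \textsc{Probe} lemma (Lemma~\ref{lem:probe}, run right to left), that \emph{every} contiguous partitioning of it into $p$ parts has a part of weight at least $\frac{2x_{\max}}{\alpha}$: any part below this threshold contains at most one heavy element, the new flow alone occupies $|L|+1$ parts, and the $p-|L|$ heavy elements then require $p-|L|$ further parts, totalling $p+1$. Since the algorithm's configuration at the later time is such a partitioning (old separators can only be deleted, never reinserted in the past), its bottleneck is forced up regardless of how it behaves.

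Second, your endgame does not close as stated. The Case~2 ratio is not ``essentially $2/\alpha$'': for the two cases to balance, the threshold $\ell_0$ must be a constant fraction $\beta$ of $p$ (about $0.27p$ at the optimum), so the extra flow $(\ell_0+1)\frac{2x_{\max}}{\alpha}$ is $\Theta(p\,x_{\max})$ and the optimum grows by a constant factor, not by $o(1)$; consequently solving $2/\alpha=\frac52-\sqrt2$ to get $\alpha\approx1.84$ is wrong (with that $\alpha$ the exact Case~2 bound forces $\beta=0$ and Case~1 becomes vacuous). Keeping the exact bounds, Case~2 gives at least $\frac{2/\alpha}{1+\beta(3/\alpha-1)}$ and Case~1 gives $\frac{1}{1-\beta(1-1/\alpha)}$; equating the two and simplifying yields the common value $\frac52-\frac{\alpha}{2}-\frac{1}{\alpha}$, which is maximized at $\alpha=\sqrt2$ with value $\frac52-\sqrt2$ --- exactly the paper's constant. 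So your decomposition does recover the theorem, but only after both the forced-bottleneck lemma and the exact bookkeeping are put in place.
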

\begin{proof}
Let $\textsc{Alg}$ be a deterministic $f$-approximation algorithm, for some $f > 1$. Suppose that the input is a sequence of unit weight requests. 
Consider a point in time when the total weight $x_{\text{max}}$ of a heaviest partition is large enough so that rounding effects do 
not matter. Denote the weights of the partitions by $a_1,\dots,a_p$. Since $\textsc{Alg}$ is an $f$-approximation algorithm, 
\begin{equation}\label{E:lbp1}
x_{\text{max}} \le \frac{f}{p}\sum_{i=1}^p{a_i}.
\end{equation}
Let $\alpha\in (1,2)$ be a parameter whose value will be optimized later. Partition the set of indices $[p]$ 
into sets $L$ and $H$ such that $i \in L$ if $a_i < \frac{x_{\text{max}}}{\alpha}$, and $i \in H$ otherwise. 
Using this definition in Inequality~\ref{E:lbp1}, we obtain
\begin{eqnarray}
 \nonumber \frac{px_{\text{max}}}{f} & \le & |L| \cdot \frac{x_{\text{max}}}{\alpha} + (p-|L|)\cdot x_{\text{max}} \, \quad \quad \Leftrightarrow \quad \quad
|L| \le p\cdot \frac{1-1/f}{1-1/\alpha}. \label{E:lbp2}
\end{eqnarray}
We will prove now that after processing the next $(|L| + 1) \frac{2x_{\text{max}}}{\alpha}$ requests from the input sequence, 
there is at least one partition of weight at least $\frac{2x_{\text{max}}}{\alpha}$. Indeed, consider the sequence:
$$X = a_1 a_2 \dots a_p \underbrace{1 \dots 1}_{(|L|+1) \frac{2x_{\text{max}}}{\alpha}} \,$$
and, for the sake of a contradiction, suppose that there is a partitioning of this sequence into 
$p$ parts with maximum partition weight smaller than $\frac{2x_{\text{max}}}{\alpha}$. By Lemma~\ref{lem:probe}, 
such a partitioning would be found by a run of \textsc{Probe}$(\frac{2x_{\text{max}}}{\alpha} - 1)$ on $X$
read from right to left. Notice first that every resulting partition contains at most one element from $H$ (call 
such an element heavy), as the sum of two heavy elements already exceeds the bound $\frac{2x_{\text{max}}}{\alpha} - 1$. 
The first $L+1$ partitions created by \textsc{Probe} each consist of $1$s only. Since there are 
$|H| = p - |L|$ heavy elements, this implies that \textsc{Probe} creates another $|H|$ partitions 
each containing one heavy element, which totals to $p+1$ partitions, a contradiction.

Thus, after processing the next $(|L| + 1) \frac{2x_{\text{max}}}{\alpha}$ ones from the input sequence,
the average weight of a partition is bounded by:
\begin{eqnarray*}
 A & \le & \frac{|H| \cdot x_{\text{max}} + |L|\cdot \frac{x_{\text{max}}}{\alpha} + (|L|+ 1) \cdot \frac{2x_{\text{max}}}{\alpha}}{p}
 = x_{\text{max}} \cdot \frac{(p-|L|) + |L| \cdot \frac{3}{\alpha} + \frac{2}{\alpha} }{p}\\
 & = & x_{\text{max}} \left(1+|L| \frac{\frac{3}{\alpha}-1}{p} +  \frac{2}{p \alpha} \right)
\le   x_{\text{max}}\left(1+p\cdot \frac{1-\frac{1}{f}}{1-\frac{1}{\alpha}}\cdot\frac{\frac{3}{\alpha}-1}{p} +  \frac{2}{p \alpha}\right)\\
 & = & x_{\text{max}} \left( \frac{2f +\alpha -3}{f(\alpha-1)} + \frac{2}{p \alpha} \right),
\end{eqnarray*}
where we applied Inequality~\ref{E:lbp2}, and as proved above, the maximum value then is at least
$x'_{\text{max}}\ge \frac{2x_{\text{max}}}{\alpha}$. Since \textsc{Alg} is a $f$-approximation algorithm, we have $x'_{\text{max}} \le f \cdot A$, so
$$
\frac{2x_{\text{max}}}{\alpha} \le f\cdot x_{\text{max}} \left( \frac{2f +\alpha -3}{f(\alpha-1)} + \Order(1/p) \right),
$$
which implies that $2f\ge 5-\left(\alpha+\frac{2}{\alpha}\right) - \Order (1/p)$. The function $g(\alpha)=\alpha+2/\alpha$ achieves its minimum at $\alpha=\sqrt{2}$, which implies that $f\ge \frac{5}{2}-\sqrt{2} - \Order (1/p)$, giving the result.
\end{proof}
This lower bound argument shows that competitive ratios of at least $1.086$ occur repeatedly while processing the input sequence. 
Recall that the lower bound for \textsc{Part} given in Theorem~\ref{thm:lbpart} only holds for two specific input lengths.





\bibliography{kt17}

\newpage

\appendix

\section{Barely-random Algorithm for Unit Requests and $p=2$} \label{app:barelyrandom}
We show that using a single random bit gives a significant improvement over deterministic algorithms. The algorithm 
$\cA_0$ places the separator at positions $2^i$ for odd or even $i$ depending on an initial random choice. 

\begin{algorithm}[H]
\caption{$\cA_0$}
 \begin{algorithmic}
   \STATE $i\gets 0\text{ or }1$ with probability $1/2$ each
	\FOR {each request $j=1 \dots n$}
	\IF {$j = 2^{i}$}
	\STATE move the separator to the current position 
	\STATE $i\rightarrow i+2$
	\ENDIF
	\ENDFOR
 \end{algorithmic}
\end{algorithm}
Denote the competitive ratio of $\cA_0$ on sequences of length $n$ by $R_{\cA_0}^n$. Then, we obtain the following theorem:
\setcounter{theorem}{18}
\begin{theorem}
$\Exp [R_{\cA_0}^n] = 1.5.$
\end{theorem}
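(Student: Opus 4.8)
To prove the claim --- that is, that the competitive ratio $\sup_n \Exp[R^n_{\cA_0}]$ of $\cA_0$ equals $3/2$ --- the plan is to analyse $\cA_0$ head on, since its only randomness is the initial bit. Depending on that bit, the separator is moved successively to the positions $2^0,2^2,2^4,\dots$ (the even powers of two) or to $2^1,2^3,2^5,\dots$ (the odd powers of two), each with probability $1/2$, and jointly the two options enumerate every power of two exactly once. First I would fix an input of length $n\ge 2$, put $k=\lfloor\log_2 n\rfloor$ so that $2^k\le n<2^{k+1}$, and observe that if $k$ is even the even-powers branch ends with its separator at $2^k$ and the odd-powers branch ends at $2^{k-1}$, while if $k$ is odd the roles are swapped; in either case the two equiprobable branches leave the separator at $2^k$ and at $2^{k-1}$, respectively. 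The boundary points to verify are that both branches actually reach those positions (true since $2^{k-1}\le 2^k\le n$) and that neither places a separator afterwards (true since the next power of two is $2^{k+1}>n$); the degenerate case $n=1$ is handled separately and trivially gives ratio $1$.

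The second step reads off the two bottleneck values. With the separator at $2^k$ the blocks weigh $2^k$ and $n-2^k<2^k$, so the bottleneck is $2^k$; with the separator at $2^{k-1}$ the blocks weigh $2^{k-1}$ and $n-2^{k-1}\ge 2^{k-1}$ (because $n\ge 2^k$), so the bottleneck is $n-2^{k-1}$. Since the optimum bottleneck on the all-ones sequence of length $n$ is $\lceil n/2\rceil$, averaging over the bit yields the exact expression
\[
\Exp[R^n_{\cA_0}] \;=\; \frac{1}{2\lceil n/2\rceil}\Bigl(2^k+\bigl(n-2^{k-1}\bigr)\Bigr) \;=\; \frac{n+2^{k-1}}{2\lceil n/2\rceil}.
\]

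The final step is a two-line parity split. If $n$ is even the expression equals $1+2^{k-1}/n$, which is at most $1+2^{k-1}/2^k=3/2$, with equality exactly when $n=2^k$; if $n$ is odd it equals $1+(2^{k-1}-1)/(n+1)$, and using $n+1\ge 2^k+2$ this is strictly below $3/2$. Hence $\Exp[R^n_{\cA_0}]\le 3/2$ for every $n$, and the value $3/2$ is attained precisely when $n$ is a power of two, so the competitive ratio of $\cA_0$ is exactly $3/2$. The argument has no genuine obstacle; the only parts needing care are confirming the location of the final separator for both residues of $k$ and clearing the small cases $n=1$ and $n\in\{2,3\}$, where the generic formula needs a quick sanity check.
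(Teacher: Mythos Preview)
Your proof is correct and follows essentially the same approach as the paper: both identify the two equiprobable final separator positions as consecutive powers of two, compute the corresponding bottlenecks, and average. Your parameterisation $k=\lfloor\log_2 n\rfloor$ matches the paper's $n=2\cdot 2^{i+\alpha}$ via $k=i+1$, and your explicit parity split for handling $\lceil n/2\rceil$ is a slightly more careful version of the paper's inequality $\Exp[R_{\cA_0}^n]\le 1+1/2^{\alpha+1}$; you also make the attainment of $3/2$ at powers of two explicit, which the paper leaves implicit.
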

\begin{proof}
Let $\alpha \in [0, 1)$ and $i \in \N$ be such that the sequence length is $n=2\cdot 2^{i+\alpha}$.
The bottleneck value of an optimal partition is at least 
$\lceil\frac{n}{2}\rceil=\lceil 2^{i+\alpha}\rceil$. 
Since, $2^{i+2}>n$, the algorithm puts the separator either at position $2^{i}$ or $2^{i+1}$, each with 
probability $1/2$. In the first case, the bottleneck value is $n-2^{i}$, while in the second case it is $2^{i+1}$. 
Thus,
$$
E[R_{\cA_0}^n]=\frac{1}{2}\cdot \frac{n-2^i + 2^{i+1}}{\lceil n/2 \rceil}\le 1+\frac{1}{2^{\alpha+1}} \le 1.5 \ .
$$
\end{proof}

\section{Complete Scheme for $p=4$} \label{app:completeschemefour}
$$
 \overbrace{\bm{x^{\frac{1}{4}}, x^{\frac{2}{4}}, x^{\frac{3}{4}}, x^{\frac{4}{4}}}}^{S_1} \quad | \quad \overbrace{x^{\frac{2}{4}}, x^{\frac{3}{4}}, x^{\frac{4}{4}}, x^{\frac{5}{4}}}^{S_2} \quad | \quad \overbrace{x^{\frac{3}{4}}, x^{\frac{4}{4}}, x^{\frac{5}{4}}, x^{\frac{6}{4}}}^{S_3} \quad | \quad \overbrace{x^{\frac{4}{4}}, x^{\frac{5}{4}}, x^{\frac{6}{4}}, x^{\frac{7}{4}}}^{S_4}
$$
$$
 \bm{x^{\frac{1}{4}}(1+x^{\frac{1}{4}}), x^{\frac{3}{4}}, x^{\frac{4}{4}}} \quad | \quad \bm{x^{\frac{2}{4}}}, x^{\frac{3}{4}}, x^{\frac{4}{4}}, x^{\frac{5}{4}} \quad | \quad x^{\frac{3}{4}}, x^{\frac{4}{4}}, x^{\frac{5}{4}}, x^{\frac{6}{4}} \quad | \quad x^{\frac{4}{4}}, x^{\frac{5}{4}}, x^{\frac{6}{4}}, x^{\frac{7}{4}}
$$
$$
 \bm{x^{\frac{1}{4}}(1+x^{\frac{1}{4}}), x^{\frac{3}{4}}, x^{\frac{4}{4}}} \quad | \quad \bm{x^{\frac{2}{4}} (1 + x^{\frac{1}{4}})}, x^{\frac{4}{4}}, x^{\frac{5}{4}} \quad | \quad x^{\frac{3}{4}}, x^{\frac{4}{4}}, x^{\frac{5}{4}}, x^{\frac{6}{4}} \quad | \quad x^{\frac{4}{4}}, x^{\frac{5}{4}}, x^{\frac{6}{4}}, x^{\frac{7}{4}}
$$
$$
 \bm{x^{\frac{1}{4}}(1+x^{\frac{1}{4}}), x^{\frac{3}{4}}, x^{\frac{4}{4}}} \quad | \quad \bm{x^{\frac{2}{4}} (1 + x^{\frac{1}{4}})}, x^{\frac{4}{4}}, x^{\frac{5}{4}} \quad | \quad x^{\frac{3}{4}}(1+ x^{\frac{1}{4}}), x^{\frac{5}{4}}, x^{\frac{6}{4}} \quad | \quad x^{\frac{4}{4}}, x^{\frac{5}{4}}, x^{\frac{6}{4}}, x^{\frac{7}{4}}
$$
$$
 \bm{x^{\frac{1}{4}}(1+x^{\frac{1}{4}}), x^{\frac{3}{4}}, x^{\frac{4}{4}}} \quad | \quad \bm{x^{\frac{2}{4}} (1 + x^{\frac{1}{4}})}, x^{\frac{4}{4}}, x^{\frac{5}{4}} \quad | \quad x^{\frac{3}{4}}(1+ x^{\frac{1}{4}}), x^{\frac{5}{4}}, x^{\frac{6}{4}} \quad | \quad x^{\frac{4}{4}}(1 + x^{\frac{1}{4}}), x^{\frac{6}{4}}, x^{\frac{7}{4}}
$$
$$
 \bm{x^{\frac{1}{4}}(1+x^{\frac{1}{4}}), x^{\frac{3}{4}}(1+x^{\frac{1}{4}})} \quad | \quad \bm{x^{\frac{2}{4}} (1 + x^{\frac{1}{4}}), x^{\frac{4}{4}}}, x^{\frac{5}{4}} \quad | \quad x^{\frac{3}{4}}(1+ x^{\frac{1}{4}}), x^{\frac{5}{4}}, x^{\frac{6}{4}} \quad | \quad x^{\frac{4}{4}}(1 + x^{\frac{1}{4}}), x^{\frac{6}{4}}, x^{\frac{7}{4}}
$$
$$
 \bm{x^{\frac{1}{4}}(1+x^{\frac{1}{4}}), x^{\frac{3}{4}}(1+x^{\frac{1}{4}})} \quad | \quad \bm{x^{\frac{2}{4}} (1 + x^{\frac{1}{4}}), x^{\frac{4}{4}}(1+ x^{\frac{1}{4}})} \quad | \quad x^{\frac{3}{4}}(1+ x^{\frac{1}{4}}), x^{\frac{5}{4}}, x^{\frac{6}{4}} \quad | \quad x^{\frac{4}{4}}(1 + x^{\frac{1}{4}}), x^{\frac{6}{4}}, x^{\frac{7}{4}}
$$
$$
 \bm{x^{\frac{1}{4}}(1+x^{\frac{1}{4}}), x^{\frac{3}{4}}(1+x^{\frac{1}{4}})} \quad | \quad \bm{x^{\frac{2}{4}} (1 + x^{\frac{1}{4}}), x^{\frac{4}{4}}(1+ x^{\frac{1}{4}})} \quad | \quad x^{\frac{3}{4}}(1+ x^{\frac{1}{4}}), x^{\frac{5}{4}}, x^{\frac{6}{4}} \quad | \quad x^{\frac{4}{4}}(1 + x^{\frac{1}{4}}), x^{\frac{6}{4}}, x^{\frac{7}{4}}
$$
$$
 \bm{x^{\frac{1}{4}}(1+x^{\frac{1}{4}}), x^{\frac{3}{4}}(1+x^{\frac{1}{4}})} \quad | \quad \bm{x^{\frac{2}{4}} (1 + x^{\frac{1}{4}}), x^{\frac{4}{4}}(1+ x^{\frac{1}{4}})} \quad | \quad x^{\frac{3}{4}}(1+ x^{\frac{1}{4}}), x^{\frac{5}{4}}(1+ x^{\frac{1}{4}}) \quad | \quad x^{\frac{4}{4}}(1 + x^{\frac{1}{4}}), x^{\frac{6}{4}}, x^{\frac{7}{4}}
$$
$$
 \bm{x^{\frac{1}{4}}(1+x^{\frac{1}{4}}), x^{\frac{3}{4}}(1+x^{\frac{1}{4}})} \quad | \quad \bm{x^{\frac{2}{4}} (1 + x^{\frac{1}{4}}), x^{\frac{4}{4}}(1+ x^{\frac{1}{4}})} \quad | \quad x^{\frac{3}{4}}(1+ x^{\frac{1}{4}}), x^{\frac{5}{4}}(1+ x^{\frac{1}{4}}) \quad | \quad x^{\frac{4}{4}}(1 + x^{\frac{1}{4}}), x^{\frac{6}{4}}(1+x^{\frac{1}{4}})
$$
$$
 \bm{x^{\frac{1}{4}}(1+x^{\frac{1}{4}})(1+x^{\frac{2}{4}}) } \quad | \quad \bm{x^{\frac{2}{4}} (1 + x^{\frac{1}{4}}), x^{\frac{4}{4}}(1+ x^{\frac{1}{4}})} \quad | \quad \bm{x^{\frac{3}{4}}(1+ x^{\frac{1}{4}})}, x^{\frac{5}{4}}(1+ x^{\frac{1}{4}}) \quad | \quad x^{\frac{4}{4}}(1 + x^{\frac{1}{4}}), x^{\frac{6}{4}}(1+x^{\frac{1}{4}})
$$
$$
 \bm{x^{\frac{1}{4}}(1+x^{\frac{1}{4}})(1+x^{\frac{2}{4}}) } \quad | \quad \bm{x^{\frac{2}{4}} (1 + x^{\frac{1}{4}})(1 + x^{\frac{2}{4}})} \quad | \quad \bm{x^{\frac{3}{4}}(1+ x^{\frac{1}{4}}), x^{\frac{5}{4}}(1+ x^{\frac{1}{4}})} \quad | \quad x^{\frac{4}{4}}(1 + x^{\frac{1}{4}}), x^{\frac{6}{4}}(1+x^{\frac{1}{4}})
$$
$$
 \bm{x^{\frac{1}{4}}(1+x^{\frac{1}{4}})(1+x^{\frac{2}{4}}) } \quad | \quad \bm{x^{\frac{2}{4}} (1 + x^{\frac{1}{4}})(1 + x^{\frac{2}{4}})} \quad | \quad \bm{x^{\frac{3}{4}}(1+ x^{\frac{1}{4}})(1+x^{\frac{2}{4}})} \quad | \quad \bm{x^{\frac{4}{4}}(1 + x^{\frac{1}{4}})}, x^{\frac{6}{4}}(1+x^{\frac{1}{4}})
$$
$$
 \bm{x^{\frac{1}{4}}(1+x^{\frac{1}{4}})(1+x^{\frac{2}{4}}) } \quad | \quad \bm{x^{\frac{2}{4}} (1 + x^{\frac{1}{4}})(1 + x^{\frac{2}{4}})} \quad | \quad \bm{x^{\frac{3}{4}}(1+ x^{\frac{1}{4}})(1+x^{\frac{2}{4}})} \quad | \quad \bm{x^{\frac{4}{4}}(1 + x^{\frac{1}{4}})(1 + x^{\frac{2}{4}})}
$$

\end{document}